\newtheorem{theorem}{Theorem}
\newtheorem{lemma}[theorem]{Lemma}
\newtheorem{corollary}[theorem]{Corollary}
\newtheorem{claim}[theorem]{Claim}
\newtheorem{proposition}[theorem]{Proposition}
\newtheorem*{question}{Question}
\newtheorem{fact}[theorem]{Fact}
\newtheoremstyle{restate}{}{}{\itshape}{}{\bfseries}{~(restated).}{.5em}{\thmnote{#3}}
\theoremstyle{restate}
\newtheorem*{restate}{}
\newcommand{\E}{\mathop{\mathbb{E}}}
\newcommand{\HammingCube}[1]{\{0,1\}^{#1}}
\DeclareMathOperator*{\argmin}{arg\,min}
\newcommand{\mihai}{Mihai P\v{a}tra\c{s}cu}
\newcommand{\pub}{\mathrm{pub}}
\newcommand{\probe}{\mathtt{Probe}}
\newcommand{\pred}{\mathtt{pred}}
\newcommand{\com}{\mathrm{ext}}
\newcommand{\poly}{\mathrm{poly}}
\newcommand{\rmq}{\texttt{RMQ}}
\newcommand{\cQ}{\mathcal{Q}}
\newcommand{\predz}{\texttt{pred-z}}
\newcommand{\pt}{\mathrm{pt}}
\newcommand{\cS}{\mathcal{S}}
\newcommand{\cE}{\mathcal{E}}
\newcommand{\good}{\mathrm{good}}
\newcommand{\foot}{\mathtt{Foot}}
\def\mainfile{}
\begin{document}
\title{Lower Bound for Succinct Range Minimum Query}
\author{Mingmou Liu\thanks{Nanjing University. \texttt{liu.mingmou@smail.nju.edu.cn}. Supported by National Key R\&D Program of China 2018YFB1003202 and the National Science Foundation of China under Grant Nos. 61722207 and 61672275. Part of the research was done when Mingmou Liu was visiting the Harvard University.}\and Huacheng Yu\thanks{Princeton University. \texttt{yuhch123@gmail.com}.}}
\date{\today}
\maketitle

\setcounter{page}{0}
\maketitle
\thispagestyle{empty}

\begin{abstract}
	Given an integer array $A[1..n]$, the Range Minimum Query problem (RMQ) asks to preprocess $A$ into a data structure, supporting RMQ queries: given $a,b\in [1,n]$, return the index $i\in[a,b]$ that minimizes $A[i]$, i.e., $\argmin_{i\in[a,b]} A[i]$.
	This problem has a classic solution using $O(n)$ space and $O(1)$ query time by Gabow, Bentley, Tarjan~\cite{GBT84} and Harel, Tarjan~\cite{HT84}.
	The best known data structure by Fischer, Heun~\cite{FH11} and Navarro, Sadakane~\cite{NS14} uses $2n+n/(\frac{\log n}{t})^t+\tilde{O}(n^{3/4})$ bits and answers queries in $O(t)$ time, assuming the word-size is $w=\Theta(\log n)$.
	In particular, it uses $2n+n/\poly\log n$ bits of space when the query time is a constant.

	In this paper, we prove the first lower bound for this problem, showing that $2n+n/\poly\log n$ space is necessary for constant query time.
	In general, we show that if the data structure has query time $O(t)$, then it must use at least $2n+n/(\log n)^{\tilde{O}(t^2)}$ space, in the cell-probe model with word-size $w=\Theta(\log n)$.
\end{abstract}

\newpage

\ifx\mainfile\undefined
\documentclass[11pt]{article}

\begin{document}
\fi

\section{Introduction}
Given an array $A[1..n]$ of integers, the Range Minimum Query (RMQ) problem asks to preprocess $A$ into a data structure, supporting 
\begin{itemize}
	\item \rmq($a$, $b$): return $\argmin_{i\in[a,b]} A[i]$ (if multiple entries have the smallest $A[i]$, return the one with smallest $i$).
\end{itemize}
RMQ data structures have numerous applications in computer science, for instance, in text processing~\cite{ALV92,Mut02,FHK06,Sada07b,Sada07,VM07,CPS08,FMN09,HSV09,CIKRTW12}, graph problems~\cite{RV88,BV93,GT04,BFPSS05,LC08} and other areas of computer science~\cite{Sax09,SK03,CC07}.

The RMQ problem has a classic ``textbook'' solution using $O(n)$ space and answering each \rmq{} query in constant time, due to Gabow, Bentley, Tarjan~\cite{GBT84} and Harel, Tarjan~\cite{HT84}.
To build this data structure, one first constructs the \emph{Cartesian tree} of the input array $A$.
The Cartesian tree is an $n$-node binary tree, where the nodes correspond to the entries of $A$.
The root $A[i]$ has the minimum value in $A$, and its left and right subtrees are recursively constructed on $A[1..i-1]$ and $A[i+1..n]$ respectively.
It turns out that \rmq($a$, $b$) is exactly the lowest common ancestor (LCA) of nodes $A[a]$ and $A[b]$ in this tree.
The LCA problem admits an $O(n)$ space and $O(1)$ query time solution, based on a reduction to the $\pm 1$RMQ problem.

In terms of the space usage, this data structure is in fact suboptimal.
The only information needed to answer all \rmq{} queries on $A$ is the Cartesian tree of $A$, which is a rooted binary tree with $n$ nodes.
It is well known that the number of such binary trees is equal to the $n$-th Catalan number $C_n=\frac{1}{n+1}\binom{2n}{n}$.
Hence, the information theoretical space lower bound for this problem is in fact, $\log_2 C_n=2n-\Theta(\log n)$ \emph{bits}, whereas the above data structure uses $O(n)$ \emph{words} of space.

Sadakane~\cite{Sada07} showed that it is possible to achieve ``truly'' linear space and constant query time.
He proposed a data structure using $\sim 4n$ bits of space and supporting \rmq{} queries in $O(1)$ time, assuming the word-size is $\Omega(\log n)$.\footnote{It is a standard assumption that the word-size is at least $\Omega(\log n)$, since the query answer is an index, which requires $\log n$ bits to describe.}
Later, the space is further improved to $2n+O(n\log\log n/\log n)$ by Fischer and Heun~\cite{FH07,FH11}.
The space bound matches the best possible in the leading constant.
Such data structures that use $H+r$ bits, for problems that require $H$ bits of space\footnote{We denote the information theoretical minimum space by $H$, since the minimum space is exactly the entropy of the truth table for random database.} and for $r=o(H)$, are called the \emph{succinct} data structures~\cite{Jacobson:1988}.
The amount of the extra space $r$ is usually referred to as the \emph{redundancy}.
For succinct data structures, the main focus is on the trade-off between the redundancy and query time~\cite{GM07,Pat08,PV10}.

The state-of-the-art\footnote{Due to the wide application, special cases of the problem are also interesting.
	Researchers have designed data structures that outperform the best worst-case solution on inputs with certain structures~\cite{ACN13,BLRSSW15,GJMW19}.
} RMQ data structure~\cite{Pat08,NS14,FH11,DRS17} uses $2n+n/(\frac{\log n}{t})^{\Omega(t)}+\tilde{O}(n^{3/4})$ bits of space and answers queries in $O(t)$ time, for any parameter $t>1$ and word-size $\Theta(\log n)$.\footnote{Their data structure was originally stated as $2n+n/\log^t n$ space and $O(t)$ time for constant $t$.
It naturally generalizes to the above trade-off.
See Appendix~\ref{app_trade_off}.
}
In particular, one can achieve $n/\poly\log n$ redundancy and constant query time for any $\poly\log n$.
On the other hand, despite the fact that the Range Minimum Query problem has drawn a significant amount of attention from the algorithm community, to the best of our knowledge, no lower bound is known.
\begin{question}
	What is the lowest possible redundancy for RMQ data structures with constant query time?
\end{question}

\newcommand{\thmmaincont}{
	Given an array $A[1..n]$, for any data structure supporting \rmq{} queries using $2n+r$ bits of space and query time $t$, we must have
	\[
		r\geq n/w^{O(t^2\log^2 t)},
	\]
	in the cell-probe model with word-size $w\geq \Omega(\log n)$.
}

\paragraph{Our contribution.} 
In this paper, we prove the first lower bound on the trade-off between redundancy and query time for the Range Minimum Query problem.
In particular, we prove that for constant query time, one must use $n/\poly\log n$ bits of redundancy, answering the above question.
Our lower bound also extends to the following full trade-off between the redundancy and query time.
\begin{theorem}\label{thm_main}
	\thmmaincont
\end{theorem}

Our proof technique is inspired by a lower bound of P\v{a}tra\c{s}cu and Viola~\cite{PV10}, which asserts a similar trade-off for a different data structure problem (see Section~\ref{sec_pv}).
Nevertheless, our proof majorly deviates from \cite{PV10}, due to the nature of the differences in the two problems.
In the following, we briefly survey the known techniques for proving redundancy-query time trade-off, and the difficulties in generalizing the proofs to our problem.
In Section~\ref{sec_overview}, we present a technical overview of our proof.
Finally, we prove our main theorem in Section~\ref{sec_lower} and Section~\ref{sec_dist}.
In Appendix~\ref{app_upper_one} and Appendix~\ref{app_trade_off}, we present a simplified upper bound for \rmq{}.

\subsection{The cell-probe model}
The cell-probe model~\cite{yao1981should} is a classic computational model for studying the complexity of data structures, which is similar to the well-known RAM model, except that all operations are free except memory access.
It is easy to see the practicality of the theoretical model: the time cost on memory accessing is dominant over the time costs on all other operations in a typical modern computer.
In particular, a data structure that occupies $s$ memory units on a machine with word size of $w$ bits and answers queries by accessing at most $t$ memory units is described by a preprocessing algorithm and a query algorithm.
Given a database $D$, the preprocessing algorithm outputs a table consists $s$ cells, $w$ bits per cell, which stands for our data structure.
We emphasize that the preprocessing algorithm can take arbitrarily long time and arbitrarily large space, but has to halt and output the table.
On the other hand, the query algorithm captures the interaction between CPU and data structure when the CPU is trying to answer a query.
Given a query $q$, the query algorithm makes cell-probes (i.e. memory accesses) \emph{adaptively} in following sense:
the algorithm reads the query $q$, then chooses a cell $c_1$, makes a cell-probe to retrieve the content of $c_1$;
combining $q$ and content of cell $c_1$, the algorithm chooses second cell $c_2$, makes a cell-probe to retrieve the content of cell $c_2$; and goes on; after retrieved the content of cell $c_t$, the algorithm outputs the answer by combining $q$ and the contents of cells $c_1,\dots,c_t$.
There is a substantial body of notable works on the cell-probe complexity of static data structure problems~\cite{miltersen1998data,chakrabarti2010optimal,patrascu2006higher,panigrahy2008geometric,PTW10,Larsen12b,Yin16}.

\subsection{Related work}
One of the most widely used technique in proving data structure lower bounds is the \emph{cell-sampling}.
It has applications to dynamic data structures~\cite{Larsen12a,CGL15,LWY18}, streaming lower bounds~\cite{LNN15}, static data structures with low space usage~\cite{PTW10,Larsen12b,GL16,Yin16}, as well as succinct data structures~\cite{GM07}.
To prove a lower bound using cell-sampling, one samples a small fraction of the memory cells, and argues that if the query time is low, then many queries can still be answered using only the sampled cells.
Finally, one proves that too much information about the input is revealed by the answers to those queries, yielding a contradiction.
In order to apply this technique, the problem is usually required to have the property that the answers to a set of $n^{1.1}$ random queries almost determine the entire input data, which the RMQ problem does not have (as one could keep getting indices with small values).

Golynski~\cite{Gol09} developed a different technique for proving succinct data structure lower bounds, and proved lower bounds for several problems, including storing a permutation $\pi$ supporting $\pi,\pi^{-1}$ queries, and storing a string $S$ supporting pattern matching (return the $i$-th occurrence of $P$) and substring access queries (return the substring $S[i..i+p]$).
His technique mostly applies to problems with two types of queries that ``verifies'' each other, e.g., $\pi(i)=j$ and $\pi^{-1}(j)=i$.

\subsection{P\v{a}tra\c{s}cu and Viola's technique}\label{sec_pv}
Our proof uses a few important ideas from~\cite{PV10}.
P\v{a}tra\c{s}cu and Viola proved a lower bound for succinct rank data structures.
The rank problem asks to preprocess a 0-1 array $A[1..n]$ into a data structure, supporting queries of form ``return the number of ones in $A[1..i]$.''
They proved that in the cell-probe model with word-size $w\geq \Omega(\log n)$, if a data structure uses at most $n+n/w^t$ bits of space, then its query time must be at least $\Omega(t)$, which is known to be tight in the cell-probe model~\cite{Pat08,Yu19}.

P\v{a}tra\c{s}cu and Viola's proof uses a variant of \emph{round elimination}.
We first fix the input distribution to be uniform. 
In each round of the argument, we are given a cell-probe data structure that uses $n$ bits of memory with additional $p$ \emph{published bits}, where the published bits can be accessed by the query algorithm at no cost.
In the other words, the data structure uses $p$ more bits than the information theoretical lower bound, and the $p$ extra bits are given to the query algorithm for free.
Then, we create a new data structure with a factor of $w^{O(1)}$ more publish bits, and at the same time, argue that its expected query time must decrease by some constant $\epsilon$, where the expectation is taken over a random input and a random query.
To see why it already implies the above trade-off, suppose we have a cell-probe data structure with $n+n/w^t$ bits of space, we first ``publish'' the last $n/w^t$ bits, and then apply the above argument for $\Theta(t)$ rounds.
Thereafter, a total of strictly less than $n$ bits are published, while the query time is decreased by $\Theta(t)$.
If the initial query time was much less than $t$, then we would have obtained a data structure with $<n$ published bits and query time $0$, i.e., all queries can be answered by only reading the published bits, yielding a contradiction.

The key argument lies in choosing the extra bits to publish and proving the decay of the query time.
To this end, consider two sets of queries $Q_1$ and $Q_2$ of size $O(p)$, comparable to the number of published bits.
We can show that for the rank problem, if both $Q_1$ and $Q_2$ are evenly distributed over the $n$ possible queries, then their answers are very correlated.
In particular, the mutual information between the answers to $Q_1$ and $Q_2$ is at least $\Omega(p)$.
Roughly speaking, there are $\Omega(p)$ bits of information about the input array that is both contained in $Q_1$ and $Q_2$.
It implies that in order to answer $Q_2$, the query algorithm must probe many cells that are also probed when answering $Q_1$.
This is because otherwise, most of the $\Omega(p)$ bits of ``shared information'' between the answers would be stored in two separate locations in the memory, ``wasting'' $\Omega(p)$ bits of space.
It is unaffordable, as the data structure uses only $p$ extra bits.

The above argument shows that if we publish all memory cells that are probed when answering $Q_1$ (by publishing the bits encoding their addresses and contents), then for an average query $q\in Q_2$, $\Omega(1)$ cells that are probed when answering $q$ get published, i.e., the query algorithm can now access their contents for free (by reading the published bits), and the query time is reduced by $\Omega(1)$.
Moreover, this argument works for the \emph{same} $Q_1$ and \emph{every} evenly distributed $Q_2$. Therefore, by publishing those cells, the expected query time of a random query must decrease by a constant.
On the other hand, the total number of published bits is at most $O(|Q_1|\cdot tw)\leq p\cdot w^{O(1)}$.
This completes the argument for one round, and by the earlier argument, it implies the lower bound.

\bigskip

The main obstacle in applying this strategy to the RMQ problem is to prove the correlation lower bound between two sets of queries $Q_1$ and $Q_2$.
More precisely, one needs to prove that for two \emph{random} sets of $p$ queries $Q_1$ and $Q_2$, the mutual information between their answers (assuming uniformly random input data) is at least $\Omega(p)$.
Unfortunatly, this is not true for the RMQ problem, because even the entropy of the answers to $p$ random queries is significantly lower than $\Omega(p)$ -- a random query has $\Omega(n)$ length and the small entries are likely to appear in many answers.
One simple idea to resolve this particular issue is to only consider shorter queries, and prove lower bounds on their query time.
For instance, if we only consider queries of length $O(n/p)$, then two random sets $Q_1$ and $Q_2$ of $p$ queries will both be spread out and significantly overlap.
In this case, one can show their mutual information is indeed $\Omega(p)$.
However, our argument proceeds in rounds, and the value of $p$ increases by a factor of $w^{O(1)}$ each round.
We must consider the \emph{same} set of queries and reduce their query time in all rounds.
This simple ``hack'' of the proof does not solve the problem.

\ifx\mainfile\undefined
\bibliography{refs}
\bibliographystyle{alpha}

\end{document}
\fi

\ifx\mainfile\undefined
\documentclass{article}

\begin{document}
\fi

\section{Our Technique}\label{sec_overview}

In the previous section, we showed a concrete technical difficulty to apply~\cite{PV10} directly to our problem.
It turns out that the more inherent reason is that their technique is too ``strong'': It lower bounds the expected query time of a query on a \emph{random} input database, i.e., proving an \emph{average-case} lower bound.
For the RMQ problem, we believe much more efficient average-case solutions exist (see also the next paragraph).
In this case, the techniques from \cite{PV10} would become inapplicable for proving a high lower bound.
This observation also suggests that the hard queries should be chosen \emph{depending on the input database}.
This is one major modification in our argument, which turns out to cause new issues.
We will elaborate below.

\bigskip
To prove the RMQ lower bound, we first reduce RMQ from a variant of the \emph{predecessor search} problem.
In the predecessor search problem, we are given a sorted list $S=\{s_1,s_2,\cdots,s_m\}\subset [U]$, and asked to preprocess $S$ into a data structure so that given a query $x\in[U]$, the largest element in $S$ that is at most $x$ can be found efficiently.
To see why RMQ is even related to predecessor search, let us consider all queries of form $\rmq(n/2, x)$.
The only indices $i$ that could become an answer to (at least) one of such queries are the ones with a smaller value than all other entries in $A[n/2..i-1]$.
Denote this set by $S$, then the answer to $\rmq(n/2, x)$ is precisely the predecessor of $x$ in $S$.
A more careful analysis shows that for a uniformly random Cartesian tree, $|S|$ is likely to be $\Theta(\sqrt{n})$.\footnote{This might be counter-intuitive at the first glance, as a random array $A$ would only generate an $S$ of size $\Theta(\log n)$. However, note that the distribution of the Cartesian tree generated by a uniformly random $A$ is in fact, (very) different from \emph{a uniformly random Cartesian tree}. The space benchmark of the RMQ problem is based on the number of different $n$-node Cartesian trees $C_n$. Therefore, sampling a uniform Cartesian tree would maximize the input entropy (matching the space benchmark), and is the right input distribution to keep in mind.
}
A classic lower bound for predecessor search~\cite{patracscu2006time} shows that for such instances, any data structure must use at least $\Omega(\log\log n)$ query time, even with linear space (and not in the succinct regime).
However, the distribution of $S$ induced by a random Cartesian tree is different from the hard instances from~\cite{patracscu2006time}, which makes the instances easier and relevant for the succinct regime, and at the same time, it requires a different argument to prove lower bounds.
For now, let us think of the space benchmark (the information theoretical lower bound) being the entropy of the input $H(S)$.
Note that since $S$ is non-uniform, this space benchmark could only be achieved in expectation.
See the next section for the formal definition of the problem, which allows us to define the space benchmark in worst case.
It turns out that this variant of predecessor search has a cell-probe data structure with constant redundancy and constant average-case query time.
Therefore, as mentioned in the previous paragraph, in order to prove any non-trivial lower bound, we must choose queries based on the data.

To prove the lower bound, we first observe that this average-case data structure has a very slow query time on the \emph{exact input data points} (all points in $S$).
Thus, we will define the set of all predecessor search queries $\pred(x)$ for $x\in S$ (which returns  $x$) to be our hard queries $\cQ$.
The high-level strategy is similar to \cite{PV10}, using \emph{round elimination}:
Given a data structure using optimal space ``$H(S)$'' bits with $p$ extra \emph{published bits}, we find a set of queries $Q_{\pub}$ of size $\tilde{O}(p)$, and prove that an average query in $\cQ$ must probe $\epsilon$ cells that are also probed by $Q_{\pub}$ in expectation; then we publish all the cells probed by $Q_{\pub}$, which reduces the average query time of $\cQ$ by $\epsilon$.
In each round, we publish a factor of $\poly\log n$ more bits while reducing the average query time of $\cQ$ by $\epsilon$.
On the other hand, we must have published at least $\tilde{\Omega}(|S|)$ bits before reducing the query time to $0$, which implies a lower bound on the initial average query time of $\cQ$.
Hence, same as \cite{PV10}, the key argument lies in finding such a set $Q_{\pub}$ (and proving the decay of the query time).

To this end, let us fix a set $Q_{\pub}$, whose answers have entropy much higher than $p$.
The goal is to show $\epsilon$-fraction of the queries in $\cQ$ probe cells that are also probed by $Q_{\pub}$.
It is actually not hard to prove a weaker statement: \emph{at least one} query in $\cQ$ probe cells also probed by $Q_{\pub}$.
If the sets of cells probed by $\cQ$ and $Q_{\pub}$ were disjoint, then by \emph{deleting} all cells probed by $Q_{\pub}$, we would obtain a significantly smaller data structure that still encodes the whole database.
To recover the database, we go over all queries and try to answer each of them without using the deleted cells.
The whole database can be recovered, since all $\cQ$ can be answered, and their answers determine $S$.
On the other hand, since $Q_{\pub}$'s answers have entropy much higher than $p$, it means that we must have deleted much more than $p$ bits from the data structure, yielding a contradiction.\footnote{The status of a cell being ``deleted'' still carries information, but as long as $w\geq 2\log n$, this is not an issue.}
However, this argument does not extend to proving $\epsilon$-fraction of $\cQ$ must probe the cells probed by $Q_\pub$.
Since assume for contradiction this is not the case, then after deleting all cells probed by $Q_{\pub}$, the data structure can recover only up to $(1-\epsilon)$-fraction of the data points (the ones that do not use cells probed by $Q_\pub$), there will be no contradiction unless the entropy of $Q_\pub$ is at least $\epsilon H(S)$, which can be significantly larger than $p$.

However, this $(1-\epsilon)$-fraction of $\cQ$ may reveal a lot of information about $Q_{\pub}$, as the entire $\cQ$ completely determines the answers to $Q_{\pub}$.
Intuitively, we should be able to \emph{compress} the set of the cells probed by $Q_{\pub}$, \emph{given} the set of all the other cells, since the former determines the answers to $Q_{\pub}$, the latter determines the answers to this $(1-\epsilon)$-fraction of $\cQ$, and they must have high mutual information.
That means we could compress the whole data structure by first encoding the set of all other cells, then writing down the ``compressed'' encoding of set of cells probed by $Q_{\pub}$.
Such compression would yield a contradiction.
However, there is a very subtle issue in implementing this idea: it is possible that this mutual information comes from the \emph{addresses} of the two sets, but not their contents.
That is, the queries in $Q_{\pub}$ can be adaptive, and both sets of cells determine \emph{which cells} are probed by $Q_{\pub}$, which contains information.
For example, this will not be an issue when the query algorithm is \emph{non-adaptive}, i.e. the set of probed cells depends only on the query, but not the database.
In this case, the addresses of the cells probed by $Q_{\pub}$ are fixed, and they have no information about the database.
To obtain a contradiction, we compress the memory of the data structure as follows: write down all published bits and the contents of all cells that are not probed by $Q_{\pub}$, then encode the cells probed by $Q_{\pub}$ conditioned on the cells we have written down.
From the first part of the compression, one can recover the $(1-\epsilon)$-fraction of $\cQ$ that does not use cells probed by $Q_{\pub}$ (since the data structure is non-adaptive, the decoding algorithm knows which cells are not probed by $Q_{\pub}$ and are encoded here).
If they reveal more than $p$ bits of information about $Q_{\pub}$, then the last part of the compression saves more than $p$ bits, yielding a contradiction.

For general data structures, which can be adaptive, observe that the addresses of the $\ell$-th probe of $Q_\pub$ are determined by the contents of the previous $\ell-1$ probes.
We will choose an $\ell$ such that the $(1-\epsilon)$-fraction of $\cQ$ reveals sufficient information about \emph{the $\ell$-th probe} of $Q_{\pub}$, conditioned on the first $\ell-1$ probes.
By the chain rule of mutual information, such $\ell$ exists.
At the same time, by conditioning on the first $\ell-1$ probes, the addresses of the $\ell$-th probes no longer carry information.
The final argument is similar to the non-adaptive case:
write down the $p$ published bits, the contents of the first $(\ell-1)$ probes and the contents of all cells except the $\ell$-th probes; at last, encode the contents of the $\ell$-th probes conditioned on the cells we have written down.
One can show that this encoding compresses the data structure below the input entropy if $\epsilon$ is too small, which implies a lower bound on how much the average query time of $\cQ$ must decrease in each round, and in turn, it implies a query time lower bound.
See Section~\ref{sec_qpub} for the detailed argument.

\ifx\mainfile\undefined
\end{document}
\fi

\ifx\mainfile\undefined
\documentclass[11pt]{article}

\begin{document}
\newcommand{\thmmaincont}{
	Given an array $A[1..n]$, for any data structure supporting \rmq{} queries using $2n+r$ bits of space and query time $t$, we must have
	\[
		r\geq n/w^{O(t^2\log^2 t)},
	\]
	in the cell-probe model with word-size $w\geq \Omega(\log n)$.
}
\fi
\def\lbfile{}

\section{Lower Bound for Succinct Range Minimum Query}\label{sec_lower}
In this section, we prove our main theorem, a lower bound for succinct RMQ.
\begin{restate}[Theorem~\ref{thm_main}]
	\thmmaincont
\end{restate}

To prove the lower bound, we will reduce RMQ from a variant of the \emph{predecessor search} problem, which we refer to as $\predz$.
In this problem, we are given $d$ sets $S_1,\ldots,S_d\subseteq [B]$ of size $u$ for $u=\Theta(\sqrt{B})$,\footnote{$[B]:=\{1,...,B\}$.} together with a positive integer $z$, s.t.
\[
	1\leq z\leq Z\cdot \prod_{i=1}^d\prod_{j=0}^u C_{s^{(i)}_{j+1}-s^{(i)}_j-1},
\]
where $S_i=\{s_1^{(i)},\ldots,s_u^{(i)}\}$ such that $s^{(i)}_j<s^{(i)}_{j+1}$, $s_0^{(i)}$ is assumed to be $0$ and $s_{u+1}^{(i)}$ is assumed to be $B+1$, and $C_x=\frac{1}{x+1}\binom{2x}{x}$ is the $x$-th Catalan number.
Note that, $d,B,u$ and $Z$ are all parameters of the problem, the only inputs are the $d$ sets and the integer $z$.
The goal is to construct a data structure to store the sets and the integer, supporting
\begin{itemize}
	\item $\pred(i,x)$: return largest element in $S_i$ that is at most $x$, and if no such element exists, return $0$;
	\item \texttt{query-z}(): return $z$.
\end{itemize}
We are interested in two parameters of the data structure: the space usage and the query time of $\pred$ (and \texttt{query-z} could take arbitrarily long time).

The reason we involve the $z$ is that we would like to use a solution for RMQ as a black-box to solve the predecessor search.
However our version of predecessor search has low entropy, comparing with RMQ.
It turns out that the reduction will introduce a large redundancy for predecessor search, which makes it impossible to prove any non-trivial lower bound.
To avoid this, we move substantial information (i.e. the $z$) of the reduced RMQ instance to our predecessor search.
The reduced RMQ instance becomes a function of of $z$ and $S$'s. 
We then maximize the entropy of reduced RMQ instance by properly choosing the joint distribution of $Z$ and $S$'s, so that the reduction introduces at most $O(d\log B)$ bits of redundancy. 
See the proof of Theorem~\ref{thm_main} for more details.

One useful way to view the role of the integer $z$ is that if we sample a uniformly random input among all possible inputs, the existence of $z$ distorts the distribution of $\{S_i\}$.
More specifically, all $S_i$ will be mutually independent, while for each $S_i$, 
\[
	\Pr[S_i=\{s_1,\ldots,s_u\}]\propto \prod_{j=0}^u C_{s_{j+1}-s_j-1}.
\]
We will analyze this distribution more carefully in Section~\ref{sec_dist}.

Since the set of all queries can recover the sets and the integer, the data structure must store the entire input (and this is the purpose of having \texttt{query-z}).
The total number of possible inputs to the problem is
\[
	Z\cdot \left(\sum_{0<s_1<\cdots<s_u<B+1}\prod_{j=0}^uC_{s_{j+1}-s_j-1}\right)^d.
\]
By the Catalan $u$-fold convolution formula, it is equal to
\[
	Z\cdot \left(\frac{u+1}{2B-u+1}\binom{2B-u+1}{B+1}\right)^d.
\]
Therefore, by the fact that $u=\Theta(\sqrt{B})$, we obtain the following lemma on the information theoretical minimum space for this problem.
\begin{lemma}\label{lem_input}
	The information theoretical minimum space for \predz{} is
	\[
		H_{d,u,B,Z}:=d\cdot(2B-u-\Theta(\log B))+\log Z
	\]
	bits.
\end{lemma}

In the following subsections, we will prove the following lower bound for \predz{}.
\begin{lemma}\label{lem_predz}
	For any parameters $d,u,B$ and $Z$ satisfying $u=\Theta(\sqrt{B})$, any data structure for the \predz{} problem that uses at most $H_{d,u,B,Z}+O(d\log B)$ bits of space and answers pred queries in time $t$ must have
	\[
		(wt\log B)^{O(t^2\log^2 t)}\geq B,
	\]
	in the cell-probe model with word-size $w$.
\end{lemma}

Before we proceed and prove the lemma, let us first show that it implies our main theorem.

\begin{proof}[Proof of Theorem~\ref{thm_main}]
Suppose we have an RMQ data structure using space $2n+r$ bits with query time $O(t)$.
We will use it to solve \predz{} for
\begin{align*}
    \begin{cases}
    d:=2r\\
    B:=\left\lfloor \frac{n}{d}\right\rfloor-1\\
    u:=\lfloor \sqrt{B}\rfloor \\
    Z:=\binom{2u}{u}^r.
    \end{cases}
\end{align*}

Roughly speaking, we will divide the array $A$ into $d=2r$ blocks of length $B$ each (with a gap of one entry between the adjacent blocks), and embed one set $S_i$ into each block, e.g., $S_2$ is embedded into $A[B+2,\dots, 2B+1]$.
Suppose $S_2=\{s^{(2)}_1,\ldots,s^{(2)}_u\}$, then for each integer $j\in[u]$, the element $s^{(2)}_j$ corresponds to the entry $A[B+1+s_j]$.
We will ensure that $$A[B+1]>A[B+1+s^{(2)}_1]>\cdots>A[B+1+s^{(2)}_u],$$ i.e., the entries corresponding to elements in $\{0\}\cup S_2$ have decreasing values, and they are all smaller than other entries in $A[B+1,\dots,2B+1]$.
Hence, to answer the query $\pred(2, x)$, it suffices to send the query \rmq($B+1$, $B+1+x$) to the RMQ data structure, and the predecessor of $x$ is equal to $\rmq(B+1, B+1+x)-(B+1)$.
All even $S_i$ are embedded likewise, and for technical reasons, the odd $S_i$ are embedded with the universe reversed.
Finally, $z$ is used to encode other parts of $A$.
We elaborate below.

\bigskip

Given the inputs $\{S_1,\ldots,S_d\}$ and $z$ to the \predz{} problem, we have
\[
	z\leq \binom{2u}{u}^r\cdot \prod_{i=1}^{d}\prod_{j=0}^u C_{s^{(i)}_{j+1}-s^{(i)}_j-1},
\]
where $S_i=\{s^{(i)}_1,\ldots,s^{(i)}_u\}$.
We interpret $z$ as a tuple of $r+d(u+1)$ integers: $k_1,\ldots,k_r$ and $\{z^{(i)}_j\}$ for $i=1,\ldots,d$ and $j=0,\ldots,u$, such that $k_i\le\binom{2u}{u}$ and $z^{(i)}_j\le C_{s^{(i)}_{j+1}-s^{(i)}_j-1}$.
Now we construct the input $A$ to the RMQ problem as follows:
\begin{enumerate}
	\item\label{item1} let $i\in[d]$ be an odd integer, we require $$A[i(B+1)]>A[i(B+1)-s^{(i)}_1]>A[i(B+1)-s^{(i)}_2]>\cdots>A[i(B+1)-s^{(i)}_u],$$ and $$A[i(B+1)]>A[i(B+1)+s^{(i+1)}_1]>A[i(B+1)+s^{(i+1)}_2]>\cdots>A[i(B+1)+s^{(i+1)}_u],$$ they are all smaller than all other entries in $A[(i-1)(B+1)+1,\ldots,(i+1)(B+1)-1]$;
	\item\label{item2} for odd $i$, the order of all elements in 
	\[
	    \{A[i(B+1)-s^{(i)}_j]:j\in[u]\}\cup \{A[i(B+1)+s^{(i+1)}_j]:j\in[u]\}
	\]
	is determined by the integer $k_{(i+1)/2}$ (note that given the above requirement, there are exactly $\binom{2u}{u}$ different orderings);
	\item\label{item3} the Cartesian tree of the subarray $A[i(B+1)-s^{(i)}_{j+1}+1,\ldots,i(B+1)-s^{(i)}_j-1]$ is determined by the integer $z^{(i)}_j$, and the Cartesian tree of the subarray $A[i(B+1)+s^{(i+1)}_{j}+1,\ldots,i(B+1)+s^{(i+1)}_{j+1}-1]$ is determined by the integer $z^{(i+1)}_j$, for $j=0,\ldots,u$;
	\item construct an arbitrary $A$ that satisfies the above constraints.
\end{enumerate}
Then we construct the RMQ data structure for $A$.

\paragraph{Space usage.} By assumption, the RMQ data structure uses $2n+r$ bits of space.
On the other hand, by Lemma~\ref{lem_input}, we have
\begin{align*}
	H_{d,u,B,Z}&=d(2B-u-\Theta(\log B))+\log Z \\
	&=2r(2n/(2r)-u-\Theta(\log B)+r\cdot (2u-\Theta(\log B)) \\
	&=2n-\Theta(d\log B).
\end{align*}
Thus, the space usage of the \predz{} data structure is $H_{d,u,B,Z}+O(d\log B)$ bits, as $r=O(d)$.

\paragraph{Query algorithm.}
To answer the query $\pred(i, x)$, if $i$ is odd, we make the query $\rmq(i(B+1)-x, i(B+1))$ to the RMQ data structure, and return $i(B+1)-\rmq(i(B+1)-x, i(B+1))$.
Similarly, if $i$ is even, we return $\rmq((i-1)(B+1), (i-1)(B+1)+x)-(i-1)(B+1)$.
The query time is $O(t)$.

To see why \texttt{query-z} can be answered, it suffices to recover each of $k_1,\ldots,k_r$ and $\{z^{(i)}_j\}$.
We first ask all $\pred(i, x)$ queries to recover the sets $S_1,\ldots,S_d$.
Then each $k_i$ can be recovered by asking all queries of form
\[
	\rmq((2i-1)(B+1)-s^{(2i-1)}_{j_1}, (2i-1)(B+1)+s^{(2i)}_{j_2}),
\]
for $j_1,j_2\in [u]$.
Since no entries in between can be the minimum, these queries directly compare $A[(2i-1)(B+1)-s^{(2i-1)}_{j_1}]$ with $A[(2i-1)(B+1)+s^{(2i)}_{j_2}]$ for all $j_1$ and $j_2$.
By Item~\ref{item2} above, their answers determine $k_i$.
By Item~\ref{item3}, for odd $i$, $z^{(i)}_j$ can be recovered by asking all \rmq{} queries in the subarray
\[
	A[i(B+1)-s^{(i)}_{j+1}+1,\ldots,i(B+1)-s^{(i)}_j-1],
\]
and for even $i$, $z^{(i)}_j$ can be recovered by asking all queries in
\[
	A[(i-1)(B+1)+s^{(i)}_{j}+1,\ldots,(i-1)(B+1)+s^{(i)}_{j+1}-1].
\]

\bigskip

Finally, by Lemma~\ref{lem_predz}, we have
\[
	(wt\log B)^{O(t^2\log^2 t)}\geq B=\Omega(n/r).
\]
Since $w\geq \Omega(\log n)$, we have
\[
	r\geq \frac{n}{w^{O(t^2\log^2 t)}}.
\]
This proves the theorem.

\end{proof}

\subsection{\predz{} lower bound}

In this subsection, we prove Lemma~\ref{lem_predz}.
The proof strategy is based on a variant of round elimination of~\cite{PV10}.
We will focus on the predecessor search queries on all \emph{input data points}, which turn out to be the hard queries.
That is, let $\cQ:=\{\pred(i, s^{(i)}_j):i\in [d],j\in[u]\}$.
We will prove a lower bound on the \emph{expected average} query time of all queries in $\cQ$, when the inputs $\{S_1,\ldots,S_u\}$ and $z$ are \emph{uniformly random}.
Note that the set of queries $\cQ$ is also random, as it depends on the input.

During the round elimination, we will work with data structures with \emph{published bits}.
More specifically, at the beginning of each round, we are given a data structure for \predz{} using optimal $H_{d,u,B,Z}$ bits of memory, with extra $p$ published bits.
The query algorithm may access these published bits for free, as well as probing the regular memory cells with standard cost of one per probe.
In each round, we will create a new data structure with more published bits, but faster expected average query time of $\cQ$ (probing fewer regular memory cells).

To this end, for any fixed data structure, we denote by $\probe(q)$, the set of memory cells probed in order to answer query $q$.
Similarly for any set of queries $Q$, $\probe(Q):=\bigcup_{q\in Q}\probe(q)$.
We have the following lemma.

\begin{lemma}\label{lem_int}
	Given a \predz{} data structure with $p$ published bits for $p\geq d$ and $p<du\cdot \log^{-4} B$ and worst-case $\pred$ query time $t$, there exists a set $Q_{\pub}$ of $p\log^4 B$ $\pred(\cdot,\cdot)$ queries, possibly random and depending on the input, such that
	\[
		\E\left[\frac{1}{|\cQ|}\sum_{q\in \cQ} |\probe(q)\cap \probe(Q_{\pub})|\right]\geq \Omega\left(\frac{1}{t\log^2 t}\right),
	\]
	where the expectation is taken over uniformly random input data $\{S_1,\ldots,S_d\}$, $z$ and the choice of $Q_{\pub}$.
\end{lemma}

The proof of lemma is deferred to the next subsection.
Now, let us prove that it implies Lemma~\ref{lem_predz}.

\begin{proof}[Proof of Lemma~\ref{lem_predz}]
Suppose there is a \predz{} data structure using $O(d\log B)$ bits of redundancy and query time $t$.
To initialize the round elimination argument, we simply publish the last $O(d\log B)$ bits of the data structure, and keep the first $H_{d,u,B,Z}$ bits in memory.
Thus, we obtain a data structure with $p=O(d\log B)$ publish bits, worst-case query time $t$ and expected average query time of $\cQ$ also at most $t$.

In each round, we begin with a data structure $D$ with $p$ published bits, and will modify it to a new data structure with a lower query time for $\cQ$.
To this end, we first construct $D$ given the input data.
Then, we apply Lemma~\ref{lem_int} to find the set $Q_{\pub}$, possibly depending on the inputs, and further publish all cells in $\probe(Q_{\pub})$.
That is, we append the addresses and contents of all cells in $\probe(Q_{\pub})$ to the published bits.
Thereafter, when the query algorithm wants to probe a cell, it first checks if the cell is already published, by reading the published bits.
If it is, this probe can be avoided, as the published bits already have the contents.

By the guarantee of Lemma~\ref{lem_int}, for an average query in $\cQ$, in expectation $\Omega(1/(t\log^2 t))$ probes are avoided, i.e., the expected average query time of $\cQ$ is reduced by $\Omega(1/(t\log^2 t))$.
On the other hand, publishing $\probe(Q_{\pub})$ takes $O(p\cdot wt\log^4 B)$ bits.
If we were able to execute this argument for more than $O(t^2\log^2 t)$ rounds, the query time of $\cQ$ would become a negative number, which is a contradiction.
Hence, $p$ must have exceeded $du\log^{-4} B$ before it happens, so that the premises of Lemma~\ref{lem_int} become unsatisfied.
Therefore, we must have
\[
	O(d\log B)\cdot (wt\log^4 B)^{O(t^2\log^2 t)}\geq du\log^{-4} B,
\]
which by the fact that $u=\Theta(\sqrt{B})$, simplifies to 
\[
	(wt\log B)^{O(t^2\log^2 t)} \geq B.
\]

\end{proof}

\ifx\mainfile\undefined
\end{document}
\fi

\ifx\lbfile\undefined
\documentclass[11pt]{article}

\begin{document}
\fi

\subsection{Selecting queries $Q_{\pub}$}\label{sec_qpub}

The main argument of the proof lies in finding such set of queries $Q_{\pub}$.
The intuition is that we want to select queries that reveal a lot of information (much more than $p$ bits) about the inputs.
Then since all queries in $\cQ$ determine the whole input, if the sets of cells $Q_{\pub}$ and $\cQ$ probe are always very different and barely intersect, it would mean that more than $p$ bits of information must have been stored in two different locations in the data structure.
However, the data structure only uses $p$ extra bits beyond the information theoretical minimum (since we assume the input is uniformly random), we derive a contradiction.
The formal argument is more complex, which we elaborate below.

\bigskip

It turns out that $Q_{\pub}$ can be selected by picking $p/d$ evenly spaced (over the $u$ elements) elements from each set $S_i$, then selecting $\poly\log n$ evenly spaced (over $[B]$) queries between the picked elements.
More specifically, recall that $S_i=\{s^{(i)}_1,\ldots,s^{(i)}_u\}$, let $m=ud/p$ be the gap between the picked elements, and
\[
	S^{(i)}_{\pt}:=\left\{s^{(i)}_{m},s^{(i)}_{2m},s^{(i)}_{3m},\ldots,s^{(i)}_u\right\}
\]
be $u/m=p/d$ evenly spaced elements in $S_i$.
$S^{(i)}_{\pt}$ partitions the universe $[B]$ into $p/d$ blocks, each block has $m$ elements from $S_i$, but the blocks may have different lengths.
In the following, we will only focus on the blocks whose length is approximately $m^2$.
Let $\cS_{\good}$ be the disjoint union of all such blocks $[x, y]$ from all $S_i$:
\[
	\cS_{\good}:=\left\{(i,[x, y]):i\in[d],\frac{1}{2}m^2\leq y-x\leq 2m^2,\exists l\in [p/d],\textrm{s.t.},x=s^{(i)}_{lm}+1,y=s^{(i)}_{(l+1)m}-1\right\}.
\]
Finally, we pick approximately $\log^4 B$ evenly spaced points in each block in $\cS_{\good}$, and the $\pred{}$ queries on them will form the set $Q_{\pub}$ (note that this is possible when $p<du\log^{-4} B$).
Formally, let $L=m^2\cdot \log^{-4} B$, and $\Delta\in [L]$ be uniformly random, we define
\[
	Q_{\pub}:=\bigcup_{(i,[x, y])\in \cS_{\good}}\left\{\pred(i,x+j\cdot L+\Delta):j\geq 1,x+j\cdot L+\Delta<y\right\}.
\]
The size of $Q_{\pub}$ is at most $O(p\log^4 B)$, since $|\cS_{\good}|\leq p$ and $y-x\leq 2L\log^4 B$. 
Note that $\cS_{\good}$, together with $\Delta$, determines $Q_{\pub}$.
From now on, it is helpful to view each block $(i,[x, y])\in \cS_{\good}$ as an independent universe, since $x-1$ is an input data point, any $\pred$ queries asked in this range must also have its answer in it (or equal to $x-1$).

As we argued above, we will need to show that the answers to $Q_{\pub}$ reveal a lot of information about the input.
It turns out that the most ``useful'' information they reveal is whether there is any input data point between two adjacent queries.
Define the indicator variable
\[
	E^{(i,[x, y])}_j:=\mathbf{1}_{\pred\left(i, x+jL+\Delta\right)\neq \pred\left(i, x+(j+1)L+\Delta\right)},
\]
indicating if there is an input point between them.
We can show that for each block in $\cS_{\good}$, the joint entropy of $\{E^{(i,[x, y])}_j\}_{j}$ is large.
\begin{lemma}\label{lem_entropy_lb}
	Let the input $(S_1,\ldots,S_d)$ and $z$ be uniformly random, conditioned on $(i,[x, y])\in\cS_{\good}$, the joint entropy is large for a random offset $\Delta$,
	\[
		H\left(E^{(i,[x, y])}_1,\ldots,E^{(i,[x, y])}_{(y-x)/L}\mid (i,[x, y])\in\cS_{\good},\Delta\right)\geq \Omega(\log^2 B).
	\]
\end{lemma}
We can also show that most blocks are good.
\begin{lemma}\label{lem_good_blocks}
	Let the inputs $(S_1,\ldots,S_d)$ and $z$ be uniformly random.
	For every $i\in[d],l\in[p/3d,2p/3d]$, we have
	\[
		\Pr[(i,[x, y])\in\cS_{\good}]\geq \Omega(1),
	\]
	where $x=s^{(i)}_{lm}+1$ and $y=s^{(i)}_{(l+1)m}-1$.
\end{lemma}

On the other hand, if we are given a large subset of the $m$ input points in a block, then $\{E^{(i,[x, y])}_j\}_{j}$ can be described succinctly.
\begin{lemma}\label{lem_enc_len}
	Let the input $(S_1,\ldots,S_d)$ and $z$, as well as $\Delta\in[L]$, be uniformly random, and condition on $(i,[x, y])\in\cS_{\good}$.
	Let $S_{i, [x,y]}'\subseteq S_i\cap [x, y]$ be an (arbitrarily) jointly distributed subset,
	there is a prefix-free binary string $\com_{\Delta,S_i}(S_{i, [x,y]}')$, such that $\com_{\Delta,S_i}(S_{i, [x,y]}')$ and $S_{i, [x,y]}'$ together determine $\{E^{(i,[x, y])}_j\}_j$.
	Moreover, we have the following bound on the length of $\com_{\Delta,S_i}(S_{i, [x,y]}')$:
	\[
		\E
		\left[
		  \left|\com_{\Delta,S_i}(S_{i, [x,y]}')\right|
		  \mid (i,[x, y])\in\cS_{\good}
		\right]\leq 
		O(\sqrt{\epsilon} \log^2 B\log(1/\epsilon)+\log B\log\log B),
	\]
	where $\epsilon:=1-\frac{1}{m}\E\left[\left|S_{i, [x,y]}'\right|\mid (i,[x, y])\in\cS_{\good}\right]$.

\end{lemma}

The proofs of the above three lemmas highly rely on the input distribution, especially on the marginal of each $S_i$.
To focus on the main storyline, we defer them to Section~\ref{sec_dist}.
For now, the only property we use about the uniform input distribution is that conditioned on $\cS_{\good}$, the data points in different blocks are mutually independent, which can be seen easily from definition.
\begin{claim}\label{cl_ind}
  Let $S_\pt:=\{S_\pt^{(i)}\}_i$ be the partitions.
  $S_\pt$ determines $\cS_\good$, and given $S_\pt$, all $p$ blocks $ \left\{(i,[s^{(i)}_{lm}+1,s^{(i)}_{(l+1)m}-1]):i\in[d],l\in[p/d]\right\} $ are mutual independent.
\end{claim}

Now, we are ready to prove Lemma~\ref{lem_int}.

\begin{proof}[Proof of Lemma~\ref{lem_int}]
The proof uses an encoding argument.
We will show that if the intersection size in the lemma statement is too small, then there is a very efficient way to encode the entire input data, using bits less than its entropy, which yields a contradiction.

To this end, a useful notion (following~\cite{PV10}) is the \emph{footprint} of a query.
Given a query $q$, its footprint $\foot(q)$ is a binary string obtained by concatenating the ($w$-bit) content of the first cell the query algorithm probes when answering $q$, the content of the second cell, third cell, and so on.
This is a $w\cdot |\probe(q)|$-bit binary string, encoding all contents in $\probe(q)$.
Note that it is not necessary to write down the addresses, since by simulating the query algorithm, one automatically knows what is the next cell to probe, given $q$.
Therefore, $\foot(q)$, together with $q$, determines the answer to $q$.

Similarly, given a set of queries $Q$, its footprint $\foot(Q)$ is obtained by concatenating the contents of the first probes of all queries in $Q$ (in the lexicographical order of corresponding query), then the contents of the second probes of all queries, and so on.
If a cell is already encoded, either in the same $l$-th probe by a lexicographically smaller query or in an earlier probe, its content will be skipped.
Likewise, $\foot(Q)$, together with $Q$, also determines the answers to all queries in $Q$.

Finally, for $l\in [t]$, let $\foot_{<l}(Q)$ be the prefix of $\foot(Q)$ that only encodes the cells in the first $(l-1)$ probes of $Q$.
Let $\foot_l(Q)$ be the substring of $\foot(Q)$ that encodes the $l$-th probes.
Note that given $\foot_l(Q)$ alone, it might not be clear which cells it is encoding, since the location of the $l$-th probe may depend on the contents of the previous probes.
It may also not encode the $l$-th probe of all queries in $Q$, as some of them may have appeared in $\foot_{<l}(Q)$, which are skipped by the definition of $\foot(Q)$.

\bigskip
Next, let $\cE$ be the set of all indicator random variables $\{E^{(i,[x, y])}_j:(i,[x, y])\in \cS_{\good},j\in[(y-x)/L]\}$.
By Lemma~\ref{lem_entropy_lb}, Lemma~\ref{lem_good_blocks} and Claim~\ref{cl_ind}, we have
\[
	H(\cE \mid S_{\pt},\Delta)\geq \Omega(p\log^2 B).
\]
The answers to $Q_{\pub}$ determine $\cE$, hence,
\[
	I(\foot(Q_{\pub}); \cE\mid S_{\pt},\Delta)\geq \Omega(p\log^2 B).
\]
By the chain rule of mutual information, we have
\begin{equation}\label{eqn_cr}
	\sum_{l=1}^t I(\foot_l(Q_{\pub});\cE\mid S_{\pt},\foot_{<l}(Q_{\pub}),\Delta)\geq \Omega(p\log^2 B).
\end{equation}
Note that here it is clear which contents $\foot_l(Q_{\pub})$ is encoding, since we have conditioned on $\foot_{<l}(Q_{\pub})$, $\cS_{\good}$ and $\Delta$.
Equivalently, it might be helpful to think $\foot_l(Q_{\pub})$ as a set of cells with \emph{both} contents and addresses encoded, however, the entropy of the addresses is zero conditioned on $\foot_{<l}(Q_{\pub})$.

\bigskip

Now, let us assume
\begin{equation}\label{eqn_int}
	\E_{\Delta,S_1,\ldots,S_d,z}\left[\frac{1}{|\cQ|}\sum_{q\in \cQ} |\probe(q)\cap \probe(Q_{\pub})|\right]=\delta.
\end{equation}
The goal is to lower bound $\delta$.

\paragraph{Encoding.}
Consider the following encoding scheme that encodes the data structure:
\begin{enumerate}
	\item
		write down the $p$ published bits;
	\item
		write down $S_{\pt}$;
	\item
		sample a uniformly random $\Delta\in[L]$ and $l\in[t]$, write down $\Delta$ and $l$;
	\item
		write down $\foot_{<l}(Q_{\pub})$;
	\item\label{step5}
	  	write down the contents of all cells that are not in $\foot_{<l}(Q_{\pub})$ or $\foot_l(Q_{\pub})$, in the increasing order of their addresses;
	\item
		for every block $(i, [x, y])\in\cS_{\good}$, let $S'_{i,[x, y]}$ be the subset of $S_i\cap [x, y]$ such that $s\in S'_{i,[x, y]}$ if and only if $\pred(i, s)$ does not probe any cell in $\foot_l(Q_{\pub})$;
	\item
		for each $(i, [x, y])\in\cS_{\good}$, apply Lemma~\ref{lem_enc_len} and write down $\com_{\Delta,S_i}(S'_{i,[x, y]})$;
	\item
		encode $\foot_l(Q_{\pub})$ conditioned on $\cE, S_{\pt}$ and $\foot_{<l}(Q_{\pub})$ using the optimal expected $$H(\foot_l(Q_{\pub})\mid \cE, S_{\pt}, \foot_{<l}(Q_{\pub}),\Delta)+O(1)$$ bits, and write down the encoding.
  \end{enumerate}

\paragraph{Decoding.}
	Now, we show that one can recover the entire data structure from the above encoding, hence, all the inputs.

	Given the above encoding, we first read the $p$ published bits, $S_{\pt}$, $\Delta$ and $l$.
	From them, we know the set $Q_{\pub}$.
	By simulating the query algorithm for $(l-1)$ steps on $Q_{\pub}$, we read $\foot_{<l}(Q_{\pub})$ and recover all their contents.
	At the same time, we know the \emph{addresses} of all cells in $\foot_l(Q_{\pub})$ (but not their contents).
	Next, we read the contents of all cells not in $\foot_{<l}(Q_{\pub})$ or $\foot_l(Q_{\pub})$ from Step~\ref{step5}.

	So far, we have recovered contents of all cells not in $\foot_l(Q_{\pub})$, and we know their addresses.
	We go over all possible $\pred(\cdot,\cdot)$ queries, and simulate the query algorithm on them.
	This identifies all queries that can be answered without probing cells in $\foot_l(Q_{\pub})$.
	In particular, we recover the sets $S'_{i,[x, y]}$ for every $(i,[x, y])\in\cS_{\good}$.

	Next, we read $\com_{\Delta,S_i}(S'_{i,[x, y]})$.
	By Lemma~\ref{lem_enc_len}, together with $S'_{i,[x, y]}$, we recover all $E^{(i,[x, y])}_j$ for all $(i,[x, y])\in \cS_{\good}$, i.e., we recover $\cE$.
	Finally, we read the encoding of $\foot_l(Q_{\pub})$ conditioned on $\cE$, $\cS_{\good}$, and $\foot_{<l}(Q_{\pub})$.
	This reconstructs the data structure, and thus, by making all queries to it, we recover the entire input data.
	Therefore, the encoding must use at least $H_{d,u,B,Z}$ bits in expectation.

\paragraph{Analysis.}
	Now let us analyze how many bits the above encoding scheme takes in expectation:
	\begin{enumerate}
		\item
			published bits take $p$ bits;
		\item
			$S_{\pt}$ has $p$ blocks, each taking $O(\log B)$ bits;
		\item
			$\Delta$ and $l$ take $O(\log B+\log t)$ bits;
		\item
			$\foot_{<l}(Q_{\pub})$ take $|\foot_{<l}(Q_{\pub})|$ bits;
		\item
			all other cells take $H_{d,u,B,Z}-|\foot_{<l}(Q_{\pub})|-|\foot_{l}(Q_{\pub})|$ bits;
		\item
			by Equation~\eqref{eqn_int} and Lemma~\ref{lem_good_blocks}, for a constant fraction of the queries $q\in\cQ$, the probability that it is in a good block is at least a constant, and we choose $l\in[t]$ uniformly at random, in expectation at most $O(\delta/t)$ fraction of the hard queries in each good $S_i\cap [x, y]$ probe cells in $\foot_l(Q_{\pub})$, i.e.,
			\[
			  \E_{S_\pt,\Delta,l,\{S_i\}|_{S_\pt}}\left[\frac{1}{|\cS_{\good}|}\cdot \sum_{(i,[x, y])\in\cS_{\good}}|S'_{i,[x, y]}|\right]\geq (1-O(\delta/t))m;
			\]
		\item
		  for each $(i,[x, y])\in\cS_{\good}$, by Lemma~\ref{lem_enc_len}, we have
			\[
			  |\com_{\Delta,S_i}(S'_{i,[x, y]})|\leq O((\sqrt{\epsilon_{s_\pt,(i,[x,y])}}\log 1/\epsilon_{s_\pt,(i,[x,y])})\log^2 B+\log B\log\log B),
			\]
			where $\epsilon_{s_\pt,(i,[x,y])}:=1-\E[|S'_{i,[x, y]}|\mid S_\pt=s_\pt]/m$.
			Let $\epsilon_{s_\pt}:=\E_{(i,[x,y])\in \cS_\good}[\epsilon_{s_\pt,(i,[x,y])}], \epsilon:=\E[\epsilon_{s_\pt}]=O(\delta/t)$.
			Writing down all $\com_{\Delta,S_i}(S'_{i,[x, y]})$ takes
			\begin{align*}
			  &\ \sum_{s_\pt}\Pr[S_\pt=s_\pt]\sum_{(i,[x,y])\in \cS_\good}\E[|\com_{\Delta,S_i}(S'_{i,[x, y]})|\mid S_\pt=s_\pt]\\
			  =&\ \E_{s_\pt}\left[|\cS_\good|\E_{(i,[x,y])\in \cS_\good}[|\com_{\Delta,S_i}(S'_{i,[x, y]})|\mid S_\pt=s_\pt]\right]\\
			  =&\ \E_{s_\pt}\left[|\cS_\good| \E_{(i,[x,y])\in \cS_\good}[O((\sqrt{\epsilon_{s_\pt,(i,[x,y])}}\log 1/\epsilon_{s_\pt,(i,[x,y])})\log^2 B+\log B\log\log B)\mid S_\pt=s_\pt]\right]\\
			  \le&\ p\log^2B\cdot\E_{s_\pt}\left[O(\sqrt{\epsilon_{s_\pt}}\log 1/\epsilon_{s_\pt})\right]+O(p\log B\log\log B)\\
			  \le&\ p\log^2B\cdot O(\sqrt{\epsilon}\log 1/\epsilon)+O(p\log B\log\log B)\\
			  =&\ O(p\log^2B\cdot\sqrt{\delta/t}\log(t/\delta)+p\log B\log\log B)
			\end{align*}
			bits in expectation, due to the concavity of $\sqrt{x}\log(1/x)$ for $0<x<1$;
		\item
			by Equation~\eqref{eqn_cr}, encoding $\foot_{l}(Q_{\pub})$ takes
			\[
				|\foot_{l}(Q_{\pub})|-\Omega((p\log^2 B)/t)
			\]
			bits.
	\end{enumerate}
	Summing up the cost of each step, the encoding uses in total
	\begin{align*}
		&H_{d,u,B,Z}+p+O(p\log B)+O(\log B+\log t)\\
		&+O(p(\sqrt{\delta/t}\log (t/\delta))\log^2 B)+O(p\log B\log\log B)-\Omega((p\log^2 B)/t) \\
		\leq&\, H_{d,u,B,Z}+O(p(\sqrt{\delta/t}\log (t/\delta))\log^2 B)+O(p\log B\log\log B)-\Omega((p\log^2 B)/t)
	\end{align*}
	bits.
	Since $t\ll \log B/\log\log B$ (otherwise the lower bound already holds), and thus, $p\log B\log\log B\ll (p\log^2 B)/t$, we must have
	\[
		p(\sqrt{\delta/t}\log (t/\delta))\log^2 B\geq \Omega((p\log^2 B)/t),
	\]
	which simplifies to
	\[
		\delta \geq \Omega(1/(t\log^2 t)).
	\]
	This proves the lemma.
\end{proof}


\ifx\lbfile\undefined
\end{document}
\fi



\section{Analyzing the Input Distribution}\label{sec_dist}

In this section, we analyze the input distribution and prove the three lemmas from the previous section.
Recall that the input to the \predz{} problem is $(S_1,\ldots,S_d)$ and $z$ such that
\[
    1\leq z\leq Z\cdot \prod_{i=1}^d\prod_{j=0}^u C_{s^{(i)}_{j+1}-s^{(i)}_j-1}.
\]
Therefore, if we sample a uniformly random input, the marginal distribution of each $S_i$ is
\begin{equation}\label{eqn_prob_S}
	\Pr[S_i=\{s_1,\ldots,s_u\}]=\frac{1}{M(B,u)}\prod_{j=0}^u C_{s_{j+1}-s_j-1},
\end{equation}
where $0=s_0<s_1<\cdots<s_u<s_{u+1}=B+1$, and
\[
	M(B,u):=\sum_{s_1,\ldots,s_u}\prod_{j=0}^u C_{s_{j+1}-s_j-1}
\]
is the Catalan's $(u+1)$-fold convolution.
In general, we have the following equation.
\begin{theorem}[Catalan's $m$-fold convolution formula~\cite{catalan1887nombres,regev2012proof}]
For $m\leq U$,
\begin{align}
    \sum_{\substack{i_1+\dots+i_m=U\\i_1,\dots,i_m\ge 1}}C_{i_1-1}\dots C_{i_m-1}=\frac{m}{2U-m}\binom{2U-m}{U}.
\end{align}
\end{theorem}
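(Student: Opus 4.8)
The plan is to pass to generating functions. Writing $j_k = i_k-1$ and $n = U-m$, the left-hand side becomes
\[
    \sum_{\substack{j_1+\dots+j_m=n\\ j_1,\dots,j_m\ge 0}} C_{j_1}\cdots C_{j_m} = [x^n]\, C(x)^m,
\]
where $C(x) = \sum_{k\ge 0} C_k x^k = \frac{1-\sqrt{1-4x}}{2x}$ is the Catalan generating function. So it suffices to prove the coefficient identity $[x^n]\, C(x)^m = \frac{m}{2n+m}\binom{2n+m}{n}$ for $n\ge 0$ and then translate back, using $2(U-m)+m = 2U-m$ together with the symmetry $\binom{2U-m}{U-m} = \binom{2U-m}{U}$.

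For the coefficient identity I would invoke Lagrange inversion. The Catalan series satisfies $C(x) = 1 + x\,C(x)^2$, so the substitution $w = C(x)-1$ (a formal power series with zero constant term, since $C(x) = 1 + x + 2x^2 + \cdots$) puts the equation in the standard form $w = x\,\phi(w)$ with $\phi(w) = (1+w)^2$. Applying Lagrange inversion with $H(w) = (1+w)^m$ gives, for $n\ge 1$,
\[
    [x^n]\, C(x)^m = [x^n]\,(1+w)^m = \frac{1}{n}\,[w^{n-1}]\!\left(m(1+w)^{m-1}\cdot(1+w)^{2n}\right) = \frac{m}{n}\binom{2n+m-1}{n-1},
\]
and an elementary rearrangement of factorials turns $\frac{m}{n}\binom{2n+m-1}{n-1}$ into $\frac{m}{2n+m}\binom{2n+m}{n}$. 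The boundary case $n=0$ (i.e.\ $U=m$) is checked directly: the left side is $C_0^m = 1$ and the right side is $\frac{m}{m}\binom{m}{0} = 1$.

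I expect the only real subtlety to be bookkeeping rather than genuine difficulty: verifying the hypotheses of Lagrange inversion (that $w = C(x)-1$ is a legitimate formal power series with vanishing constant term, so the inversion applies), tracking the index shift from $U$ to $n = U-m$, and the closing binomial manipulation. Should the generating-function route prove awkward, there is a combinatorial fallback via the cycle lemma of Dvoretzky and Motzkin: interpret $[x^n]C(x)^m$ as counting $m$-tuples of Dyck paths (equivalently forests of $m$ ordered plane trees with $n$ edges in total), encode such an object as a suitable $\{+1,-1\}$-sequence of length $2n+m$, and show that exactly $m$ of its $2n+m$ cyclic rotations are valid, which produces the factor $\frac{m}{2n+m}$ in front of $\binom{2n+m}{n}$. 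A third option is induction on $m$, splitting off one factor of $C(x)$ and evaluating the resulting Vandermonde-type convolution of Catalan numbers.
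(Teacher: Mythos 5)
Your proof is correct. A preliminary remark: the paper does not prove this statement at all --- it is imported as a known result, cited to Catalan and to Regev's short proof --- so there is no internal argument to compare against; your write-up would in effect supply the missing proof. Within your argument the key steps all check out: the substitution $j_k=i_k-1$, $n=U-m$ correctly turns the left-hand side into $[x^n]\,C(x)^m$; the functional equation $C(x)=1+xC(x)^2$ gives $w=C(x)-1=x+2x^2+\cdots$, so $w=x\phi(w)$ with $\phi(w)=(1+w)^2$ and $\phi(0)\neq 0$, which is exactly the setting where Lagrange inversion applies; taking $H(w)=(1+w)^m=C(x)^m$ yields $[x^n]C(x)^m=\frac{1}{n}[w^{n-1}]\,m(1+w)^{2n+m-1}=\frac{m}{n}\binom{2n+m-1}{n-1}$, which indeed simplifies to $\frac{m}{2n+m}\binom{2n+m}{n}$; and the translation back via $2(U-m)+m=2U-m$ and $\binom{2U-m}{U-m}=\binom{2U-m}{U}$, together with the separately checked case $n=0$, recovers the stated identity. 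The alternative routes you sketch (the Dvoretzky--Motzkin cycle lemma applied to forests of $m$ plane trees with $n$ edges, or induction on $m$) are also standard and would work; the cycle-lemma version has the advantage of explaining combinatorially where the factor $\frac{m}{2U-m}$ comes from, while the Lagrange-inversion route is the shortest to write down rigorously.
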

Thus, we have the following estimation for $M$.
\begin{proposition}\label{prop_M}
	For super constant $B$ and any $u\ll B$, it holds that
	\[
	    M(B,u) \leq \frac{(1+o(1))2u}{\sqrt{\pi/2}(2B-u)^{1.5}}\cdot 2^{2B-u}e^{-u^2/(4B)},
 	\]
 	and when $u^3\ll B^2$,
 	\[
	    M(B,u)\geq \frac{(1-o(1))2u}{\sqrt{\pi/2}(2B-u)^{1.5}}\cdot 2^{2B-u}e^{-u^2/(4B-2u)}.
	\]
	In particular, 
	when $u=O(\sqrt{B})$,
	\[
	    M(B,u)=\Theta(u2^{2B-u}/B^{3/2}).
	\]
\end{proposition}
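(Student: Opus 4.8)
The plan is to collapse $M(B,u)$ into a single binomial coefficient and then estimate that coefficient from both sides by Stirling's formula. Writing $i_j:=s_j-s_{j-1}$ for $1\le j\le u+1$, the defining sum of $M(B,u)$ becomes exactly the left-hand side of Catalan's $m$-fold convolution formula with $m=u+1$ summands and total $U=B+1$ (since $\sum_j i_j=s_{u+1}-s_0=B+1$). Hence I get the exact identity
\[
    M(B,u)=\frac{u+1}{2B-u+1}\binom{2B-u+1}{B+1},
\]
and everything reduces to estimating $\binom{N}{k}$ with $N:=2B-u+1$ and $k:=B+1$. Note that $k$ lies at distance $t:=\lceil u/2\rceil=\Theta(u)$ from a mode $k_0$ of row $N$ (namely $k_0=N/2$ if $N$ is even and $k_0=(N+1)/2$ if $N$ is odd), and in fact $c+t=B+1$ exactly, where $c$ denotes $N/2$ or $(N+1)/2$ accordingly.

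Next I would peel off the central binomial coefficient. Using $\binom{N}{j}/\binom{N}{j-1}=(N-j+1)/j$ repeatedly,
\[
    \binom{N}{k}=\binom{N}{k_0}\prod_{\ell=1}^{t}\frac{c-\ell+\varepsilon}{c+\ell},
\]
where $\varepsilon\in\{0,1\}$ according to the parity of $N$. Stirling's formula applied to $\binom{N}{k_0}$ (which, up to a harmless factor $\tfrac12$ when $N$ is odd, is a central binomial coefficient) gives $\binom{N}{k_0}=(1+o(1))\,2^{N}/\sqrt{\pi N/2}$; keeping the $\sqrt\pi$ here is exactly what produces the $\sqrt{\pi/2}$ in the statement. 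Together with $N^{3/2}=(1+o(1))(2B-u)^{3/2}$, $2^N=2\cdot 2^{2B-u}$, and $u+1=\Theta(u)$, this already yields the prefactor $\Theta\!\big(u\,2^{2B-u}/B^{3/2}\big)$; what remains is to bound the product $\prod_{\ell=1}^t\frac{c-\ell+\varepsilon}{c+\ell}$ from above and below.

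For the upper bound I would apply $x\le e^{x-1}$ termwise, getting $\prod_{\ell=1}^t\frac{c-\ell+\varepsilon}{c+\ell}\le\exp\!\big(-\sum_{\ell=1}^t\frac{2\ell-\varepsilon}{c+\ell}\big)\le\exp\!\big(-\tfrac{1}{c+t}\sum_{\ell=1}^t(2\ell-\varepsilon)\big)$. Since $c+t=B+1$ and $\sum_{\ell=1}^t(2\ell-\varepsilon)=t^2+(1-\varepsilon)t$, a one-line check (using only $u\le 2B$) shows this is at most $e^{-u^2/(4B)}$, so the upper bound drops out with no hypothesis beyond $u\ll B$. For the lower bound I would instead expand $\log\frac{c-\ell+\varepsilon}{c+\ell}=\log\!\big(1-\tfrac{\ell-\varepsilon}{c}\big)-\log\!\big(1+\tfrac{\ell}{c}\big)$ in powers of $\ell/c$ and sum over $\ell\le t$: the first-order terms contribute $-t^2/c$, which equals $-u^2/(4B-2u)$ up to an $O(u/B)+O(u^2/B^2)$ error; the second-order terms contribute $O(u^2/B^2)$; and the cubic remainder is $O(t^4/c^3)=O(u^4/B^3)$. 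All three error terms are $o(1)$ precisely when $u^3\ll B^2$ — that is the role of the extra hypothesis — so exponentiating gives $\prod_{\ell=1}^t\frac{c-\ell+\varepsilon}{c+\ell}\ge(1-o(1))\,e^{-u^2/(4B-2u)}$.

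Assembling the two displays yields both inequalities in the proposition, and when $u=O(\sqrt B)$ the exponential factors $e^{-u^2/(4B)}$ and $e^{-u^2/(4B-2u)}$ are both $\Theta(1)$ while $2B-u=\Theta(B)$, giving $M(B,u)=\Theta(u\,2^{2B-u}/B^{3/2})$. I expect the main obstacle to be the lower-bound error analysis: one must simultaneously control the cubic Stirling/Taylor remainder and the discrepancy between the ``honest'' exponent $t^2/c$ and the target $u^2/(4B-2u)$, and recognizing that all of these are $o(1)$ exactly under $u^3\ll B^2$ is the crux; the upper bound, by contrast, is clean because $x\le e^{x-1}$ immediately delivers the exponent in the convenient form $-u^2/(4B)$.
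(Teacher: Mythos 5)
Your proposal is correct and takes essentially the same route as the paper: both collapse $M(B,u)$ to $\frac{u+1}{2B-u+1}\binom{2B-u+1}{B+1}$ via the Catalan convolution identity, estimate the (near-)central binomial coefficient by Stirling to get the $\sqrt{\pi/2}(2B-u)^{1.5}$ prefactor, and control the shift from the center to $B+1$ as a product of $\Theta(u)$ ratios, bounded above by $e^{-u^2/(4B)}$ and below by $(1-o(1))e^{-u^2/(4B-2u)}$ with the hypothesis $u^3\ll B^2$ entering exactly at the lower-bound error analysis. The only difference is cosmetic: the paper bounds the product by replacing every factor with the extreme one, while you bound it termwise via $1-x\le e^{-x}$ and a Taylor expansion of the logarithm.
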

\begin{proof}
\begin{align*}
    M(B, u)&=\frac{u+1}{2B-u+1}\binom{2B-u+1}{B+1}\\
    &=\frac{u+1}{2B-u+1}\binom{2B-u+1}{B-\lfloor (u-1)/2\rfloor }\frac{(B-u+1)\cdots(B-\lceil (u-1)/2\rceil )}{(B-\lfloor (u-1)/2\rfloor +1)\cdots (B+1)}\\
	&=\frac{(1\pm o(1))u2^{2B-u+1}}{\sqrt{\pi/2}(2B-u)^{1.5}}\cdot\frac{(B-u+1)\cdots(B-\lceil (u-1)/2\rceil)}{(B-\lfloor (u-1)/2\rfloor +1)\cdots (B+1)}.
\end{align*}
The last factor is upper bounded by
\begin{align*}
    \frac{(B-u+1)\cdots(B-\lceil (u-1)/2\rceil)}{(B-\lfloor (u-1)/2\rfloor +1)\cdots (B+1)}&\leq \left(\frac{B-\lceil (u-1)/2\rceil }{B+1}\right)^{\lfloor (u+1)/2\rfloor } \\
    &\leq \left(1-\frac{\lceil (u+1)/2\rceil }{B+1}\right)^{\lfloor (u+1)/2\rfloor } \\
    &\leq (1+o(1))e^{-u^2/(4B)},
\end{align*}
and when $u^3\ll B^2$, it is lower bounded by 
\begin{align*}
    \frac{(B-u+1)\cdots(B-\lceil (u-1)/2\rceil)}{(B-\lfloor (u-1)/2\rfloor +1)\cdots (B+1)}&\geq \left(\frac{B-u}{B-\lfloor (u-1)/2\rfloor}\right)^{\lfloor (u+1)/2\rfloor} \\
    &\geq \left(1-\frac{u}{2B-u-1}\right)^{u/2} \\
    &\geq (1-o(1))e^{-u^2/(4B-2u)}.
\end{align*}

\end{proof}

Now, let us first prove Lemma~\ref{lem_good_blocks}.
\begin{restate}[Lemma~\ref{lem_good_blocks}]
	Let the inputs $(S_1,\ldots,S_d)$ and $z$ be uniformly random.
	For every $i\in[d],l\in[p/3d,2p/3d]$, we have
	\[
		\Pr[(i,[x, y])\in\cS_{\good}]\geq \Omega(1),
	\]
	where $x=s^{(i)}_{lm}+1$ and $y=s^{(i)}_{(l+1)m}-1$.
\end{restate}
\begin{proof}
    Let us first fix $i$, and omit the superscript $(i)$ in the following for convenience of notations.
    Let $c=lm$, it suffices to prove that $\Pr[s_{c+m}-s_c\in [m^2/2, 2m^2]]\geq \Omega(1)$.
  Note that $u^2=\Theta(B)$, we have
  \begin{align*}
	&\Pr[s_{c+m}=y+1,s_c=x-1]\\
	=&\, M(x-2,c-1)\cdot M(y-x+1,m-1)\cdot M(B-y,u-c-m)/M(B,u)\\
	=&\, \Omega\left(\frac{cm(u-c-m)B^{1.5}}{x^{1.5}(y-x)^{1.5}(B-y)^{1.5}u}\cdot e^{-\frac{c^2}{4x-2c}-\frac{m^2}{4(y-x)-2m}-\frac{(u-c-m)^2}{4(B-y)-2(u-c-m)}+\frac{u^2}{4B}}\right).
  \end{align*}
  Since $m\ll u$ and $l\in [p/3d,2p/3d]$ (i.e., $c\in[u/3,2u/3]$), we have
  \begin{align*}
	&\Pr[s_{c+m}-s_c\in[m^2/2,2m^2]]\\
	\geq&\,\sum_{x=B/3}^{2B/3}\sum_{y-x=m^2/2}^{2m^2}\Pr[s_{c+m}=y,s_c=x]\\
	=&\, \sum_{x=B/3}^{2B/3}\sum_{y-x=m^2/2}^{2m^2}\Omega\left(\frac{m}{B(y-x)^{1.5}}e^{-\frac{m^2}{4(y-x)-2m}}\right)\\
	\geq &\, \Omega\left(B\cdot m^2\cdot \frac{m}{B(m^2)^{1.5}}\right) \geq \Omega(1).
  \end{align*}
\end{proof}


In the rest of this section, we will prove the two remaining lemmas.
Both lemmas consider one specific good block $(i, [x, y])$.
For simplicity of notations, we will omit all the superscripts $(i, [x, y])$.

We denote the length of the block by $U:=y-x+1$ and $S=S_i\cap [x, y]$ be the set of input points contained in it.
Thus, we have $U=\Theta(m^2)$.
It is easy to verify that the marginal distribution of $S$ is similar to Equation~\eqref{eqn_prob_S}:
\[
	\Pr[S=\{s_1,\ldots,s_m\}]=\frac{1}{M(U,m)}\prod_{j=0}^m C_{s_{j+1}-s_j-1}.
\]

Let $k=\log^4 B$.
We have $L=\lfloor m^2/k\rfloor$.
Let $K=\lfloor U/L\rfloor$ be approximately the number of predecessor search queries we put in $Q_{\pub}$ in $[x, y]$.
We sample a random integer $\Delta\in[1, L]$.
Recall that within this block we put in $Q_{\pub}$ all predecessor search queries at points of form $\Delta+j\cdot L$.
That is, we divide $U$ into roughly $K$ intervals of size $L$, with a uniformly random offset.
Finally, recall that $E_j$ is the indicator random variable indicating $[\Delta+j\cdot L,\Delta+(j+1)\cdot L)$ contains at least one point in $S$.



\bigskip
We will involve the following inequalities as a toolkit to finish our analysis.
The proofs are deferred to the appendix.
\begin{fact}\label{fact_sum1}
	For $A,T>0$, $\sum_{h=1}^T e^{-A/h}\cdot h^{-1.5}\leq O(e^{-A/T}(1/\sqrt{T}+1/\sqrt{A}))$.
\end{fact}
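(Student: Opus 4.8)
The plan is to compare the sum with the integral $\int_1^T e^{-A/t}t^{-3/2}\,dt$ and to evaluate that integral via the substitution $s=A/t$, which turns it into a tail of the Gamma integral. Concretely, $s=A/t$ gives
\[
\int_1^T e^{-A/t}t^{-3/2}\,dt \;=\; A^{-1/2}\int_{A/T}^{A} e^{-s}s^{-1/2}\,ds \;\le\; A^{-1/2}\int_{A/T}^{\infty} e^{-s}s^{-1/2}\,ds .
\]
If $A\le T$, the last integral is at most $\Gamma(1/2)=\sqrt{\pi}$, so the whole thing is $O(A^{-1/2})$; since $e^{-A/T}=\Theta(1)$ and $A^{-1/2}\ge T^{-1/2}$ in this regime, this equals $O\big(e^{-A/T}(A^{-1/2}+T^{-1/2})\big)$. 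If $A>T$, then the trivial tail bound $\int_a^\infty e^{-s}s^{-1/2}\,ds\le a^{-1/2}\int_a^\infty e^{-s}\,ds=a^{-1/2}e^{-a}$ with $a=A/T$ gives $A^{-1/2}(T/A)^{1/2}e^{-A/T}=(T^{1/2}/A)\,e^{-A/T}\le T^{-1/2}e^{-A/T}$, using $T\le A$. Hence in all cases $\int_1^T e^{-A/t}t^{-3/2}\,dt=O\big(e^{-A/T}(A^{-1/2}+T^{-1/2})\big)$.

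Next I would pass from this integral to the sum. Put $g(t)=e^{-A/t}t^{-3/2}$; from $(\ln g)'(t)=(A-\frac{3}{2}t)/t^2$ the function $g$ is unimodal, increasing on $(0,2A/3]$ and decreasing on $[2A/3,\infty)$. For any nonnegative unimodal function one has $\sum_{h=1}^T g(h)=O\big(\max_{1\le t\le T}g(t)+\int_1^T g(t)\,dt\big)$, obtained by splitting the range at the mode and bounding each term by an adjacent unit-length slice of the integral on the side where $g$ is monotone. The integral term is exactly what the previous step estimates. For the maximum: if $2A/3\le T$, then $\max_{[1,T]}g=g(2A/3)=\Theta(A^{-3/2})=O(A^{-1/2})$, and $2A/3\le T$ forces $e^{-A/T}=\Theta(1)$, so this is also $O\big(e^{-A/T}(A^{-1/2}+T^{-1/2})\big)$; if instead $2A/3>T$, then $g$ is increasing on all of $[1,T]$, so $\max_{[1,T]}g=g(T)=e^{-A/T}T^{-3/2}\le e^{-A/T}T^{-1/2}$. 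Combining the integral bound with the maximum bound finishes the proof.

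The computation is routine once set up this way; the only care needed is in the regime split (whether $A$ is below or above $T$, and whether the mode $2A/3$ lies inside $[1,T]$) and in checking that the stray term $\max_{[1,T]}g$ never dominates $e^{-A/T}(A^{-1/2}+T^{-1/2})$. The one genuinely analytic ingredient is the identity $\int_1^T e^{-A/t}t^{-3/2}\,dt=A^{-1/2}\int_{A/T}^{A}e^{-s}s^{-1/2}\,ds$, together with the elementary tail bound on $\int_a^\infty e^{-s}s^{-1/2}\,ds$ used above.
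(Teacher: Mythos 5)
Your argument is correct in substance and is the natural route: compare the sum with $\int_1^T e^{-A/t}t^{-3/2}\,dt$, evaluate the integral via $s=A/t$ (giving $A^{-1/2}\int_{A/T}^{A}e^{-s}s^{-1/2}\,ds$), split according to whether $A\le T$ (bound by $\Gamma(1/2)$, noting $e^{-A/T}=\Theta(1)$) or $A>T$ (tail bound $\int_a^\infty e^{-s}s^{-1/2}\,ds\le a^{-1/2}e^{-a}$), and control the sum-versus-integral discrepancy by the unimodality of $g(t)=e^{-A/t}t^{-3/2}$. All of these steps check out, and the regime analysis for the integral is exactly right.

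One imprecision worth fixing: your treatment of the stray term $\max_{[1,T]}g$ silently assumes $A=\Omega(1)$. If $A<3/2$ the mode $2A/3$ lies to the left of $1$, so $\max_{[1,T]}g$ is $g(1)=e^{-A}$ rather than $g(2A/3)$, and the chain $\Theta(A^{-3/2})=O(A^{-1/2})$ is false for small $A$. This regime is not vacuous: in the paper the fact is applied with parameters like $A=c^2/8$ where $c$ can be a small constant. The fix is one line: for $A<3/2$ the function $g$ is decreasing on all of $[1,T]$, so $\max_{[1,T]}g=e^{-A}\le 1\le O(A^{-1/2})$, while $T\ge 1$ gives $e^{-A/T}\ge e^{-3/2}=\Omega(1)$, so the maximum is again $O\bigl(e^{-A/T}(A^{-1/2}+T^{-1/2})\bigr)$. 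With that case added (and the implicit convention that the sum runs to $\lfloor T\rfloor$ and $T\ge 1$), your proof establishes the fact for all $A,T>0$ with uniform constants.
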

\begin{fact}\label{fact_sum2}
	For $A,T>0$, $\sum_{h=1}^T e^{-A/(T-h)}\cdot h^{-0.5}\leq O(e^{-A/T}\sqrt{T})$.
\end{fact}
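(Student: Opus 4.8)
The plan is to exploit a feature that distinguishes this sum from the one in Fact~\ref{fact_sum1}: the exponential factor here is uniformly dominated by its value at the endpoint of the summation range, so it can simply be pulled out of the sum. Concretely, for every $h$ with $1\le h\le T$ we have $0<T-h\le T-1<T$, and hence $\tfrac{A}{T-h}\ge\tfrac{A}{T}$, so $e^{-A/(T-h)}\le e^{-A/T}$; the boundary term $h=T$ (where $T-h=0$) contributes nothing, since $A>0$ forces us to read $e^{-A/0}=\lim_{t\to0^+}e^{-A/t}=0$. Therefore
\[
    \sum_{h=1}^T e^{-A/(T-h)}\cdot h^{-0.5}\ \le\ e^{-A/T}\sum_{h=1}^T h^{-0.5}.
\]

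It then remains to bound the harmonic-type tail $\sum_{h=1}^T h^{-0.5}$. Since $x\mapsto x^{-1/2}$ is decreasing, comparison with an integral gives $\sum_{h=1}^T h^{-0.5}\le 1+\int_1^T x^{-1/2}\,dx=2\sqrt{T}-1=O(\sqrt{T})$. Substituting this into the display above yields $\sum_{h=1}^T e^{-A/(T-h)}h^{-0.5}\le O\!\left(e^{-A/T}\sqrt{T}\right)$, which is exactly the claimed bound (with an explicit constant close to $2$, so there is no slack to worry about).

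I do not expect any real obstacle here; the argument is a one-line termwise domination followed by a standard integral estimate. The only points worth flagging in the write-up are the interpretation of the $h=T$ term and the contrast with Fact~\ref{fact_sum1}: there $e^{-A/h}$ is \emph{increasing} in $h$, so a naive termwise bound by the maximum $e^{-A/T}$ loses a polynomial factor and one genuinely has to locate where the mass of the sum concentrates, whereas here $e^{-A/(T-h)}$ is largest at the \emph{smallest} index $h=1$ and its maximum over the range is already at most the target value $e^{-A/T}$, so no such bookkeeping is needed.
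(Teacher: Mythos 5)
Your proof is correct: for every term in range $T-h<T$, so $e^{-A/(T-h)}\le e^{-A/T}$ (with the $h=T$ term rightly read as vanishing), and $\sum_{h=1}^{T}h^{-1/2}\le 2\sqrt{T}$, giving the bound with an explicit constant. This termwise domination plus the standard integral estimate is essentially the same elementary argument as the paper's (appendix) proof, so nothing further is needed.
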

\begin{fact}\label{fact_sum3}
    For $A_1,A_2,T>0$, 
	\begin{align*}
	    &\sum_{h=1}^T e^{-A_1/h-A_2/(T-h)}\cdot h^{-1.5} (T-h)^{-1.5} \\
	    \leq&\, O\left(T^{-1.5}e^{-(A_1+A_2)/2T}\cdot\left(1+\sqrt{(A_1+A_2)/T}+\sqrt{\frac{A_1A_2}{T(A_1+A_2)}}\right)\cdot\left(1/\sqrt{A_1}+1/\sqrt{A_2}\right)\right).
	\end{align*}
\end{fact}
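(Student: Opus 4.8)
\textbf{The plan} is to symmetrize and reduce to Fact~\ref{fact_sum1}. The summand is invariant under the simultaneous swap $h\mapsto T-h$, $A_1\leftrightarrow A_2$, so it suffices to bound the contribution of $1\le h\le T/2$ and then add the same bound with $A_1$ and $A_2$ interchanged. On the range $h\le T/2$ we have $T-h\ge T/2$, hence $(T-h)^{-1.5}=O(T^{-1.5})$ and $e^{-A_2/(T-h)}\le e^{-A_2/T}$. Pulling both of these factors out in front of the sum leaves the one-variable sum $\sum_{h=1}^{\lfloor T/2\rfloor}e^{-A_1/h}h^{-1.5}$, which is exactly the object bounded by Fact~\ref{fact_sum1}: taking its length parameter to be $\lfloor T/2\rfloor=\Theta(T)$ gives $O(e^{-2A_1/T}(1/\sqrt T+1/\sqrt{A_1}))$. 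Hence the $h\le T/2$ part is $O(T^{-1.5}e^{-2A_1/T-A_2/T}(1/\sqrt T+1/\sqrt{A_1}))$, and since $2A_1+A_2\ge(A_1+A_2)/2$ this is $O(T^{-1.5}e^{-(A_1+A_2)/(2T)}(1/\sqrt T+1/\sqrt{A_1}))$.

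Adding the mirror bound $O(T^{-1.5}e^{-(A_1+A_2)/(2T)}(1/\sqrt T+1/\sqrt{A_2}))$ coming from the $h>T/2$ part, I obtain
\[
	\sum_{h=1}^{T}e^{-A_1/h-A_2/(T-h)}h^{-1.5}(T-h)^{-1.5}=O\left(T^{-1.5}e^{-(A_1+A_2)/(2T)}\left(\frac{1}{\sqrt T}+\frac{1}{\sqrt{A_1}}+\frac{1}{\sqrt{A_2}}\right)\right).
\]
To see that this already implies the stated bound, note that $A_1+A_2\ge A_1$ gives $1/\sqrt T\le\sqrt{(A_1+A_2)/(TA_1)}\le\sqrt{(A_1+A_2)/T}\,(1/\sqrt{A_1}+1/\sqrt{A_2})$, while trivially $1/\sqrt{A_1}+1/\sqrt{A_2}\le(1+\sqrt{(A_1+A_2)/T}+\sqrt{A_1A_2/(T(A_1+A_2))})(1/\sqrt{A_1}+1/\sqrt{A_2})$; summing these two inequalities bounds $\frac{1}{\sqrt T}+\frac{1}{\sqrt{A_1}}+\frac{1}{\sqrt{A_2}}$ by the parenthetical factor in the statement times $1/\sqrt{A_1}+1/\sqrt{A_2}$, which finishes the proof.

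\textbf{The one point to be careful about} is keeping the exponent of the form $e^{-\Theta((A_1+A_2)/T)}$ after the crude step $e^{-A_2/(T-h)}\le e^{-A_2/T}$; this is exactly why the sum should be broken at the \emph{midpoint}, since splitting at a single endpoint would discard roughly half of the exponential decay and the resulting bound would no longer sit inside the claimed expression. Otherwise the argument is pure bookkeeping. The term $\sqrt{A_1A_2/(T(A_1+A_2))}$ in the statement is slack that this route does not use; a bound matching the true asymptotics would instead split at the minimizer $h^\star=T\sqrt{A_1}/(\sqrt{A_1}+\sqrt{A_2})$ of $A_1/h+A_2/(T-h)$, recovering the sharper exponent $e^{-(\sqrt{A_1}+\sqrt{A_2})^2/T}$ and producing that term organically, but this refinement is not needed to establish the inequality as stated.
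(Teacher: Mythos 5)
Your argument is correct. The symmetrization ($h\mapsto T-h$, $A_1\leftrightarrow A_2$), the crude bounds $(T-h)^{-1.5}=O(T^{-1.5})$ and $e^{-A_2/(T-h)}\le e^{-A_2/T}$ on the half-range $h\le T/2$, the application of Fact~\ref{fact_sum1} with length $\lfloor T/2\rfloor$, and the final bookkeeping showing $1/\sqrt{T}+1/\sqrt{A_1}+1/\sqrt{A_2}$ is dominated by the stated parenthetical factor times $1/\sqrt{A_1}+1/\sqrt{A_2}$ all check out; indeed the bound you prove, $O\bigl(T^{-1.5}e^{-(A_1+A_2)/(2T)}(1/\sqrt{T}+1/\sqrt{A_1}+1/\sqrt{A_2})\bigr)$, implies the stated one and is even a bit cleaner (in some regimes, e.g.\ $A_1\gg T\gg A_2$, it is strictly tighter in the polynomial factor). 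The route is genuinely different in spirit from the one the statement's shape reflects: the term $\sqrt{A_1A_2/(T(A_1+A_2))}$ and the product structure of the right-hand side arise naturally if one sums around the minimizer $h^\star=T\sqrt{A_1}/(\sqrt{A_1}+\sqrt{A_2})$ of $A_1/h+A_2/(T-h)$ and tracks the width of the peak, which is what the paper's appendix proof is set up to produce; your midpoint split deliberately gives up the sharp exponent $e^{-(\sqrt{A_1}+\sqrt{A_2})^2/T}$, and this is harmless precisely because the fact only claims the weaker exponent $e^{-(A_1+A_2)/(2T)}$ — a point you correctly identify. What your approach buys is a two-line reduction to Fact~\ref{fact_sum1} with no casework; what the sharper split buys is a form of the bound that matches the true asymptotics, which this paper's statement does not require. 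One cosmetic remark: as written the sum formally includes the degenerate endpoint $h=T$ (and $h=T-1$ is the real last term), so your mirror substitution implicitly treats the sum as running to $T-1$; that is an imprecision in the statement itself, not in your proof, but a sentence acknowledging it would make the write-up airtight.
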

\begin{fact}\label{fact_conv}
	For positive $2A_1\leq A_2$ and $B_1\geq B_2$, we have
	$
		\frac{A_1^2}{B_1}+\frac{7}{8}\cdot \frac{A_2^2}{B_2}\geq \frac{(A_1+A_2)^2}{B_1+B_2}.
		$
\end{fact}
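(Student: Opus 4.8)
The plan is to deduce this from the tight form of the Cauchy--Schwarz (Engel/Titu) inequality, keeping track of its exact error term. First I would record the algebraic identity
\[
    \frac{A_1^2}{B_1}+\frac{A_2^2}{B_2}=\frac{(A_1+A_2)^2}{B_1+B_2}+\frac{(A_1B_2-A_2B_1)^2}{B_1B_2(B_1+B_2)},
\]
valid for all positive $A_1,A_2,B_1,B_2$ and proved by clearing denominators (the numerator of the difference of the first two terms is exactly $(A_1B_2-A_2B_1)^2$). Using it, the claimed inequality is equivalent to the statement that the slack term absorbs the missing eighth of $A_2^2/B_2$, i.e.
\[
    \frac{(A_1B_2-A_2B_1)^2}{B_1B_2(B_1+B_2)}\geq \frac{1}{8}\cdot\frac{A_2^2}{B_2}.
\]

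Next I would invoke the two hypotheses to bound $|A_1B_2-A_2B_1|$ from below. Since $A_2\geq 2A_1$ and $B_1\geq B_2$, we get $A_1B_2\leq \tfrac12 A_2B_2\leq \tfrac12 A_2B_1$, hence $A_2B_1-A_1B_2\geq \tfrac12 A_2B_1\geq 0$ and therefore $(A_1B_2-A_2B_1)^2\geq \tfrac14 A_2^2B_1^2$. Substituting, the left-hand side above is at least $\dfrac{A_2^2B_1}{4B_2(B_1+B_2)}$, and a second use of $B_1\geq B_2$ gives $\dfrac{B_1}{B_1+B_2}\geq \tfrac12$, so this is at least $\dfrac{A_2^2}{8B_2}$, which is exactly what is needed; rearranging back through the identity produces $\dfrac{A_1^2}{B_1}+\tfrac78\cdot\dfrac{A_2^2}{B_2}\geq \dfrac{(A_1+A_2)^2}{B_1+B_2}$.

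I do not anticipate a real obstacle: the only mildly non-obvious move is to use Cauchy--Schwarz with its exact remainder rather than as a plain inequality, after which the hypothesis $2A_1\leq A_2$ is precisely what keeps the cross term $A_1B_2$ from cancelling too much of $A_2B_1$. An equivalent route, if one prefers not to name the identity, is to clear all denominators, reduce to the polynomial inequality $8(A_2B_1-A_1B_2)^2\geq A_2^2B_1(B_1+B_2)$, and then plug in $A_2B_1-A_1B_2\geq \tfrac12 A_2B_1$ and $B_1+B_2\leq 2B_1$. The constant $\tfrac78$ is in fact tight: both estimates above hold with equality when $A_2=2A_1$ and $B_1=B_2$, in which case the two sides of the claimed inequality coincide (e.g.\ both equal $9/2$ when $A_1=1,\ A_2=2,\ B_1=B_2=1$), so I would not try to optimize it.
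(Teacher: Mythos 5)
Your proof is correct: the Engel-form identity $\frac{A_1^2}{B_1}+\frac{A_2^2}{B_2}=\frac{(A_1+A_2)^2}{B_1+B_2}+\frac{(A_1B_2-A_2B_1)^2}{B_1B_2(B_1+B_2)}$ checks out by clearing denominators, the hypotheses $2A_1\leq A_2$ and $B_1\geq B_2$ do give $A_2B_1-A_1B_2\geq\tfrac12 A_2B_1$ and $\frac{B_1}{B_1+B_2}\geq\tfrac12$, and these combine to exactly the required bound $\frac{(A_1B_2-A_2B_1)^2}{B_1B_2(B_1+B_2)}\geq\frac{A_2^2}{8B_2}$, with equality at $A_2=2A_1$, $B_1=B_2$ confirming the constant $\tfrac78$ is tight. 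The paper only defers this fact's proof to its appendix, and any proof must come down to the same elementary manipulation (clearing denominators and using the two hypotheses to control the cross term), so your route is essentially the canonical one; isolating the exact Cauchy--Schwarz remainder rather than expanding blindly is a clean way to organize it and makes the tightness of $\tfrac78$ transparent.
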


The following lemma asserts that not too many non-empty intervals have few elements.
\begin{lemma}\label{lem_small_interval}
  For $l\leq m/\sqrt{k}=O(\sqrt{L})$, the expected number of intervals that have between $l/2$ and $l$ elements is at most $O(lk/m+1)$.
\end{lemma}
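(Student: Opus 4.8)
The plan is to pass to expectations and rephrase in terms of a sliding window. Writing $N_j=|S\cap I_j|$ with $I_j=[\Delta+jL,\Delta+(j+1)L)$ and using that $\Delta$ is uniform in $[1,L]$, for a fixed realization of $S$ the number of indices $j$ with $N_j\in[l/2,l]$ equals $\frac1L\bigl|\{t:[t,t+L)\subseteq[x,y],\ |S\cap[t,t+L)|\in[l/2,l]\}\bigr|$. Therefore
\[
	\mathbb{E}\bigl[\#\{j:N_j\in[l/2,l]\}\bigr]=\frac1L\sum_{t}\Pr_S\bigl[\,|S\cap[t,t+L)|\in[l/2,l]\,\bigr],
\]
the sum ranging over the $\Theta(U)$ window positions $t$ inside the block (recall $L\ll U$). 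I split these into the $O(L)$ positions lying within distance $O(L)$ of an endpoint of $[x,y]$ — bounding the corresponding probability by $1$, these contribute at most $\frac1L\cdot O(L)=O(1)$, which is the additive term — and the remaining \emph{interior} positions, for each of which I will prove $\Pr_S[\,|S\cap[t,t+L)|\in[l/2,l]\,]=O(l/m)$. Summed over the $\Theta(U)$ interior positions and divided by $L$ this is $O\bigl(\tfrac UL\cdot\tfrac lm\bigr)=O(lk/m)$, since $U/L=\Theta(k)$.

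Fix an interior $t$ and set $N:=|S\cap[t,t+L)|$; I estimate $\Pr[N=n]$ for each $n$. Let $a$ be the index of the first input point $\ge t$. Conditioning on the four positions $p_0=s_{a-1}<t\le p_1=s_a$ and $p_2=s_{a+n-1}\le t+L-1<t+L\le p_3=s_{a+n}$ breaks the configuration into five independent stretches, so applying \eqref{eqn_prob_S} stretch by stretch (just as in the proof of Lemma~\ref{lem_good_blocks}),
\[
	\Pr[N=n]=\frac{1}{M(U,m)}\sum_{a}\ \sum_{\substack{p_0<t\le p_1\le p_2\le t+L-1\\ t+L\le p_3\le U}}M(p_0-1,a-2)\,C_{p_1-p_0-1}\,M(p_2-p_1-1,n-2)\,C_{p_3-p_2-1}\,M(U-p_3,m-a-n).
\]
Substituting the upper bound of Proposition~\ref{prop_M} and the standard estimate $C_i=\Theta(4^i/(i+1)^{3/2})$ into each factor, all powers of two cancel against $M(U,m)$ up to a constant, and we are left with a five-fold sum whose summand is a polynomial in $a,n,m$ times $p_0^{-3/2}(p_1-p_0)^{-3/2}(p_2-p_1)^{-3/2}(p_3-p_2)^{-3/2}(U-p_3)^{-3/2}$ times the Gaussian factor $\exp\bigl(-\tfrac{a^2}{4p_0}-\tfrac{n^2}{4(p_2-p_1)}-\tfrac{(m-a-n)^2}{4(U-p_3)}+\tfrac{m^2}{4U}\bigr)$, subject to the hard constraints $a\le p_0+1$, $n\le p_2-p_1+1$, $m-a-n\le U-p_3$ coming from the fact that a stretch of combinatorial length $B$ contains at most $B+1$ points.

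It remains to evaluate this sum one variable at a time. Summing the window width $h=p_2-p_1$ over $[n,L]$, the term $h^{-3/2}e^{-n^2/(4h)}$ is exactly the setting of Fact~\ref{fact_sum1} and yields $O\bigl(e^{-n^2/(4L)}(L^{-1/2}+n^{-1})\bigr)$; summing the two crossing gaps $p_1-p_0$ and $p_3-p_2$ (each a Catalan factor $\asymp(\cdot)^{-3/2}$ convolved against a neighbouring $M$) uses Facts~\ref{fact_sum1}--\ref{fact_sum3}, and Fact~\ref{fact_conv} merges the Gaussian exponents of two adjacent stretches into one; the final sum over $a$ and $p_0$, coupled through $a\le p_0+1$ and the Gaussian in $a$, is again of Fact~\ref{fact_sum1}/\ref{fact_sum2} type, and there Fact~\ref{fact_conv} shows that the Gaussian in $a$ more than compensates the $+\tfrac{m^2}{4U}$ term. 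Carrying this through produces a bound on $\Pr[N=n]$ whose sum over $n\in[l/2,l]$ is $O(l/m)$, which is the interior estimate. (Proposition~\ref{prop_M} is only an upper bound where a stretch is nearly full, e.g. $n\approx p_2-p_1$, but there it only over-counts and those ranges are sub-dominant, so this loss is harmless.)

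The step I expect to be genuinely delicate is exactly this last nested summation: the five sums are coupled both through the ordering and length constraints and through the index $a$ occurring in two of the Gaussian exponents, so one must peel them off in the right order and verify that each application of Facts~\ref{fact_sum1}--\ref{fact_conv} stays in range — in particular, tracking the precise power of $n$ so that $\sum_{n\in[l/2,l]}\Pr[N=n]$ sums to $O(l/m)$ rather than, say, $O(l\log l/m)$. The near-boundary positions are by comparison routine, since there one merely uses $\Pr[\cdot]\le1$.
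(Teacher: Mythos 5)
Your sliding-window reformulation (average over $\Delta$, reduce to a per-position probability $\Pr[\,|S\cap[t,t+L)|\in[l/2,l]\,]$, then sum over the $\Theta(U)$ positions and divide by $L$) is a genuinely different decomposition from the paper's, which instead bounds, for each index $c$, the probability that $s_c$ is the \emph{first} element of a sparse interval, using the random cut position inside $[s_{c-1},s_c]$ supplied by $\Delta$. The reduction itself is sound, but the proof as written has a genuine gap: the entire analytic core --- evaluating the five-fold constrained sum with the Gaussian factors --- is only asserted (``carrying this through produces a bound \dots whose sum over $n\in[l/2,l]$ is $O(l/m)$''), and you yourself flag this nested summation as the delicate step you have not executed. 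That computation is of the same order of difficulty as the whole of the paper's proof (which consists precisely of such a peeling argument via Facts~\ref{fact_sum1}--\ref{fact_conv}), so deferring it leaves the lemma unproved.

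Worse, the intermediate claim you promise to establish is false as stated. The conditional distribution of $S$ inside the block carries the Catalan factors $C_{s_1-1}$ and $C_{U-s_m}$ for the gaps to the block boundaries, so the block endpoints behave like pinned points and the density of $S$ is elevated near both ends of $[x,y]$: for instance $\Pr[s_1\ge D]=\Theta(D^{-1/2})$ by Proposition~\ref{prop_M}. Consequently, for a window at distance $D$ from an endpoint with $L\ll D\ll U$ one has $\Pr[\,|S\cap[t,t+L)|\in[l/2,l]\,]=\Theta(l/\sqrt{D})$, which exceeds your uniform interior bound $O(l/m)$ by up to a $\Theta(\sqrt{k})=\Theta(\log^2 B)$ factor; excising only an $O(L)$-neighborhood of the endpoints does not remove these positions. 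The lemma still holds because such windows are few --- one has $\frac1L\sum_{t}l/\sqrt{\min(t,U-t)}=O(l\sqrt{U}/L)=O(lk/m)$ --- so your strategy is repairable, but it requires proving a position-dependent per-window bound and integrating it over $t$, rather than the uniform $O(l/m)$ you aim for. (The same boundary effect appears in the paper's proof as the $\sqrt{L}/c$ terms in the final sum over $c$, which contribute the $l\log m/\sqrt{L}$ term there.) As it stands, if you carried out your five-fold summation honestly you would not obtain the claimed uniform estimate, so the argument would stall at exactly the step you left unverified.
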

\begin{proof}
	Instead of upper-bounding the expected number of such intervals directly, we are going to count how many elements can be the \emph{first} element in such a interval in expectation.
	By linearity of expectation, it suffices to estimate for each element $s_c$, what is the probability that it becomes such an element.

	Fix an integer $c\in[2, m-3l]$.
	$s_c$ can be the first element in a interval that have at most $l$ elements, only if
	\begin{itemize}
		\item $s_{c-1}$ is in an earlier interval, and
		\item $s_{c+l}$ is in a later interval.
	\end{itemize}

	To bound the probability, let us first estimate the probability that $s_{c+l}=z$ conditioned on $s_c=y$ for some $y\leq U-3L$ and $z>y$.
	By Proposition~\ref{prop_M}, we have
	\begin{align*}
		\Pr[s_{c+l}=z\mid s_c=y]&=\frac{M(z-y-1,l-1)\cdot M(U-z,m-c-l)}{M(U-y,m-c)} \\
		&=O\left(\frac{l(m-c-l)2^{2(U-y)-(m-c)}\cdot \exp(-\frac{l^2}{4(z-y)}-\frac{(m-c-l)^2}{4(U-z)})}{M(U-y,m-c)\cdot (z-y)^{1.5}(U-z)^{1.5}}\right).
	\end{align*}
	By Cauchy-Schwarz, $\frac{l^2}{4(z-y)}+\frac{(m-c-l)^2}{4(U-z)}\geq \frac{(m-c)^2}{4(U-y)}$.
	Thus, when $z\leq\frac{U+y}{2}$ (which implies $1/(U-z)\le 2/(U-y)$), the probability is at most
	\[
		\Pr[s_{c+l}=z\mid s_c=y]\leq O\left(\frac{l(m-c-l)2^{2(U-y)-(m-c)}}{M(U-y,m-c)}\cdot \frac{\exp(-\frac{(m-c)^2}{4(U-y)})}{(z-y)^{1.5}(U-y)^{1.5}}\right).
	\]
	When $z>\frac{U+y}{2}$ (which implies $1/(U-y)\ge 1/(2z-2y)$), since $2l<m-c-l$ and $z-y>U-z$, we have
	\begin{align*}
		\Pr[s_{c+l}=z\mid s_c=y]
		&=O\left(\frac{l(m-c-l)2^{2(U-y)-(m-c)}\cdot \exp(-\frac{l^2}{4(z-y)}-\frac{(m-c-l)^2}{4(U-z)})}{M(U-y,m-c)\cdot (z-y)^{1.5}(U-z)^{1.5}}\right).
	\end{align*}
	and by Fact~\ref{fact_conv},
	\begin{align*}
		\frac{l^2}{4(z-y)}+\frac{7}{8}\cdot \frac{(m-c-l)^2}{4(U-z)}\geq \frac{(m-c)^2}{4(U-y)}.
	\end{align*}
	The probability is also at most
	\[
		\Pr[s_{c+l}=z\mid s_c=y]\leq O\left(\frac{l(m-c-l)2^{2(U-y)-(m-c)}}{M(U-y,m-c)}\cdot\frac{e^{-\frac{(m-c)^2}{4(U-y)}-\frac{(m-c-l)^2}{32(U-z)}}}{(U-y)^{1.5}(U-z)^{1.5}}\right).
	\]
	Thus, for $U-y\geq 3L$ and $y<x\leq y+L$, we have
	\begin{align}
		\Pr[s_{c+l}\geq x\mid s_c=y]&=\sum_{z=x}^{U+y-x}\Pr[s_{c+l}=z\mid s_c=y]+\sum_{z=U+y-x+1}^{U}\Pr[s_{c+l}=z\mid s_c=y] \nonumber\\
		&\leq O\left(\frac{l(m-c-l)2^{2(U-y)-(m-c)}}{M(U-y,m-c)}\cdot \frac{e^{-\frac{(m-c)^2}{4(U-y)}}}{(U-y)^{1.5}}\cdot \left(\sum_{z=x}^{U+y-x}\frac{1}{(z-y)^{1.5}}+\sum_{z=U+y-x}^{U}\frac{e^{-\frac{(m-c-l)^2}{32(U-z)}}}{(U-z)^{1.5}}\right)\right)\nonumber\\
		&\leq O\left(\frac{l(m-c-l)2^{2(U-y)-(m-c)}}{M(U-y,m-c)}\cdot \frac{e^{-\frac{(m-c)^2}{4(U-y)}}}{(U-y)^{1.5}}\cdot \left(\frac{1}{\sqrt{x-y}}+\sum_{U-z=1}^{x-y}\frac{e^{-\frac{(m-c)^2}{32(U-z)}}}{(U-z)^{1.5}}\right)\right) \nonumber\\
		\intertext{which by Fact~\ref{fact_sum1}, is at most}
		&\leq O\left(\frac{l(m-c-l)2^{2(U-y)-(m-c)}}{M(U-y,m-c)}\cdot \frac{e^{-\frac{(m-c)^2}{4(U-y)}}}{(U-y)^{1.5}}\cdot \left(\frac{1}{\sqrt{x-y}}+\frac{e^{-\frac{(m-c)^2}{32(x-y)}}}{m-c}\right)\right).\label{eqn_bound_cl}
	\end{align}

	Similarly, for $y\geq 2L$ and $x>y-L$ (which implies $x>y/2$), we have
	\begin{align}
		\Pr[s_{c-1}\leq x\mid s_c=y]&=\sum_{z=1}^{x}\frac{C_{y-z-1}\cdot M(z-1,c-2)}{M(y-1,c-1)} \nonumber \\
		&\leq O\left(\frac{1}{M(y-1,c-1)}\sum_{z=1}^x\frac{c2^{2y-c}e^{-c^2/4z}}{(y-z)^{1.5}z^{1.5}}\right) \nonumber \\
		&\leq O\left(\frac{c2^{2y-c}}{M(y-1,c-1)}\left(\sum_{z=y-x}^x\frac{e^{-c^2/4z}}{(y-z)^{1.5}z^{1.5}}+\sum_{z=1}^{y-x}\frac{e^{-c^2/4z}}{(y-z)^{1.5}z^{1.5}}\right)\right) \nonumber \\
		&\leq O\left(\frac{c2^{2y-c}}{M(y-1,c-1)}\cdot\frac{1}{y^{1.5}} \left(\frac{e^{-\frac{c^2}{4y}}}{\sqrt{y-x}}+\frac{e^{-\frac{c^2}{4(y-x)}}}{\sqrt{y-x}}+\frac{e^{-\frac{c^2}{4(y-x)}}}{c}\right)\right) \nonumber \\
		&\leq O\left(\frac{c2^{2y-c}}{M(y-1,c-1)}\cdot\frac{e^{-\frac{c^2}{4y}}}{y^{1.5}} \left(\frac{1}{\sqrt{y-x}}+\frac{e^{-\frac{c^2}{8(y-x)}}}{c}\right)\right).\label{eqn_bound_c1}
	\end{align}

	The probability that $2L\leq s_c\leq U-3L$, $c$ is the first element in a interval with at most $l$ elements, is at most
	\begin{align*}
		&\sum_{y=2L}^{U-3L}\sum_{b=1}^{L}\frac{1}{L}\Pr[s_c=y, s_{c-1}\leq y-b, s_{c+l}>y-b+L]\\
		&=\sum_{y=2L}^{U-3L}\frac{\Pr[s_c=y]}{L}\sum_{b=1}^{L} \Pr[s_{c-1}\leq y-b,s_{c+l}>y-b+L\mid s_c=y]\\
		&\leq \sum_{y=2L}^{U-3L}\frac{M(y-1,c-1)M(U-y,m-c)}{L\cdot M(U,m)}\sum_{b=1}^{L}O\left(\frac{c2^{2y-c}}{M(y-1,c-1)}\cdot\frac{e^{-\frac{c^2}{4y}}}{y^{1.5}} \left(\frac{1}{\sqrt{b}}+\frac{e^{-\frac{c^2}{8b}}}{c}\right)\right. \\
		&\cdot\left. \frac{l(m-c-l)2^{2(U-y)-(m-c)}}{M(U-y,m-c)}\cdot \frac{e^{-\frac{(m-c)^2}{4(U-y)}}}{(U-y)^{1.5}}\cdot \left(\frac{1}{\sqrt{L-b}}+\frac{e^{-\frac{(m-c)^2}{32(L-b)}}}{m-c}\right)\right) \\
		&\leq O\left(\frac{2^{2U-m}cl(m-c-l)}{L\cdot M(U,m)}\sum_{y=1}^{U}\frac{e^{-\frac{c^2}{4y}-\frac{(m-c)^2}{4(U-y)}}}{y^{1.5}(U-y)^{1.5}}\sum_{b=1}^{L}\left(\frac{1}{\sqrt{b}}+\frac{e^{-\frac{c^2}{8b}}}{c}\right)\left(\frac{1}{\sqrt{L-b}}+\frac{e^{-\frac{(m-c)^2}{32(L-b)}}}{m-c}\right)\right) \\
		&\leq O\left(\frac{cl(m-c-l)U^{1.5}}{mL}\left(\sum_{y=1}^{U}\frac{e^{-\frac{c^2}{4y}-\frac{(m-c)^2}{4(U-y)}}}{y^{1.5}(U-y)^{1.5}}\right)\sum_{b=1}^{L}\left(\frac{1}{\sqrt{b}}+\frac{e^{-\frac{c^2}{8b}}}{c}\right)\left(\frac{1}{\sqrt{L-b}}+\frac{e^{-\frac{(m-c)^2}{32(L-b)}}}{m-c}\right)\right).
	\end{align*}
	By Fact~\ref{fact_sum3}, we have
	\begin{align*}
		\sum_{y=1}^{U}\frac{e^{-\frac{c^2}{4y}-\frac{(m-c)^2}{4(U-y)}}}{y^{1.5}(U-y)^{1.5}}&\leq O\left(U^{-1.5}e^{-(c^2+(m-c)^2)/2U}\cdot \left(1+\sqrt{\frac{c^2+(m-c)^2}{U}}+\sqrt{\frac{c^2(m-c)^2}{U(c^2+(m-c)^2}}\right)\cdot(1/c+1/(m-c))\right) \\
		&\leq O\left(U^{-1.5}(1/c+1/(m-c))\right).
	\end{align*}
	Next, we bound the last sum:
	\begin{align*}
		&\sum_{b=1}^{L}\left(\frac{1}{\sqrt{b}}+\frac{e^{-\frac{c^2}{8b}}}{c}\right)\left(\frac{1}{\sqrt{L-b}}+\frac{e^{-\frac{(m-c)^2}{32(L-b)}}}{m-c}\right) \\
		&\leq \sum_{b=1}^L\frac{1}{\sqrt{b(L-b)}}+\sum_{b=1}^L\frac{1}{(m-c)\sqrt{b}}+\sum_{b=1}^L\frac{1}{c\sqrt{L-b}}+\sum_{b=1}^Le^{-\frac{c^2}{8b}-\frac{(m-c)^2}{32(L-b)}} \\
		&\leq O\left(1+\frac{\sqrt{L}}{m-c}+\frac{\sqrt{L}}{c}\right)+L\cdot e^{-\frac{m^2}{32L}} \\
		&\leq O\left(1+\frac{\sqrt{L}}{m-c}+\frac{\sqrt{L}}{c}\right),
	\end{align*}
	where the last inequality is due to our assumption $L\log L=o(U)$.
	Hence, for $2\leq c\leq m-3l$, we have
	\begin{align*}
		&\Pr[2L\leq s_c\leq U-3L, \textrm{$c$ is the first element of a interval with at most $l$ elements}] \\
		&\sum_{y=2L}^{U-3L}\frac{1}{L}\sum_{b=1}^{L}\Pr[s_c=y, s_{c-1}\leq y-b, s_{c+l}>y-b+L]\\
		&\leq O\left(\frac{cl(m-c-l)}{mL}\left(\frac{1}{c}+\frac{1}{m-c}\right)\left(1+\frac{\sqrt{L}}{m-c}+\frac{\sqrt{L}}{c}\right)\right).
	\end{align*}
	Next, we take the sum over $c$:
	\begin{align*}
		&\sum_{c=2}^{m-3l} \frac{cl(m-c-l)}{mL}\left(\frac{1}{c}+\frac{1}{m-c}\right)\left(1+\frac{\sqrt{L}}{m-c}+\frac{\sqrt{L}}{c}\right) \\
		&\leq 2\sum_{c=1}^{m/2} \frac{cl(m-c)}{mL}\cdot \frac{2}{c}\cdot \left(1+\frac{2\sqrt{L}}{c}\right) \\
		&\leq O\left(\sum_{c=1}^{m/2}\frac{l}{L}\left(1+\frac{\sqrt{L}}{c}\right)\right) \\
		&\leq O\left(lm/L+(l\log m)/\sqrt{L}\right) \\
		&=O(lk/m),
	\end{align*}
	where the last inequality is due to our assumption $L=o(U/\log U)$.

	So far, we obtained an upper bound on the expected number of non-empty intervals that
	\begin{itemize}
		\item have at most $l$ elements;
		\item begin with $s_c$ such that $2L\leq s_c\leq U-3L$ and $2\leq c\leq m-3l$.
	\end{itemize}
	However, there could be at most $O(1)$ intervals that begin with an $s_c$ with $s_c\leq 2L$ or $s_c\geq U-3L$, or $c=1$.
	For $c>m-3l$, there could be at least $O(1)$ intervals with at least $l/2$ elements.

	Hence, the expected number of non-empty intervals with at most $l$ and at least $l/2$ elements is $O(lk/m+1)$.
\end{proof}


Let $S'\subseteq S$ be an (arbitrarily) jointly distributed subset of $S$, and $\E[|S'|]=(1-\epsilon)m$.
The above bound implies an upper bound on the number of non-empty intervals that do not contain any point in $S'$.
\begin{corollary}\label{cor_unk_int}
    Let $S'\subseteq S$ be an (arbitrarily) jointly distributed subset of $S$, and $\E[|S'|]=(1-\epsilon)m$.
    The number of non-empty intervals that do no contain any point in $S'$ has expectation at most $O(\sqrt{\epsilon k}+\log m)$, where the expectation is taken over the joint distribution of $S$, $\Delta$ and $S'$.
\end{corollary}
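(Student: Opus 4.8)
The plan is to reduce to Lemma~\ref{lem_small_interval} by a dyadic decomposition of the ``bad'' intervals (the non-empty ones containing no point of $S'$) according to how many points of $S$ they contain, using one elementary observation: if a non-empty interval contains no point of $S'$, then \emph{every} point of $S$ inside it lies in $S\setminus S'$, and $\E[|S\setminus S'|]=\epsilon m$.

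I would fix a threshold $l_0$ (to be optimized at the end) and split the bad intervals into the \emph{heavy} ones, containing at least $l_0$ points of $S$, and the \emph{light} ones. Each heavy bad interval contributes at least $l_0$ points to $S\setminus S'$, and distinct intervals have disjoint point sets, so the number of heavy bad intervals is at most $|S\setminus S'|/l_0$ pointwise; taking expectation over the joint law of $(S,\Delta,S')$ bounds it by $\epsilon m/l_0$. For the light bad intervals I would forget about $S'$ entirely and bound their number by the number of \emph{all} non-empty intervals with fewer than $l_0$ points of $S$. Decomposing these by the dyadic scale of their point count and applying Lemma~\ref{lem_small_interval} with $l=2^{j+1}$ for $j=0,1,\ldots,O(\log l_0)$ — legitimate provided each $2^{j+1}\le m/\sqrt k$, i.e.\ provided $l_0=O(m/\sqrt k)$, which I will arrange — each scale contributes $O(2^{j+1}k/m+1)$ in expectation, so the geometric sum is $O(l_0 k/m+\log l_0)$.

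Combining, the expected number of bad intervals is $O(l_0 k/m+\log l_0+\epsilon m/l_0)$. Taking $l_0$ of order $m\sqrt{\epsilon/k}$ (which is $O(m/\sqrt k)$ since $\epsilon\le 1$) balances the first and last terms to $O(\sqrt{\epsilon k})$, while $\log l_0\le\log m$, giving the claimed $O(\sqrt{\epsilon k}+\log m)$. If $\epsilon$ is so small that this value of $l_0$ would fall below $1$ (i.e.\ $\epsilon m^2<k$), I would instead take $l_0=1$, so there are no light intervals and the bound reduces to $\E[|S\setminus S'|]=\epsilon m\le\sqrt{\epsilon k}$. The only real delicacy is the bookkeeping that keeps every dyadic scale inside the range $l\le m/\sqrt k=O(\sqrt L)$ where Lemma~\ref{lem_small_interval} applies, together with this small-$\epsilon$ corner case; the substantive content is just the ``all points of a heavy bad interval are unknown'' observation and the choice of $l_0$.
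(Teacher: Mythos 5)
Your proposal is correct and follows essentially the same route as the paper: split at a threshold of order $m\sqrt{\epsilon/k}$, bound the few-element intervals via Lemma~\ref{lem_small_interval} (with the dyadic summation you spell out being implicit in the paper's one-line invocation), and bound the heavy ``unknown'' intervals by the counting argument $\epsilon m/l_0=\sqrt{\epsilon k}$. Your extra bookkeeping on the dyadic scales and the small-$\epsilon$ corner case is harmless and only makes explicit what the paper leaves implicit.
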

\begin{proof}
    By Lemma~\ref{lem_small_interval}, the number of intervals that have at most $m\sqrt{\epsilon /k}$ elements is at most $O(\sqrt{\epsilon k}+\log m)$ (regardless of whether it contains a point in $S'$).
    On the other hand, the expected number of intervals that have at least $m\sqrt{\epsilon /k}$ elements that do not contain any point in $S'$ is at most
    \[
        \frac{\epsilon m}{m\sqrt{\epsilon/k}}=\sqrt{\epsilon k}.
    \]
    Summing up the two parts proves the corollary.
\end{proof}

Next lemma considers the gaps between adjacent non-empty intervals.
It asserts that most non-empty intervals are consecutive intervals, and very few pairs are far away.

\begin{lemma}\label{lem_large_gap}
	The expected number of non-empty interval pairs that have between $t-1$ and $2t$ empty intervals and no non-empty interval in between is at most $O(\sqrt{k/t}+1)$.
	In particular, the expected number of non-empty intervals is at most $O(\sqrt{k}+\log K)$.
\end{lemma}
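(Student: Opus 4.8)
The plan is to follow the template of Lemma~\ref{lem_small_interval}: bound the expected number of these pairs by summing, over every index $c\in[2,m]$, the probability that $s_{c-1}$ and $s_c$ are the last and first elements of such a pair --- that is, that they lie in distinct intervals with exactly $q$ (necessarily empty) intervals strictly between them for some $q\in[t-1,2t]$. Because the offset $\Delta\in[1,L]$ is independent of $S$, conditioned on $s_c-s_{c-1}=\delta$ this probability is a function $\phi_t(\delta)$ of $\delta$ alone: it is at most $1$; a gap of $q$ intervals forces $\delta/L\in(q,q+2)$, so $\phi_t$ is supported on $\delta\in\bigl((t-1)L,(2t+2)L\bigr)$; and when $\delta<L$ the two points lie in distinct (hence adjacent) intervals only if the unique interval boundary in $(s_{c-1},s_c]$ is hit, so $\phi_1(\delta)=\delta/L$ there. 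Hence the expected number of such pairs is
\[
 \sum_{c=2}^{m}\sum_{\delta\ge1}\Pr[s_c-s_{c-1}=\delta]\,\phi_t(\delta)\;+\;O(1),
\]
where the $O(1)$ accounts for the $O(1)$ intervals lying within $2L$ of either end of $[x,y]$ (handled exactly as in Lemma~\ref{lem_small_interval}), and for $t=\Omega(K)$, in which case trivially only $O(1)=O(\sqrt{k/t}+1)$ pairs can have a gap of $\Omega(K)$ intervals.

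The crux is the uniform bound $\Pr[s_c-s_{c-1}=\delta]=O(\delta^{-1.5})$, valid for all $2\le c\le m$ and $\delta\le U/2$. Writing
\[
 \Pr[s_c-s_{c-1}=\delta]=\frac{C_{\delta-1}}{M(U,m)}\sum_{y}M(y-1,c-2)\,M(U-y-\delta,m-c)
\]
and invoking Proposition~\ref{prop_M} on each factor (as in Lemma~\ref{lem_small_interval}, the estimate remains valid, with crude slack, in the regime where an index is comparable to its length), the powers of $2$ telescope against $M(U,m)=\Theta\bigl(m\,2^{2U-m}U^{-1.5}\bigr)$ (using $U=\Theta(m^2)$). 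What remains is a sum of the shape $\sum_{y}y^{-1.5}(U-\delta-y)^{-1.5}\exp\!\bigl(-\tfrac{(c-2)^2}{4y}-\tfrac{(m-c)^2}{4(U-\delta-y)}\bigr)$, which Fact~\ref{fact_sum3} bounds by $O\bigl((U-\delta)^{-1.5}(\tfrac1{c-1}+\tfrac1{m-c})\bigr)$: indeed $A_1+A_2=\Theta(m^2)=\Theta(U-\delta)$ for $\delta\le U/2$, so the bracket factor of Fact~\ref{fact_sum3} is $O(1)$ and its exponential prefactor is at most $1$. Since $C_{\delta-1}=\Theta\bigl(2^{2\delta}\delta^{-1.5}\bigr)$ and $c(m-c)\bigl(\tfrac1{c-1}+\tfrac1{m-c}\bigr)=O(m)$ for $c\ge2$, all the $c$-dependence collapses and the stated bound follows.

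Substituting this in, $\sum_{\delta}\Pr[s_c-s_{c-1}=\delta]\,\phi_t(\delta)\le O(1)\cdot\sum_{\delta}\delta^{-1.5}\phi_t(\delta)$. On the support $\delta\in\bigl((t-1)L,(2t+2)L\bigr)$ this is $O\bigl((tL)^{-1/2}\bigr)$ when $t\ge2$, and for $t=1$ the factor $\delta/L$ on $\delta<L$ gives $\tfrac1L\sum_{\delta\le L}\delta^{-1/2}+O(L^{-1/2})=O(L^{-1/2})$ --- the same bound. Summing over the $m$ choices of $c$ and using $L=\lfloor m^2/k\rfloor$, the expected number of pairs is $O\bigl(m/\sqrt{tL}\bigr)+O(1)=O\bigl(\sqrt{k/t}+1\bigr)$, which is the first assertion.

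For the ``in particular'' statement, list the non-empty intervals of $[x,y]$ in order; they form $N$ intervals and $N-1$ consecutive pairs, each pair separated by some number $g\in\{0,1,\dots,K\}$ of empty intervals. Every such $g$ satisfies $t-1\le g\le 2t$ for some power of two $t\in\{1,2,4,\dots\}$ with $t\le K$ (take $t=1$ when $g\le2$, and $t=2^{\lfloor\log_2 g\rfloor}$ otherwise), so every pair is counted by at least one dyadic $t$. Therefore
\[
 \E[N]-1\le\sum_{i=0}^{\lceil\log_2 K\rceil}O\bigl(\sqrt{k/2^{i}}+1\bigr)=O(\sqrt{k})+O(\log K),
\]
as claimed. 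The only delicate points are the uniform estimate of the second paragraph and the bookkeeping that isolates the offset's contribution $\phi_t$ for small $t$; everything else is routine.
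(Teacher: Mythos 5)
Your proposal is correct and takes essentially the same route as the paper: it counts adjacent non-empty interval pairs via consecutive elements of $S$, bounds the relevant probabilities with the Catalan convolution estimates of Proposition~\ref{prop_M} together with the uniform offset $\Delta$, and obtains the ``in particular'' claim by the same dyadic summation over $t$. Your intermediate uniform bound $\Pr[s_c-s_{c-1}=\delta]=O(\delta^{-1.5})$ (obtained by keeping the exponential factors and applying Fact~\ref{fact_sum3} to the sum over the absolute position) is simply a cleaner repackaging of the paper's joint-position computation, and if anything handles the summation over the index $c$ more carefully.
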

\begin{proof}
	Let us first bound the expected number of non-empty interval pairs with exactly $t-1$ empty intervals (and no non-empty interval) in between, i.e., adjacent non-empty interval pairs $(\mathcal{B}_1,\mathcal{B}_2)$ that are $t$ intervals far from each other.
	Instead of upper-bounding it directly, we are going to estimate for each element $s_c$, the probability that it is the last element in $\mathcal{B}_1$ and $s_{c+1}$ is the first element in $\mathcal{B}_2$.
	Then by linearity of expectation, taking the sum over $c$ gives us the desired bound.
	Indeed, for $1\leq c\leq m-1$ and $(t-1)L<y-x<(t+1)L$, the probability that $s_c=x,s_{c+1}=y$ and they form such a pair is at most
	\begin{align*}
		&\Pr[s_c=x,s_{c+1}=y]\cdot \left(1-\frac{|y-x-tL|}{L}\right) \\
		&=\frac{M(x-1,c-1)M(U-y,m-c)C_{y-x-1}}{M(U,m)}\cdot \left(1-\frac{|y-x-tL|}{L}\right) \\
		&\leq O\left(\frac{c(m-c)\cdot U^{1.5}}{m\cdot x^{1.5}(y-x)^{1.5}(U-y)^{1.5}}\cdot\frac{L-|y-x-tL|}{L}\right)
	\end{align*}
	For $x\leq U-5tL$, we have
	\begin{align*}
		&\sum_{y=x+(t-1)L+1}^{x+(t+1)L-1}\frac{c(m-c)\cdot U^{1.5}}{m\cdot x^{1.5}(y-x)^{1.5}(U-y)^{1.5}}\cdot\frac{L-|y-x-tL|}{L} \\
		&\leq O\left(\sum_{y-x=(t-1)L+1}^{tL}\frac{c(m-c)\cdot U^{1.5}}{m\cdot x^{1.5}(y-x)^{1.5}(U-x)^{1.5}} \left(\frac{y-x}{L}-(t-1)\right)\right) \\
		&=O\left(\frac{c(m-c)\cdot U^{1.5}}{m\cdot x^{1.5}(U-x)^{1.5}} \cdot \frac{1}{t^{1.5}\sqrt{L}}\right).
	\end{align*}
	Now, we take the sum over $x$, the expected number of such $(\mathcal{B}_1,\mathcal{B}_2)$ pairs where the last element in $\mathcal{B}_1$ is $s_c\leq U-5tL$, is at most
	\begin{align*}
		\sum_{x=1}^{U-5tL}O\left(\frac{c(m-c)\cdot U^{1.5}}{m\cdot x^{1.5}(U-x)^{1.5}} \cdot \frac{1}{t^{1.5}\sqrt{L}}\right)\leq O\left(\frac{c(m-c)}{m\cdot t^{1.5}\sqrt{L}} \right).
	\end{align*}
	Next, we take the sum over $c$,
	\[
		\sum_{c=1}^{m-1}O\left(\frac{c(m-c)}{m\cdot t^{1.5}\sqrt{L}} \right)=O\left(\frac{m}{t^{1.5}\sqrt{L}}\right)=O(\sqrt{k}/t^{1.5}).
	\]
	A similar argument shows that, for any $t'\in [t-1,2t)$, the expected number of adjacent non-empty interval pairs $(\mathcal{B}_1,\mathcal{B}_2)$ such that
	\begin{itemize}
	 	\item they are $t'$ intervals far, and
	 	\item the last element in $\mathcal{B}_1$ is at most $U-5tL$
	 \end{itemize}
	is at most $O(\sqrt{k}/t'^{1.5})$.
	However, there could be at most $O(1)$ interval pairs that are $[t-1,2t)$ intervals far after $U-5tL$.
	Hence, taking the sum over $t'$, the expected total number of such pairs is $O(\sqrt{k/t}+1)$.

	In particular, taking the sum over $t=2^i$ for $i=0,\ldots,\log K$ gives us an upper bound of $O(\sqrt{k}+\log K)$ on the expected number of non-empty pairs.
\end{proof}


Now, we are ready to prove Lemma~\ref{lem_enc_len}.
Note that $k,K=\Theta(\log^4 B)$.
\begin{restate}[Lemma~\ref{lem_enc_len}]
	For random $S$ and $\Delta$, let $S'\subseteq S$ be a subset (arbitrarily) jointly distributed.
	Then there is a prefix-free binary string $\com_{\Delta,S}(S')$, such that $\com_{\Delta,S}(S')$ and $S'$ together determine $\{E_j\}_{j\in [K]}$ (or equivalently, the set of non-empty intervals).
	Moreover, we have the following bound on the length of $\com_{\Delta,S}(S')$:
	\[
		\E_{\Delta,S,S'}\left[
		  \left|\com_{\Delta,S}(S')\right|
		\right]\leq 
		O\left(\sqrt{\epsilon} \log^2 B\log(1/\epsilon)+\log B\log\log B\right),
	\]
	where $\epsilon:=\E[1-|S'|/m]$.  
\end{restate}
\begin{proof}
	First observe that for each non-empty interval, when at least one of its elements appears in $S'$, we already know this interval is non-empty.
	In the other words, it suffices to encode in $\com_{\Delta,S}(S')$, the non-empty intervals that none of the elements appears in $S'$.

	\paragraph{Encode $\{E_j\}$ given $S'$.}
	We now describe $\com_{\Delta,S}(S')$, which encodes the non-empty intervals given $S'$.
	To this end, we first compute all non-empty intervals, as well as the non-empty intervals that contain no points in $S'$.
	Let $K_{ne}$ be the number of non-empty intervals, and $K_{un}$ be the number of ``unknown'' non-empty intervals (i.e. the non-empty intervals that do not contain a point in $S'$).
	Let $I_1,I_2,\ldots,I_{K_{ne}}\in[K]$ be all non-empty intervals in the increasing order.
	Let $I_{i_1}, I_{i_2},\ldots,$ be all non-empty intervals that contain no points in $S'$ (the ``unknown'' intervals to be encoded).
	We first write down $K_{ne}$, then for each ``unknown'' interval $I_{i_a}$, we do the following:
	\begin{enumerate}
	    \item write down $i_a-i_{a-1}$ ($i_0$ is assumed to be $0$);
	    \item write down $I_{i_a}-I_{i_a-1}$.
	\end{enumerate}
	All integers are encoded using the folklore prefix-free encoding which takes $O(\log N)$ bits to encode an integer $N$.
	This completes $\com_{\Delta,S}(S')$.

	\paragraph{Decode $\{E_j\}$ given $\com_{\Delta,S}(S')$ and $S'$.}
	To decode $\{E_j\}$, we first compute the list $\mathcal{J}$ of non-empty intervals that contain at least one point in $S'$.
	Next, we read $K_{ne}$ (which together with $|\mathcal{J}|$, determines $K_{un}$), and for $a=1,\ldots,K_{un}$, do the following:
	\begin{enumerate}
		\item read the next integer and recover $i_a$;
		\item for $i=i_{a-1}+1,\ldots,i_a-1$, let $I_i$ be the next interval in $\mathcal{J}$ (the intervals that do no require encoding);
		\item read the next integer and recover $I_{i_a}$.
	\end{enumerate}
	Finally, for $i=i_{K_{un}}+1,\ldots,K_{ne}$, let $I_i$ be the next interval in $\mathcal{J}$.
	This recovers all $I_1,\ldots,I_{K_{ne}}$, hence, decodes $\{E_j\}$.

	\paragraph{The length of $\com_{\Delta,S}(S')$.}
	Next, we analyze the expected length of $\com_{\Delta,S}(S')$.
	$K_{ne}$ takes $O(\log K)$ bits to encode.
	Then for $a=1,\ldots,K_{un}$, $i_a-i_{a-1}$ takes $O(\log (i_a-i_{a-1}))$ bits to encode.
	Since all these integers sum up to (at most) $K_{ne}$, by concavity of $\log$, the total number of bits used to encode $\{i_a-i_{a-1}\}$ is at most 
	\[
		O(K_{un}\cdot \log\frac{K_{ne}}{K_{un}}).
	\]
	By Corollary~\ref{cor_unk_int}, $\E[K_{un}]\leq O(\sqrt{\epsilon k}+\log m)=O(\max\{\sqrt{\epsilon k},\log m\})$.
	By Lemma~\ref{lem_large_gap}, $\E[K_{ne}]\leq O(\sqrt{k})$.
	Then by the concavity and monotonicity of $f(x, y)=x\ln(y/x)$, the expected encoding length of all $i_a-i_{a-1}$ is at most
	\begin{equation}\label{eqn_length_1}
		O\left(\max\{\sqrt{\epsilon k},\log m\}\log \left(\frac{\sqrt{k}}{\max\{\sqrt{\epsilon k},\log m\}}\right)\right).
	\end{equation}

	Next, the value $I_{i_a}-I_{i_a-1}$ takes $O(\log (I_{i_a}-I_{i_a-1}))$ bits to encode.
	For all ${I_{i_a}-I_{i_a-1}\leq \frac{\sqrt{k}}{\max\{\sqrt{\epsilon k},\log m\}}}$, their total encoding length is at most
	\[
		O\left(K_{un}\cdot \log \left(\frac{\sqrt{k}}{\max\{\sqrt{\epsilon k},\log m\}}\right)\right),
	\]
	and its expectation is at most
	\begin{equation}\label{eqn_length_2}
		O\left(\max\{\sqrt{\epsilon k},\log m\}\cdot \log \left(\frac{\sqrt{k}}{\max\{\sqrt{\epsilon k},\log m\}}\right)\right).
	\end{equation}

	For all $I_{i_a}-I_{i_a-1}>\frac{\sqrt{k}}{\max\{\sqrt{\epsilon k},\log m\}}$, by Lemma~\ref{lem_large_gap}, their expected encoding length is at most
	\begin{align}
		&\ \sum_{t=\frac{2^b\cdot \sqrt{k}}{\max\{\sqrt{\epsilon k},\log m\}}:b\geq 0, t\leq K} O((\sqrt{k/t}+1)\log t)\nonumber\\
		\leq&\ O\left(\max\{\sqrt{\epsilon k},\log m\}\cdot\log \left(\frac{\sqrt{k}}{\max\{\sqrt{\epsilon k},\log m\}}\right)+\log^2 K\right).\label{eqn_length_3}
	\end{align}

	Finally, summing up \eqref{eqn_length_1}, \eqref{eqn_length_2} and \eqref{eqn_length_3}, the expected length of $\com_{S,\Delta}(S')$ is at most
	\[
		O\left(\max\{\sqrt{\epsilon k},\log m\}\cdot\log \left(\frac{\sqrt{k}}{\max\{\sqrt{\epsilon k},\log m\}}\right)+\log^2 K\right).
	\]
	Since $k, K=\Theta(\log^4 B)$, $m\leq B$ and $f(x, y)=x\log (y/x)$ is non-decreasing when $x\leq y/e$, it is at most
	\[
		O\left(\max\{\sqrt{\epsilon k}\log (1/\sqrt{\epsilon}), \log B\cdot \log\log B\}+\log^2\log B\right)\leq O\left(\sqrt{\epsilon k}\log (1/\epsilon)+\log B\cdot \log\log B\right).
	\]
	This proves the lemma.
\end{proof}

Finally, we prove Lemma~\ref{lem_entropy_lb} (and note that $k=\log^4 B$).
\begin{restate}[Lemma~\ref{lem_entropy_lb}]
	The entropy of $\{E_j\}_{j\in[K]}$ is at least $\Omega(\sqrt{k})$.
\end{restate}
\begin{proof}



	To prove the entropy lower bound, we will apply the chain rule, and show that for many $j$, $H(E_{j-1}\mid E_{\geq j})$ is large.
	Denote by $c$, the smallest element in $E_{\geq j}$.
	Then $c$ and $s_c,\ldots,s_m$ determine $E_{\geq j}$.
	Thus, it suffices to lower bound $H(E_{j-1}\mid c,s_c,\ldots,s_m)$, as conditioning on more variables could only decrease the entropy.

	To this end, fix $j\in [k/30, 2k/30]$, and suppose the $j$-th interval is $[x, x+L)$.
	Then $x\in [m^2/30, 2m^2/30]$, and $x<2U/3$.
	We denote by $W_j$, the event that $c\in[m/3, 2m/3]$ and $s_c\in [x+L/2, x+L)$.
	The entropy lower bound follows from the following two claims.
	\begin{claim}\label{cl_prob_lb}
		The probability of $W_j$ is at least $\Pr[W_j]\geq \Omega(1/\sqrt{k})$.
	\end{claim}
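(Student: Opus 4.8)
The plan is to lower-bound $\Pr[W_j]$ by retaining only those configurations in which the first point of $S$ at or beyond $x$ --- call it $s_c$ --- lands in the second half of the $j$-th interval, while its predecessor $s_{c-1}$ sits in a window of width $\Theta(L)$ just to the left of $x$; for such configurations every quantity in the marginal of $(s_{c-1},s_c)$ lies in the clean regime handled by Proposition~\ref{prop_M}. Concretely, since $c$ is characterized by $s_{c-1}<x\le s_c$, for any fixed configuration the triple $(c,s_{c-1},s_c)$ is uniquely determined, so the events $\{s_{c-1}=w,\,s_c=v\}$ with $w<x\le v$ are pairwise disjoint across distinct triples, and each such event is contained in $W_j$ whenever $c\in[m/3,2m/3]$ and $v\in[x+L/2,x+L)$. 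Hence
\[
  \Pr[W_j]\ \ge\ \sum_{c=\lceil m/3\rceil}^{\lfloor 2m/3\rfloor}\ \sum_{w=x-L}^{x-1}\ \sum_{v=x+\lceil L/2\rceil}^{x+L-1}\ \Pr[s_{c-1}=w,\,s_c=v],
\]
where $\Pr[s_{c-1}=w,\,s_c=v]=\frac{M(w-1,c-2)\,C_{v-w-1}\,M(U-v,m-c)}{M(U,m)}$, exactly as in the proofs of Lemmas~\ref{lem_small_interval} and~\ref{lem_large_gap}.

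The next step is to show every term of this triple sum is $\Theta(k^{3/2}/m^{5})$. Using $x\in[m^2/30,2m^2/30]$, $U=\Theta(m^2)$, $x<2U/3$ and $L=\lfloor m^2/k\rfloor=o(m^2)$, for all $(c,w,v)$ in the ranges above one has $w=\Theta(m^2)$, $U-v=\Theta(m^2)$, $v-w=\Theta(L)=\Theta(m^2/k)$, and $c=\Theta(m)$, $m-c=\Theta(m)$; in particular $c-2=O(\sqrt{w-1})$, $m-c=O(\sqrt{U-v})$, and the cube-over-square conditions $(c-2)^3\ll(w-1)^2$, $(m-c)^3\ll(U-v)^2$ needed for the lower bound in Proposition~\ref{prop_M} hold. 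Substituting $M(B,u)=\Theta(u\,2^{2B-u}/B^{3/2})$ into the three $M$-factors and $C_n=\Theta(4^n/n^{3/2})$ into $C_{v-w-1}$ gives $M(w-1,c-2)=\Theta(2^{2w-c}/m^2)$, $C_{v-w-1}=\Theta(4^{v-w}k^{3/2}/m^3)$, $M(U-v,m-c)=\Theta(2^{2(U-v)-(m-c)}/m^2)$ and $M(U,m)=\Theta(2^{2U-m}/m^2)$; the base-$2$ exponents in the numerator sum to $2U-m$, matching the denominator, so the powers of $2$ cancel and the remaining factors multiply to $\Theta(k^{3/2}/m^{5})$. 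Along the way one checks that every exponential correction is a bounded constant --- e.g.\ $c^2/(4w-2c)=O(1)$ and $(m-c)^2/(4(U-v)-2(m-c))=O(1)$, so the factors $e^{-c^2/(4w-2c)}$, $e^{-(m-c)^2/(4(U-v)-2(m-c))}$ appearing in the lower bounds of Proposition~\ref{prop_M} are $\Omega(1)$ --- while the $e^{-m^2/(4U)}\le 1$ in the matching upper bound on $M(U,m)$ is simply discarded.

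Finally, the three summation ranges have $\Theta(m)$, $\Theta(m^2/k)$ and $\Theta(m^2/k)$ values, so the triple sum has $\Theta(m^{5}/k^{2})$ terms, each of order $\Theta(k^{3/2}/m^{5})$, whence
\[
  \Pr[W_j]\ \ge\ \Theta\!\left(\frac{m^{5}}{k^{2}}\right)\cdot\Theta\!\left(\frac{k^{3/2}}{m^{5}}\right)\ =\ \Theta(k^{-1/2})\ =\ \Omega(1/\sqrt{k}).
\]
I expect the main obstacle to be the middle step: verifying uniformly over the chosen ranges that the hypotheses of Proposition~\ref{prop_M} hold, that the base-$2$ exponents telescope exactly, and --- the genuinely delicate point --- that the exponential factors are $\Theta(1)$ rather than vanishingly small, which reduces to bounding each exponent above using only $x=\Theta(m^2)$, $U=\Theta(m^2)$ and $L=o(m^2)$. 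A minor side point worth recording is that $L=\lfloor m^2/k\rfloor$ is super-constant and the windows $[x-L,x-1]$ and $[x+\lceil L/2\rceil,x+L-1]$ are nonempty, which holds since $m^2\gg k=\log^4 B$ in the working regime.
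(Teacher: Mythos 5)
Your proposal is correct and follows essentially the same route as the paper: both lower-bound $\Pr[W_j]$ by restricting to configurations with $s_{c-1}$ in a width-$L$ window left of $x$ and $s_c$ in $[x+L/2,x+L)$, expand $\Pr[s_{c-1}=w,s_c=v]$ via the Catalan-convolution factorization and Proposition~\ref{prop_M}, and sum over the $\Theta(m)\times\Theta(L)\times\Theta(L)$ ranges to get $\Omega(\sqrt{L}/m)=\Omega(1/\sqrt{k})$. Your version is, if anything, slightly more careful about the disjointness of the events and about why $c$ is the correct index.
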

	\begin{claim}\label{cl_ce_lb}
		The conditional entropy of $E_{j-1}$ conditioned on $W_j$, is at least 
		\[
			H(E_{j-1}\mid W_j,c,s_c,\ldots,s_m)\geq \Omega(1).
		\]
	\end{claim}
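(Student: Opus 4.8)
The plan is to reduce Claim~\ref{cl_ce_lb} to a one--dimensional estimate on the position of $s_{c-1}$, the last input point strictly below $x$, and then establish that estimate directly from Proposition~\ref{prop_M} and the standard Catalan asymptotics $C_n=\Theta(4^n n^{-3/2})$. First, since $s_{c-1}$ is by definition the largest element of $S$ that is less than $x$, the $(j-1)$-st interval $[x-L,x)$ contains a point of $S$ if and only if $s_{c-1}\ge x-L$; hence $E_{j-1}=\mathbf{1}[s_{c-1}\ge x-L]$ is a function of $s_{c-1}$ alone once $c$ is fixed. Next, writing the block law $\Pr[S=\{s_1,\dots,s_m\}]\propto\prod_{j=0}^m C_{s_{j+1}-s_j-1}$, the weight factors through $s_c$, so conditioned on $s_c$ the prefix $(s_1,\dots,s_{c-1})$ is independent of $(s_{c+1},\dots,s_m)$; moreover, conditioned on $s_c=w\ge x$, the event ``$c$ is the first index with $s_i\ge x$'' is exactly $s_{c-1}\le x-1$. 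Consequently $H(E_{j-1}\mid W_j,c,s_c,\dots,s_m)$ is an average, over $(c,w)$ ranging in the support of $W_j$, of $H(E_{j-1}\mid c,s_c=w)$, so it suffices to show that for every $c\in[m/3,2m/3]$ and every $w\in[x+L/2,x+L)$,
\[
  \Pr\left[\,s_{c-1}\ge x-L \,\middle|\, s_c=w,\ s_{c-1}\le x-1\,\right]\in\left[\,\Omega(1),\,1-\Omega(1)\,\right],
\]
as a bit that equals $1$ with probability bounded away from $0$ and $1$ has entropy $\Omega(1)$.

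For the conditional law one has $\Pr[s_{c-1}=v\mid s_c=w]\propto M(v-1,c-2)\,C_{w-v-1}$ for $v\in\{c-1,\dots,x-1\}$ (the $w$--dependent factor $M(U-w,m-c)/M(U,m)$ cancels). Put $g:=x-v$ and $\delta:=w-x\in[L/2,L)$, so $w-v=\delta+g$ and $\delta=\Theta(L)$. Proposition~\ref{prop_M} and the Catalan asymptotics give, on the ``bulk'' range $v=\Theta(x)$,
\[
  M(v-1,c-2)\,C_{w-v-1}=\Theta\!\left(\frac{c\,2^{2w-c}}{(x-g)^{3/2}(\delta+g)^{3/2}}\,e^{-\Theta(c^2/(x-g))}\right).
\]
The key numerical facts are that $c^2/x=\Theta(1)$ (from $c\in[m/3,2m/3]$ and $x\in[m^2/30,2m^2/30]$ one gets $c^2/x\in[5/3,40/3]$), so the exponential is $\Theta(1)$ throughout the bulk, and that $L=\lfloor m^2/k\rfloor\to\infty$ with $L/x=\Theta(1/k)=o(1)$. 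Summing the display over $v\in[x-L,x-1]$ (i.e.\ $g\in[1,L]$) using the \emph{lower} bounds of Proposition~\ref{prop_M} yields $\Theta(c\,2^{2w-c}x^{-3/2}L^{-1/2})$, and the same sum over $v\in[x-2L,x-L]$ (i.e.\ $g\in[L,2L]$) again yields $\Theta(c\,2^{2w-c}x^{-3/2}L^{-1/2})$; here one uses $\delta+g\asymp L$ on the relevant range and $\sum_{g\le L}(\delta+g)^{-3/2}\asymp L^{-1/2}$.

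It remains to bound the normalizer $N_w:=\sum_{v=c-1}^{x-1}M(v-1,c-2)\,C_{w-v-1}$ from above by $O(c\,2^{2w-c}x^{-3/2}L^{-1/2})$, which is the main technical step: the naive bound $N_w\le\sum_{v=c-1}^{w-1}M(v-1,c-2)C_{w-v-1}=M(w-1,c-1)=\Theta(c\,2^{2w-c}x^{-3/2})$ is too weak by a factor $\sqrt L$. I would split $N_w$ into three ranges. For $v\in[x/2,x-1]$, Proposition~\ref{prop_M} and the Catalan asymptotics bound the sum by $O(c\,2^{2w-c}x^{-3/2})\sum_{g\ge1}(\delta+g)^{-3/2}=O(c\,2^{2w-c}x^{-3/2}L^{-1/2})$. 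For $v\in[10c,x/2]$, Proposition~\ref{prop_M} applies and each term is $O(c\,2^{2w-c}x^{-3/2}v^{-3/2}e^{-\Omega(c^2/v)})$; Fact~\ref{fact_sum1} with $A=\Omega(c^2)$ and $T=x/2$ gives $\sum_v v^{-3/2}e^{-\Omega(c^2/v)}=O(x^{-1/2}+c^{-1})=O(c^{-1})$, so this range contributes $O(2^{2w-c}x^{-3/2})=o(c\,2^{2w-c}x^{-3/2}L^{-1/2})$ because $\sqrt L/c=\Theta(1/\sqrt k)$. For $v\in[c-1,10c]$, where Proposition~\ref{prop_M} is unavailable, I would instead use the crude bound $M(B,u)\le\binom{2B-u+1}{B+1}\le 2^{2B-u+1}$ together with its entropy refinement $\binom{2B-u+1}{B+1}\le 2^{2B-u+1}e^{-\Omega(u^2/B)}$; since $c^2/v=\Omega(c)$ there and $C_{w-v-1}=\Theta(2^{2(w-v)}x^{-3/2})$ (as $w-v=\Theta(x)$), each of the $O(c)$ terms is $O(2^{2w-c}x^{-3/2}e^{-\Omega(c)})$, which totals $o(c\,2^{2w-c}x^{-3/2}L^{-1/2})$. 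Hence $N_w=\Theta(c\,2^{2w-c}x^{-3/2}L^{-1/2})$. Dividing, $\Pr[s_{c-1}\ge x-L\mid\cdots]=\Theta(1)$ while $\Pr[s_{c-1}\in[x-2L,x-L]\mid\cdots]=\Omega(1)$ forces $\Pr[s_{c-1}<x-L\mid\cdots]=\Omega(1)$, giving the displayed two--sided bound; averaging $H(E_{j-1}\mid c,s_c=w)=\Omega(1)$ over $W_j$ then completes the proof.

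I expect the main obstacle to be exactly the tail $v\in[c-1,10c]$ of $N_w$, where Proposition~\ref{prop_M} does not apply and one must supply a separate, exponentially--decaying, bound on $M$; the other place where care is needed is keeping honest track of which quantities (above all $c^2/x$, and also $\delta/L$, $L/x$, $\sqrt L/c$) are genuinely $\Theta(1)$ or $o(1)$ rather than growing, since the whole argument hinges on the exponential factor being a constant on the bulk and negligible on the tail.
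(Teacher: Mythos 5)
Your proposal is correct and follows essentially the same route as the paper: it reduces the claim to showing that $\Pr[s_{c-1}\ge x-L\mid s_c=w,\ s_{c-1}<x]$ is bounded away from $0$ and $1$, via the conditional law proportional to $M(v-1,c-2)\,C_{w-v-1}$ and the asymptotics of Proposition~\ref{prop_M}; the only real difference is that where the paper controls the small-$v$ part of the normalizer with a one-line monotonicity observation about $M(y,c)C_{z-y}$, you split into three ranges and use Fact~\ref{fact_sum1} plus a crude entropy bound on the binomial, which works equally well. One cosmetic fix: Proposition~\ref{prop_M} formally assumes $u\ll B$, so it does not literally apply at $v=\Theta(c)$ near the lower end of your middle range $[10c,x/2]$; simply move the cut to, say, $c\log c$ (your crude bound still gives $e^{-\Omega(c/\log c)}$ decay up to there, and beyond it $u\ll B$ holds), which changes nothing else in the argument.
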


	We first prove the lemma assuming the two claims.
	Since $W_j$ is an event that depends only on $c$ and $s_c$, by definition, we have
	\begin{align*}
		H(E_{j-1}\mid E_{\geq j})&\geq H(E_{j-1}\mid c,s_c,\ldots,s_m)\\
		&=\E_{c_0,z}[H(E_{j-1}\mid c=c_0,s_c=z,\ldots,s_m)] \\
		&\geq \Pr[W_j]\cdot \E_{c_0,z\mid W_j}[H(E_{j-1}\mid W_j,c=c_0,s_c=z,\ldots,s_m)] \\
		&\geq \Pr[W_j]\cdot H(E_{j-1}\mid W_j,c,s_c,\ldots,s_m) \\
		&\geq \Omega(1/\sqrt{k}).
	\end{align*}
	Finally, by chain rule, we have
	\begin{align*}
		H(E_1,\ldots,E_K)&=\sum_{j=1}^{K}H(E_j\mid E_{j+1},\ldots,E_K) \\
		&\geq \sum_{j=k/30}^{2k/30} H(E_{j-1}\mid E_j,\ldots,E_K) \\
		&\geq \Omega(\sqrt{k}).
	\end{align*}
	This proves the lemma.
	Hence, it suffices to prove the two claims.

	\bigskip
	To prove Claim~\ref{cl_prob_lb}, let us first lower bound the probability that $s_c=z$ for some $z\in[x+L/2,x+L)$,
	\begin{align*}
		\Pr[s_c=z]&=\sum_{y=1}^{x}\Pr[s_c=z,s_{c-1}=y] \\
		&\geq \sum_{y=x-L}^x \frac{M(y-1,c-2)C_{z-y-1}M(U-z,m-c)}{M(U,m)} \\
		\intertext{which by Proposition~\ref{prop_M}, is at least}
		&\geq \Omega\left(\sum_{y=x-L}^x \frac{c(m-c)U^{1.5}e^{-O(c^2/y+(m-c)^2/(U-z))+\Omega(m^2/U)}}{y^{1.5}(z-y)^{1.5}(U-z)^{1.5}m}\right) \\
		&= \Omega\left(\sum_{y=x-L}^x\frac{m}{x^{1.5}(z-y)^{1.5}}\right) \\
		&\geq \Omega\left(\frac{1}{m^2\sqrt{L}}\right).
	\end{align*}
	Now, we take the sum over $z$ from $x+L/2$ to $x+L$ and over $c$ from $m/3$ to $2m/3$, proving $\Pr[W_j]=\sum_{c\in[m/3,2m/3]}\sum_{z\in[x+L/2,x+l)}\Pr[s_c=z]\geq\Omega(\sqrt{L}/m)=\Omega(1/\sqrt{k})$.

	\bigskip
	To prove Claim~\ref{cl_ce_lb}, it suffices to show that the conditional probability of $E_{j-1}=0$ is bounded away (by a constant) from both $0$ and $1$.
	Note that $s_c$ is the first element after $(j-1)$-th interval, $s_{c-1}$ must be in $(j-1)$-th interval or earlier.
	We have
	\begin{align*}
		\Pr[E_{j-1}=0\mid c,s_c=z,s_{c+1},\ldots,s_m]&=\Pr[s_{c-1}< x-L\mid s_c=z,s_{c-1}< x] \\
		&=\frac{\Pr[s_{c-1}<x-L\mid s_c=z]}{\Pr[s_{c-1}<x\mid s_c=z]} \\
		&=\frac{\sum_{y=1}^{x-L-1}M(y-1,c-2)C_{z-y-1}}{\sum_{y=1}^{x-1}M(y-1,c-2)C_{z-y-1}}
	\end{align*}
	Recall that $c=\Theta(m),x=\Theta(m^2),z-x=\Theta(L)$.
	Note that $M(y,c)C_{z-y}$ is increasing when $y=O(m^2)$ and $\frac{\exp(-c^2/4y)}{\exp(-c^2/(4y-2c))}=1+o(1)$ when $y^2=\omega(c^3)$, thus the probability is
	\[
		=(1\pm o(1))\frac{\sum_{y=1}^{x-L}e^{-c^2/4y}y^{-1.5}(z-y)^{-1.5}}{\sum_{y=1}^{x}e^{-c^2/4y}y^{-1.5}(z-y)^{-1.5}}.
	\]

	On the one hand, it is at least
	\begin{align*}
		\frac{\sum_{y=1}^{x-L}e^{-c^2/4y}y^{-1.5}(z-y)^{-1.5}}{\sum_{y=1}^{x}e^{-c^2/4y}y^{-1.5}(z-y)^{-1.5}}&\geq \frac{\sum_{y=x-2L}^{x-L}e^{-c^2/4(x-2L)}(x-L)^{-1.5}(z-(x-2L))^{-1.5}}{2\sum_{y=x/2}^{x}e^{-c^2/4y}y^{-1.5}(z-y)^{-1.5}} \\
		&\geq \Omega\left(\frac{\sum_{y=x-2L}^{x-L}x^{-1.5}(z-x)^{-1.5}}{\sum_{y=x/2}^{x}x^{-1.5}(z-y)^{-1.5}}\right) \\
		&=\Omega\left(\frac{L\cdot x^{-1.5}\cdot L^{-1.5}}{x^{-1.5}L^{-0.5}}\right) \\
		&=\Omega(1),
	\end{align*}
	since $z-x\in [L/2,L)$, $L\ll x$ and $c^2/x=O(1)$.

	On the other hand, it is at most
	\begin{align*}
		\frac{\sum_{y=1}^{x-L}e^{-c^2/4y}y^{-1.5}(z-y)^{-1.5}}{\sum_{y=1}^{x}e^{-c^2/4y}y^{-1.5}(z-y)^{-1.5}}&=1-\frac{\sum_{y=x-L+1}^{x}e^{-c^2/4y}y^{-1.5}(z-y)^{-1.5}}{\sum_{y=1}^{x}e^{-c^2/4y}y^{-1.5}(z-y)^{-1.5}} \\
		&\leq 1-\Omega\left(\frac{\sum_{y=x-L+1}^{x}e^{-c^2/4(x-L)}x^{-1.5}(z-y)^{-1.5}}{x^{-1.5}L^{-0.5}}\right) \\
		&\leq 1-\Omega\left(\frac{L(z-(x-L))^{-1.5}}{L^{-0.5}}\right) \\
		&\leq 1-\Omega(1).
	\end{align*}

	In the other words, $\Pr[E_{j-1}=0\mid c,s_c=z,s_{c+1},\ldots,s_m]$ is always bounded away from both $0$ and $1$.
	Thus, the conditional entropy is at least a constant
	\[
		H(E_{j-1}\mid c,s_c=z,s_{c+1},\ldots,s_m)\geq \Omega(1).
	\]
	This bound holds for all $c,z$ that satisfy $W_j$, hence,
	\[
		H(E_{j-1}\mid W_j, c,s_c,s_{c+1},\ldots,s_m)\geq \Omega(1).
	\]
	This completes the proof of the lemma.
\end{proof}

\bibliography{refs}

\newcommand{\etalchar}[1]{$^{#1}$}
\begin{thebibliography}{MNSW98}

\bibitem[ACN13]{ACN13}
Andr{\'{e}}s Abeliuk, Rodrigo C{\'{a}}novas, and Gonzalo Navarro.
\newblock Practical compressed suffix trees.
\newblock {\em Algorithms}, 6(2):319--351, 2013.

\bibitem[ALV92]{ALV92}
Amihood Amir, Gad~M. Landau, and Uzi Vishkin.
\newblock Efficient pattern matching with scaling.
\newblock {\em J. Algorithms}, 13(1):2--32, 1992.

\bibitem[Ben77]{bentley1977solutions}
Jon~Louis Bentley.
\newblock Solutions to klee’s rectangle problems.
\newblock {\em Unpublished manuscript}, pages 282--300, 1977.

\bibitem[BFP{\etalchar{+}}05]{BFPSS05}
Michael~A. Bender, Martin Farach{-}Colton, Giridhar Pemmasani, Steven Skiena,
  and Pavel Sumazin.
\newblock Lowest common ancestors in trees and directed acyclic graphs.
\newblock {\em J. Algorithms}, 57(2):75--94, 2005.

\bibitem[BLR{\etalchar{+}}15]{BLRSSW15}
Philip Bille, Gad~M. Landau, Rajeev Raman, Kunihiko Sadakane, Srinivasa~Rao
  Satti, and Oren Weimann.
\newblock Random access to grammar-compressed strings and trees.
\newblock {\em {SIAM} J. Comput.}, 44(3):513--539, 2015.

\bibitem[BV93]{BV93}
Omer Berkman and Uzi Vishkin.
\newblock Recursive star-tree parallel data structure.
\newblock {\em {SIAM} J. Comput.}, 22(2):221--242, 1993.

\bibitem[Cat87]{catalan1887nombres}
M~Eug{\'e}ne Catalan.
\newblock Sur les nombres de segner.
\newblock {\em Rendiconti del Circolo Matematico di Palermo (1884-1940)},
  1(1):190--201, 1887.

\bibitem[CC07]{CC07}
Kuan{-}Yu Chen and Kun{-}Mao Chao.
\newblock On the range maximum-sum segment query problem.
\newblock {\em Discrete Applied Mathematics}, 155(16):2043--2052, 2007.

\bibitem[CGL15]{CGL15}
Rapha{\"{e}}l Clifford, Allan Gr{\o}nlund, and Kasper~Green Larsen.
\newblock New unconditional hardness results for dynamic and online problems.
\newblock In {\em {IEEE} 56th Annual Symposium on Foundations of Computer
  Science, {FOCS} 2015, Berkeley, CA, USA, 17-20 October, 2015}, pages
  1089--1107, 2015.

\bibitem[CIK{\etalchar{+}}12]{CIKRTW12}
Maxime Crochemore, Costas~S. Iliopoulos, Marcin Kubica, M.~Sohel Rahman, German
  Tischler, and Tomasz Walen.
\newblock Improved algorithms for the range next value problem and
  applications.
\newblock {\em Theor. Comput. Sci.}, 434:23--34, 2012.

\bibitem[CPS08]{CPS08}
Gang Chen, Simon~J. Puglisi, and W.~F. Smyth.
\newblock Lempel--ziv factorization using less time {\&} space.
\newblock {\em Mathematics in Computer Science}, 1(4):605--623, Jun 2008.

\bibitem[CR10]{chakrabarti2010optimal}
Amit Chakrabarti and Oded Regev.
\newblock An optimal randomized cell probe lower bound for approximate nearest
  neighbor searching.
\newblock {\em SIAM Journal on Computing}, 39(5):1919--1940, 2010.

\bibitem[DRS17]{DRS17}
Pooya Davoodi, Rajeev Raman, and Srinivasa~Rao Satti.
\newblock On succinct representations of binary trees.
\newblock {\em Mathematics in Computer Science}, 11(2):177--189, 2017.

\bibitem[FH07]{FH07}
Johannes Fischer and Volker Heun.
\newblock A new succinct representation of rmq-information and improvements in
  the enhanced suffix array.
\newblock In {\em Combinatorics, Algorithms, Probabilistic and Experimental
  Methodologies, First International Symposium, {ESCAPE} 2007, Hangzhou, China,
  April 7-9, 2007, Revised Selected Papers}, pages 459--470, 2007.

\bibitem[FH11]{FH11}
Johannes Fischer and Volker Heun.
\newblock Space-efficient preprocessing schemes for range minimum queries on
  static arrays.
\newblock {\em {SIAM} J. Comput.}, 40(2):465--492, 2011.

\bibitem[FHK06]{FHK06}
Johannes Fischer, Volker Heun, and Stefan Kramer.
\newblock Optimal string mining under frequency constraints.
\newblock In {\em Knowledge Discovery in Databases: {PKDD} 2006, 10th European
  Conference on Principles and Practice of Knowledge Discovery in Databases,
  Berlin, Germany, September 18-22, 2006, Proceedings}, pages 139--150, 2006.

\bibitem[FMN09]{FMN09}
Johannes Fischer, Veli M{\"{a}}kinen, and Gonzalo Navarro.
\newblock Faster entropy-bounded compressed suffix trees.
\newblock {\em Theor. Comput. Sci.}, 410(51):5354--5364, 2009.

\bibitem[GBT84]{GBT84}
Harold~N. Gabow, Jon~Louis Bentley, and Robert~Endre Tarjan.
\newblock Scaling and related techniques for geometry problems.
\newblock In {\em Proceedings of the 16th Annual {ACM} Symposium on Theory of
  Computing, April 30 - May 2, 1984, Washington, DC, {USA}}, pages 135--143,
  1984.

\bibitem[GJMW19]{GJMW19}
Pawel Gawrychowski, Seungbum Jo, Shay Mozes, and Oren Weimann.
\newblock Compressed range minimum queries.
\newblock {\em CoRR}, abs/1902.04427, 2019.

\bibitem[GL16]{GL16}
Allan Gr{\o}nlund and Kasper~Green Larsen.
\newblock Towards tight lower bounds for range reporting on the {RAM}.
\newblock In {\em 43rd International Colloquium on Automata, Languages, and
  Programming, {ICALP} 2016, July 11-15, 2016, Rome, Italy}, pages 92:1--92:12,
  2016.

\bibitem[GM07]{GM07}
Anna G{\'{a}}l and Peter~Bro Miltersen.
\newblock The cell probe complexity of succinct data structures.
\newblock {\em Theor. Comput. Sci.}, 379(3):405--417, 2007.

\bibitem[Gol09]{Gol09}
Alexander Golynski.
\newblock Cell probe lower bounds for succinct data structures.
\newblock In {\em Proceedings of the Twentieth Annual {ACM-SIAM} Symposium on
  Discrete Algorithms, {SODA} 2009, New York, NY, USA, January 4-6, 2009},
  pages 625--634, 2009.

\bibitem[GT04]{GT04}
Loukas Georgiadis and Robert~Endre Tarjan.
\newblock Finding dominators revisited: extended abstract.
\newblock In {\em Proceedings of the Fifteenth Annual {ACM-SIAM} Symposium on
  Discrete Algorithms, {SODA} 2004, New Orleans, Louisiana, USA, January 11-14,
  2004}, pages 869--878, 2004.

\bibitem[HSV09]{HSV09}
Wing{-}Kai Hon, Rahul Shah, and Jeffrey~Scott Vitter.
\newblock Space-efficient framework for top-k string retrieval problems.
\newblock In {\em 50th Annual {IEEE} Symposium on Foundations of Computer
  Science, {FOCS} 2009, October 25-27, 2009, Atlanta, Georgia, {USA}}, pages
  713--722, 2009.

\bibitem[HT84]{HT84}
Dov Harel and Robert~Endre Tarjan.
\newblock Fast algorithms for finding nearest common ancestors.
\newblock {\em {SIAM} J. Comput.}, 13(2):338--355, 1984.

\bibitem[Jac88]{Jacobson:1988}
Guy~Joseph Jacobson.
\newblock {\em Succinct Static Data Structures}.
\newblock PhD thesis, Carnegie Mellon University, Pittsburgh, PA, USA, 1988.

\bibitem[Jac89]{jacobson1989space}
Guy Jacobson.
\newblock Space-efficient static trees and graphs.
\newblock In {\em 30th Annual Symposium on Foundations of Computer Science},
  pages 549--554. IEEE, 1989.

\bibitem[Lar12a]{Larsen12a}
Kasper~Green Larsen.
\newblock The cell probe complexity of dynamic range counting.
\newblock In {\em Proceedings of the 44th Symposium on Theory of Computing
  Conference, {STOC} 2012, New York, NY, USA, May 19 - 22, 2012}, pages 85--94,
  2012.

\bibitem[Lar12b]{Larsen12b}
Kasper~Green Larsen.
\newblock Higher cell probe lower bounds for evaluating polynomials.
\newblock In {\em 53rd Annual {IEEE} Symposium on Foundations of Computer
  Science, {FOCS} 2012, New Brunswick, NJ, USA, October 20-23, 2012}, pages
  293--301, 2012.

\bibitem[LC08]{LC08}
Hsiao{-}Fei Liu and Kun{-}Mao Chao.
\newblock Algorithms for finding the weight-constrained k longest paths in a
  tree and the length-constrained k maximum-sum segments of a sequence.
\newblock {\em Theor. Comput. Sci.}, 407(1-3):349--358, 2008.

\bibitem[LNN15]{LNN15}
Kasper~Green Larsen, Jelani Nelson, and Huy~L. Nguy{\^{e}}n.
\newblock Time lower bounds for nonadaptive turnstile streaming algorithms.
\newblock In {\em Proceedings of the Forty-Seventh Annual {ACM} on Symposium on
  Theory of Computing, {STOC} 2015, Portland, OR, USA, June 14-17, 2015}, pages
  803--812, 2015.

\bibitem[LWY18]{LWY18}
Kasper~Green Larsen, Omri Weinstein, and Huacheng Yu.
\newblock Crossing the logarithmic barrier for dynamic boolean data structure
  lower bounds.
\newblock In {\em Proceedings of the 50th Annual {ACM} {SIGACT} Symposium on
  Theory of Computing, {STOC} 2018, Los Angeles, CA, USA, June 25-29, 2018},
  pages 978--989, 2018.

\bibitem[MNSW98]{miltersen1998data}
Peter~Bro Miltersen, Noam Nisan, Shmuel Safra, and Avi Wigderson.
\newblock On data structures and asymmetric communication complexity.
\newblock {\em Journal of Computer and System Sciences}, 57(1):37--49, 1998.

\bibitem[MR01]{munro2001succinct}
J~Ian Munro and Venkatesh Raman.
\newblock Succinct representation of balanced parentheses and static trees.
\newblock {\em SIAM Journal on Computing}, 31(3):762--776, 2001.

\bibitem[Mut02]{Mut02}
S.~Muthukrishnan.
\newblock Efficient algorithms for document retrieval problems.
\newblock In {\em Proceedings of the Thirteenth Annual {ACM-SIAM} Symposium on
  Discrete Algorithms, January 6-8, 2002, San Francisco, CA, {USA.}}, pages
  657--666, 2002.

\bibitem[NS14]{NS14}
Gonzalo Navarro and Kunihiko Sadakane.
\newblock Fully functional static and dynamic succinct trees.
\newblock {\em {ACM} Trans. Algorithms}, 10(3):16:1--16:39, 2014.

\bibitem[P{\v{a}}t08a]{patracscu2008lower}
Mihai P{\v{a}}tra{\c{s}}cu.
\newblock {\em Lower bound techniques for data structures}.
\newblock PhD thesis, Massachusetts Institute of Technology, 2008.

\bibitem[P{\v{a}}t08b]{Pat08}
Mihai P{\v{a}}tra{\c{s}}cu.
\newblock Succincter.
\newblock In {\em 49th Annual {IEEE} Symposium on Foundations of Computer
  Science, {FOCS} 2008, October 25-28, 2008, Philadelphia, PA, {USA}}, pages
  305--313, 2008.

\bibitem[PT06a]{patrascu2006higher}
Mihai P{\v{a}}tra{\c{s}}cu and Mikkel Thorup.
\newblock Higher lower bounds for near-neighbor and further rich problems.
\newblock In {\em 2006 47th Annual IEEE Symposium on Foundations of Computer
  Science (FOCS'06)}, pages 646--654. IEEE, 2006.

\bibitem[PT06b]{patracscu2006time}
Mihai P{\v{a}}tra{\c{s}}cu and Mikkel Thorup.
\newblock Time-space trade-offs for predecessor search.
\newblock In {\em Proceedings of the thirty-eighth annual ACM symposium on
  Theory of computing}, pages 232--240. ACM, 2006.

\bibitem[PTW08]{panigrahy2008geometric}
Rina Panigrahy, Kunal Talwar, and Udi Wieder.
\newblock A geometric approach to lower bounds for approximate near-neighbor
  search and partial match.
\newblock In {\em 2008 49th Annual IEEE Symposium on Foundations of Computer
  Science}, pages 414--423. IEEE, 2008.

\bibitem[PTW10]{PTW10}
Rina Panigrahy, Kunal Talwar, and Udi Wieder.
\newblock Lower bounds on near neighbor search via metric expansion.
\newblock In {\em 51th Annual {IEEE} Symposium on Foundations of Computer
  Science, {FOCS} 2010, October 23-26, 2010, Las Vegas, Nevada, {USA}}, pages
  805--814, 2010.

\bibitem[PV10]{PV10}
Mihai P{\v{a}}tra{\c{s}}cu and Emanuele Viola.
\newblock Cell-probe lower bounds for succinct partial sums.
\newblock In {\em Proceedings of the Twenty-First Annual {ACM-SIAM} Symposium
  on Discrete Algorithms, {SODA} 2010, Austin, Texas, USA, January 17-19,
  2010}, pages 117--122, 2010.

\bibitem[Reg12]{regev2012proof}
Alon Regev.
\newblock A proof of catalan's convolution formula.
\newblock {\em Integers}, 12(5):929--934, 2012.

\bibitem[RV88]{RV88}
Vijaya Ramachandran and Uzi Vishkin.
\newblock Efficient parallel triconnectivity in logarithmic time.
\newblock In {\em {VLSI} Algorithms and Architectures, 3rd Aegean Workshop on
  Computing, {AWOC} 88, Corfu, Greece, June 28 - July 1, 1988, Proceedings},
  pages 33--42, 1988.

\bibitem[Sad07a]{Sada07b}
Kunihiko Sadakane.
\newblock Compressed suffix trees with full functionality.
\newblock {\em Theory Comput. Syst.}, 41(4):589--607, 2007.

\bibitem[Sad07b]{Sada07}
Kunihiko Sadakane.
\newblock Succinct data structures for flexible text retrieval systems.
\newblock {\em J. Discrete Algorithms}, 5(1):12--22, 2007.

\bibitem[Sax09]{Sax09}
Sanjeev Saxena.
\newblock Dominance made simple.
\newblock {\em Inf. Process. Lett.}, 109(9):419--421, 2009.

\bibitem[SK03]{SK03}
Tetsuo Shibuya and Igor Kurochkin.
\newblock Match chaining algorithms for cdna mapping.
\newblock In {\em Algorithms in Bioinformatics, Third International Workshop,
  {WABI} 2003, Budapest, Hungary, September 15-20, 2003, Proceedings}, pages
  462--475, 2003.

\bibitem[VM07]{VM07}
Niko V{\"{a}}lim{\"{a}}ki and Veli M{\"{a}}kinen.
\newblock Space-efficient algorithms for document retrieval.
\newblock In {\em Combinatorial Pattern Matching, 18th Annual Symposium, {CPM}
  2007, London, Canada, July 9-11, 2007, Proceedings}, pages 205--215, 2007.

\bibitem[Yao81]{yao1981should}
Andrew Chi-Chih Yao.
\newblock Should tables be sorted?
\newblock {\em Journal of the ACM (JACM)}, 28(3):615--628, 1981.

\bibitem[Yin16]{Yin16}
Yitong Yin.
\newblock Simple average-case lower bounds for approximate near-neighbor from
  isoperimetric inequalities.
\newblock In {\em 43rd International Colloquium on Automata, Languages, and
  Programming, {ICALP} 2016, July 11-15, 2016, Rome, Italy}, pages 84:1--84:13,
  2016.

\bibitem[Yu19]{Yu19}
Huacheng Yu.
\newblock Optimal succinct rank data structure via approximate nonnegative
  tensor decomposition.
\newblock In {\em Proceedings of the 51st Annual {ACM} {SIGACT} Symposium on
  Theory of Computing, {STOC} 2019, Phoenix, AZ, USA, June 23-26, 2019.}, pages
  955--966, 2019.

\end{thebibliography}
\bibliographystyle{alpha}

\appendix
\ifx\mainfile\undefined
\documentclass{article}

\allowdisplaybreaks

\begin{document}
\fi

\section{Proofs of the Facts}

%

\begin{proof}[Proof of Fact~\ref{fact_sum1}]
Observe that $\exp(-A/h)h^{-1.5}$ is increasing when $h\in[1,1.5A]$ and decreasing when $h>1.5A$.
Thus if $T\le 1.5A$,
\begin{align*}
    \sum_{h=1}^T\exp(-A/h)h^{-1.5}
    \le\exp(-A/T)\sum_{h=1}^TT^{-1.5}
    =\exp(-A/T)/\sqrt{T}.
\end{align*}
Otherwise,
\begin{align*}
    \sum_{h=1}^T\exp(-A/h)h^{-1.5}
    \le \sum_{h=1}^{1.5A}\exp(-A/h)h^{-1.5}+O(1/\sqrt{A})
    =O(1/\sqrt{A}).
\end{align*}
Then the desired inequality follows.
\end{proof}

\begin{proof}[Proof of Fact~\ref{fact_sum2}]
\begin{align*}
    \sum_{h=1}^T \exp(-A/(T-h))\cdot h^{-0.5}
    \le\exp(-A/T)\sum_{h=1}^T \cdot h^{-0.5}
    =O(\exp(-A/T)\int_1^T\frac{dx}{\sqrt{x}})
    =O(\exp(-A/T)\sqrt{T}).
\end{align*}
\end{proof}

\begin{proof}[Proof of Fact~\ref{fact_sum3}]
Without loss of generality, assume $A_1\le A_2$.
Let $h^*:= A_1T/(A_1+A_2)$ the point such that $A_1/h^*=A_2/(T-h)$.
Note that $h^*\le T/2$.
Thus
\begin{align*}
    &\sum_{h=1}^{T-1}\exp(-A_1/h-A_2/(T-h))h^{-1.5}(T-h)^{-1.5}\\
    \le& O\left(\sum_{h=1}^{h^*}\exp(-A_1/h)h^{-1.5}(T-h)^{-1.5}+\sum_{h=h^*}^{T}\exp(-A_2/(T-h))h^{-1.5}(T-h)^{-1.5}\right)
\end{align*}
The former term is upper bounded by
\begin{align*}
    T^{-1.5}\sum_{h=1}^{h^*}\exp(-A_1/h)h^{-1.5}
    \le T^{-1.5}\exp(-A_1/h^*)(1/\sqrt{h^*}+1/\sqrt{A_1}),
\end{align*}
due to the fact $h^*<T/2$ and Fact~\ref{fact_sum1}.
Note that
\begin{align*}
    \sum_{h=h^*}^{T/2}\exp(-A_2/(T-h))h^{-1.5}(T-h)^{-1.5}
    \le O(T^{-1.5}\exp(-2A_2/T)(1/\sqrt{h^*}-1/\sqrt{T}).\\
    T^{-1.5}\sum_{h=T/2}^T\exp(-A_2/h)(T-h)^{1.5}
    \le O(T^{-1.5}\exp(-A_2/T)(1/\sqrt{T}+1/\sqrt{A_2})).
\end{align*}
Thus the latter term is upper bounded by
\[
    T^{-1.5}\exp(-A_2/T)(1/\sqrt{h^*}+1/\sqrt{A_2}+1/\sqrt{T}).
\]
By the symmetry of the expression,
\begin{align*}
    &\sum_{h=1}^{T-1}\exp(-A_1/h-A_2/(T-h))h^{-1.5}(T-h)^{-1.5}\\
    \le& O\left(T^{-1.5}\exp(-(A_1+A_2)/2T)(\sqrt{\frac{A_1+A_2}{A_1T}}+\sqrt{\frac{A_1+A_2}{A_2T}}+1/\sqrt{A_1}+1/\sqrt{A_2}+1/\sqrt{T})\right)\\
    =& O\left(T^{-1.5}\exp(-(A_1+A_2)/2T)(1+\sqrt{\frac{A_1+A_2}{T}}+\sqrt{\frac{A_1A_2}{T(A_1+A_2)}})(1/\sqrt{A_1}+1/\sqrt{A_2})\right)\\
\end{align*}
\end{proof}

\begin{proof}[Proof of Fact~\ref{fact_conv}]
The following inequality is an alternative form of the desired inequality:
\[
    8(\frac{A_2}{B_2}-\frac{A_1}{B_1})^2\ge\frac{A_2^2}{B_2}(\frac{1}{B_1}+\frac{1}{B_2}).
\]
Recall that $A_2>2A_1$ and $B_1\ge B_2$, so the L.H.S. is at least $2(A_2/B_2)^2$ and the R.H.S. is at most $2(A_2/B_2)^2$.
\end{proof}


\ifx\mainfile\undefined
\end{document}
\fi

\ifx\mainfile\undefined
\documentclass{article}

\begin{document}
\fi

%



\section{A Constant Redundancy Algorithm}\label{app_upper_one}
In this section, we aim to propose an algorithm which solves RMQ using space of $\log_2C_n+1$ bits and answer any query in $O(\log n)$ times.
The upper bounds from \cite{NS14} imply exactly identical result, but our approach is totally different from \cite{NS14} which uses the algorithm from \cite{FH11} as a black-box.
Our approach is much simpler, clearer and easier to be implemented.

Our algorithm is some kind of augmented binary (search) tree called by \mihai\cite{patracscu2008lower,Pat08} or segment tree called by Chinese competitive programming participants.
Note that the structure is different with the one used to store segments invented by Jon Louis Bentley\cite{bentley1977solutions}.
\begin{itemize}
  \item 
	The data structure is a binary tree which represents some properties of an array $A[1\dots n]$.
	The value of $i$-th leaf of the in-order tree traversal is a function of $A[i]$.
  \item 
	The tree is a recursive structure.
	Every internal node equips a tiny structure to aid query algorithm.
	Let the leaves contained in a subtree $v$ be the $i$-th to $j$-th leaves of the in-order tree traversal.
	Then the root node of subtree $v$ represents the subarray $A[i\dots j]$.
	The tree is a complete binary tree, thus there are roughly $\sum_{i=0}^{\log_2n} n/2^i\le 2n$ nodes.
  \item
	Usually, given any query on subarray $A[i\dots j]$, 
	the query will be broken into at most $O(\log n)$ queries on smaller subarraies represented by the nodes of the tree.
	But here in our algorithm, the query algorithm will return the answer upon the query interval is broken, therefore we will access at most $O(\log n)$ nodes in the tree.
\end{itemize}

\paragraph{Query Algorithm.}
Given a query $[a,b]$, starting from the root of the tree, we can always recurse to one of the children if $[a,b]$ is completely covered by the interval represented by the child.
The problem is how do we find the position of the minimum value in the range $[a,b]$ if $[a,b]$ overlaps with both of the two children intervals.
Given a node $v$, let $[L_v,R_v]$ be the interval represented by $v$, let $S_{l,v},S_{r,v}\in[n]^*$ be the sequences of positions of the greedy longest decreasing subsequences starting from $L_v$ 
in $A[L_v,R_v]$ and starting from $R_v$ in $A[R_v,L_v]$.
Formally, $S_{l,v,1}\triangleq L_v, S_{l,v,i}\triangleq\min\{j\in[S_{l,v,i-1},R_V]:A[j]<A[S_{l,v,i-1}]\}$ and $S_{r,v,1}\triangleq R_V, S_{r,v,i}\triangleq\max\{j\in[L_v,S_{r,v,i-1}]:A[j]<A[S_{r,v,i-1}]\}$.
Suppose our query $[a,b]$ is covered by a pair of sibling nodes $p,q$ with paramenters $a\le R_p, b\ge L_q$.
Our idea is that the answer for query $[a,b]$ obviously is $\argmin_{i\in\{S_{r,p,a'},S_{l,q,b'}\}}A[i]$ if we have $a\in(S_{p,r,a'+1},S_{p,r,a'}]$ and $b\in[S_{q,l,b'},S_{q,l,b'+1})$.
The remaining problems for our query algorithm are 
\begin{enumerate}
  \item how to find $a'$ and $b'$?
  \item how to compare $A[S_{r,p,a'}]$ and $A[S_{l,q,b'}]$?
\end{enumerate}
To solve the first problem, we store two values $|S_{l,v}|$ and $|S_{r,v}|$ in any node $v$.
\begin{claim}
  The following algorithm find $a',b'$  in $O(\log n)$ times.
\end{claim}
\begin{algorithm}[H]\label{relocate algorithm}
  \SetKwProg{Fn}{Function}{ is}{end}

  \caption{algorithm to relocate the range}
  \SetKwFunction{FindA}{FindA}
  \SetKwFunction{FindB}{FindB}
  \SetKwFunction{lchild}{LChild}
  \SetKwFunction{rchild}{RChild}
  \SetKwInOut{Input}{input}\SetKwInOut{Output}{output}
  \SetKwData{lson}{LChild}
  \SetKwData{rson}{RChild}
  \SetKwData{offset}{offset}
  \SetKwData{offseta}{offsetA}
  \SetKwData{offsetb}{offsetB}

  \Input{query $[a,b]$; a pair of sibling nodes $p,q$ with paramenters $a\le R_p$ and $b\ge L_q$.}
  \Output{a tuple $(a',b')$ such that $a'=\min\{i:S_{p,r,i}\ge a\}$ and $b'=\min\{i:S_{q,l,i}\le b\}$.}
%
  \Return{(\FindA{$p$},\FindB{$q$})}

  \Fn{\FindA{current node: $v$}}
  {
	\lIf(\tcp*[f]{It is a leaf node}){$L_v=R_v$} { \Return 1 }
	$\lson\gets\lchild{v}$\;
	$\rson\gets\rchild{v}$\;
	\lIf(\tcp*[f]{In the right subtree}){$a\ge L_{\rson}$} { \Return \FindA{\rson} }
	\tcp{Otherwise $a$ is in the left subtree}
	$\offset\gets$\FindA{\lson}\;
	\Return $\max\{0,\offset-(|S_{r,\lson}|+|S_{r,\rson}|-|S_{r,v}|)\}+|S_{r,\rson}|$\;
  }
  \Fn{\FindB{current node: $v$}}
  {
	\lIf(\tcp*[f]{It is a leaf node}){$L_v=R_v$} { \Return 1 }
	$\lson\gets\lchild{v}$\;
	$\rson\gets\rchild{v}$\;
	\lIf(\tcp*[f]{In the left subtree}){$a\le R_{\lson}$} { \Return \FindB{\lson} }
	\tcp{Otherwise $a$ is in the rigt subtree}
	$\offset\gets$\FindB{\rson}\;
	\Return $\max\{0,\offset-(|S_{l,\lson}|+|S_{l,\rson}|-|S_{l,v}|)\}+|S_{l,\lson}|$\;
  }
\end{algorithm}
Consider the process when we merge the two array $A[S_{r,p}]$ and $A[S_{l,q}]$ into a sorted decreasing array $A'$.
We define a boolean string $\mathrm{Merge}_{pq}\in \HammingCube{|S_{r,p}|+|S_{r,q}|}$ as the witness of the process: $\mathrm{Merge}_{pq,i}=0\iff A'[i]$ is from $A[S_{r,p}]$, $\mathrm{Merge}_{pq,i}=1\iff A'[i]$ is from $A[S_{l,q}]$.
It is easy to see that, to compare $A[S_{r,p,a'}]$ and $A[S_{l,q,b'}]$, it is sufficient to compare the index of $a'$-th $0$ and the index of $b'$-th $1$.
To this end, we maintain an additional data structure in every internal node such that any \emph{select} query can be answered in time $O(\log|\mathrm{Merge}_{pq}|)$ for any internal node with children nodes $p,q$.
The additional data structure is a fundamental application of augmented binary (tree) or segment tree\cite{patracscu2008lower}.
Let $\mathrm{Merge}_{pq,(i)}$ denote the position of $i$-th smallest one in $\mathrm{Merge}_{pq}$.
We just construct a string of numbers $T\in [u]^n$ from $\mathrm{Merge}_{pq}$ in the following way: $T_1\triangleq \mathrm{Merge}_{pq,(1)},\forall i>1, T_i\triangleq \mathrm{Merge}_{pq,(i)}-\mathrm{Merge}_{pq,(i-1)}$.
Then the \emph{select}$(i)$ on $\mathrm{Merge}_{pq}$ is equal to \emph{rank}$(i)$ on $T$.
It is easy to see that 
\begin{align}
  A[S_{r,p,a'}]<A[S_{l,q,b'}]\iff\mathit{select}(b',\mathrm{Merge}_{pq})=\mathit{rank}(b',T)\ge a'+b'.
  \label{comparing by selecting}
\end{align}

\paragraph{Table Construction.}
The remaining part is construct a data structure meets everything the query algorithm needs with at most $\log C_n +1$ bits, where $C_n=\binom{2n}{n}/(n+1)\sim\frac{2^{2n}}{n^{3/2}\sqrt{\pi}}$ is the $n$-th Catalan number.
Note that the number of possible databases of length $n$ for RMQ exactly is the number of binary trees of size $n$, i.e. the $n$-th Catalan number.
To this end, we represent our data structure in spill-over representation by applying \mihai's technique\cite{Pat08}.
In spill-over representation, an element $x\in \mathcal{X}$ is represented as a tuple $(y_m,y_k)\in\HammingCube{m}\times[k]$ for some integers $m,k$, so the number of bits used to store $x$ is considered as $m+\log_2 k$.
We want to apply \mihai's lemma in a black-box way.
\begin{lemma}[Lemma 5 from \cite{Pat08}]
  \label{compression lemma}
  Assume we have to represent a variable $x\in\mathcal{X}$, and a pair $(y_M,y_K)\in\HammingCube{M(x)}\times\{0,\dots,K(x)-1\}$.
  Let $p:\mathcal{X}\to\mathbb{R}$ be a probability density funciton on $\mathcal{X}$, and $K,M:\mathcal{X}\to\mathbb{N}$ be non-negative functions on $\mathcal{X}$ satisfying:
  \begin{align}
	\forall x\in\mathcal{X}: \log_2\frac{1}{p(x)}+M(x)+\log_2K(x)\le H
	\label{spillover space upper bound}
  \end{align}
  We can design a spill-over representation of $x,y_M$ and $y_K$ with the following parameters:
  \begin{itemize}
	\item the spill universe is $K_\star$ with $K_\star\le 2r$, and the memory usage is $M_\star$ bits;
	\item the redundancy is at most $4/r$ bits, i.e. $M_\star+\log_2K_\star\le H+4/r$;
	\item if the word size is $w=\Omega(\log|\mathcal{X}|+\log r+\log \max_x K(x))$, $x$ and $y_K$ can be decoded with $O(1)$ word probes.
	  The input bits $y_M$ can be read directly from memory, but only after $y_K$ is retrieved;
	\item given a precomputed table of $O(|\mathcal{X}|)$ words that only depends on the input functions $K,M$ and $p$, decoding $x$ and $y_K$ takes constant time on the world of RAM.
  \end{itemize}
\end{lemma}
However, the issue is that our tiny structures equiped by internal nodes can not be considered as a part of varable $x$ in Lemma (\ref{compression lemma}), since the cardinality of the universe of the structure can be as large as $\exp(\Omega(n))$, which can not be read in $O(1)$ times by a cell-probe shceme with word size $w=O(\log n)$.
To fix this issue, we treat the tiny structure as the third child of the internal nodes.
Assume all the tiny structures are prepared into a spill-over representation in following way:
\begin{claim}\label{select algorithm}
  We can design a spill-over representation of any $S\in\binom{[u]}{n}$ with the following paramenters:
  \begin{itemize}
	\item the spill-over universe is $K_S\le 2r$, and the memory usage is $M_S$ bits;
	\item the redundancy is at most $4/r$  bits, i.e. $M_S+\log_2 K_S\le \log_2\binom{u}{n}+4/r$;
	\item if the word size is $w=\Omega(\log u+\log r)$, for any $i\in[n]$, the query \emph{select}$(i)$ can be answered in time $O(\log n)$.
  \end{itemize}
\end{claim}
The algorithmm is a simple application of \mihai's algorithm \cite{Pat08}, we omit the proof here.
Now we are ready to show the constrution of our table.

Let $\varphi$ be some label assigned to some node, and $\varphi_v$ a label assigned to node $v$.
For any leaf $v$, the universe of $\varphi_v$ will be identical, and the cardinality of the universe is $1$.
Hence there will be nothing to be stored in a leaf.
For a internal node $v$, $\varphi_v\triangleq(|S_{l,v}|,|S_{r,v}|)$ is a tuple.
Let $\mathcal{N}(n,\varphi)$ denote the number of possible instances of cartesian trees of $A[1\dots n]$ conditioning on the root of our data structure is labeled with $\varphi$.
Note that we can write the following recursion of $\mathcal{N}(a+,b,\varphi)$:
\[
  \mathcal{N}(a+b,\varphi)=\sum_{\varphi',\varphi'':\mathcal{A}(\varphi',\varphi'')\ni\varphi}\mathcal{N}(a,\varphi')\cdot\mathcal{N}(b,\varphi'')\cdot\mathcal{M}(\varphi,\varphi',\varphi''),
\]
where $\mathcal{A}(\varphi',\varphi'')$ is the set of labels can be merged into from children with labels $\varphi'$ and $\varphi''$, $\mathcal{M}(\varphi,\varphi',\varphi'')$ is the number of possible merge witnesses given the event that children is labled with $\varphi',\varphi''$ and parent node is labeled with $\varphi$.
In particular,
\[
  \mathcal{A}(\varphi',\varphi'')\triangleq\{(l,r): (\varphi'_l=l\land \varphi'_r+\varphi''_r\ge r\ge \varphi''_r+1)\lor(\varphi''_r=r\land\varphi'_l+\varphi''_l\ge l\ge \varphi'_r+1)\},
\]
where $\varphi_l,\varphi_r$ are the first and second elements of $\varphi$ respectively.
Observe that the two substrees and the merge witness are mutual independent if the three labels $\varphi,\varphi',\varphi''$ are fixed.
Also observe that $\mathcal{M}(\varphi,\varphi',\varphi'')$ is a binomial coefficent but a little bit complicated:
\begin{align}
  \mathcal{M}(\varphi,\varphi',\varphi'')\triangleq
  \begin{cases}
	\binom{\varphi'_r-(\varphi_r-\varphi''_r)+\varphi''_l}{\varphi''_l}	&	\text{(if $\varphi_l=\varphi'_l$)}\\
	\binom{\varphi''_l-(\varphi_l-\varphi'_l)+\varphi'_r}{\varphi'_r}	&	\text{(if $\varphi_r=\varphi''_r$)}
  \end{cases}
  \label{def-M}
\end{align}

Assume all the tiny structures are prepared into a spill-over representation according to Claim (\ref{select algorithm}) with the binomial coefficent in Eq(\ref{def-M}).
Let $K(n,\varphi), M(n,\varphi)$ be the spill universe and the memory bits used by our spill-over representation for any input array of length $n$ and root label $\varphi$.
Let $r$ to be determined.
We guarantee inductively that:
\begin{align}
  K(n,\varphi)&\le 2r;\label{spill-over k upper bound}
  \\
  M(n,\varphi)+\log_2K(n,\varphi)&\le \log_2\mathcal{N}(n,\varphi)+8\cdot\frac{n-1}{r}.\label{spill-over mk upper bound}
\end{align}
For a leaf, there are nothing to be stored.
So $K(1,*)=1, M(1,*)=0$.
For a internal node $v$, we assume the array of length $n$ is broken into two subarrays of length $a$ and $b$ without loss of generality.
Let $\varphi',\varphi''$ be labels of the children of $v$.
We recursively construct data structures for both of the two subtrees using space $(M(a,\varphi'),K(a,\varphi'))$ and $(M(b,\varphi''),K(b,\varphi''))$ respectively.
We also construct our tiny structure for $v$ using space $K(\varphi,\varphi',\varphi'')\le 2r$ and $M(\varphi,\varphi',\varphi'')+\log_2 K(\varphi,\varphi',\varphi'')\le \log_2\mathcal{M}(\varphi,\varphi',\varphi'')+4/r$.
Then we directly concatenate the three blocks of memory bits into a bit vector $M'=M(a,\varphi')+M(b,\varphi'')+M(\varphi,\varphi',\varphi'')$, and combine the spills into a superpill over the univser $K'= K(a,\varphi')\times K(b,\varphi'')\times K(\varphi,\varphi',\varphi'')$.
Since $\log_2 K'\le O(\log r)$ by our induction hypothesis, the superspill can be stored in constant number of cells if the word size $w = \Omega(\log r)$.
Given $a,b,\varphi,\varphi',\varphi''$ the query algorithm can easily locate the blocks of memory bits of the left child, the right child, and the tiny structure.
According to our induction hypothesis, we have
\[
  M'+\log_2 K'\le \log_2(\mathcal{N}(a,\varphi')\cdot\mathcal{N}(b,\varphi'')\cdot\mathcal{M}(\varphi,\varphi',\varphi''))+8\cdot\frac{n-2}{r}+4/r.
\]
Let $p(\cdot)$ be the distribution of $\varphi',\varphi''$ given $\varphi$, we have
\[
  p(\varphi',\varphi'')\triangleq \frac{\mathcal{N}(a,\varphi')\cdot\mathcal{N}(b,\varphi'')\cdot\mathcal{M}(\varphi,\varphi',\varphi'')}{\mathcal{N}(n,\varphi)}.
\]
We insert $p(\cdot)$ into our space upper bound, result in
\[
  \log_2\frac{1}{p(\varphi',\varphi'')}+M'+\log_2 K'\le \log_2\mathcal{N}(n,\varphi)+8\cdot\frac{n-2}{r}+4/r.
\]
By applying Lemma (\ref{compression lemma}) with the inequality above, we obtain a spill-over representation of subtree $v$ will spill universe $K_\star\le 2r$, and $M_\star$ memory bits, satisfying
\[
  M_\star+\log_2K_\star\le\log_2\mathcal{N}(n,\varphi)+8\cdot\frac{n-2}{r}+\frac{8}{r}=\log_2\mathcal{N}(n,\varphi)+8\cdot\frac{n-1}{r}.
\]
Note that at each step, we pack the labels of child nodes into a spill-over representation.
The remaining problem is how can the query algorithm know the label of root node.
To finish our constrution, we appy Lemma (\ref{compression lemma}) with $x=(\varphi,\varphi',\varphi'')$ at the root:
\[
  \log_2\frac{1}{p(\varphi,\varphi',\varphi'')}+M'+\log_2K'\le\log_2C_n+8\cdot\frac{n-1}{r}.
\]
Finally, we set $r\triangleq 8n$.

\paragraph{Final Query Algorithm.}
We state our final query algorithm in Algorithm \ref{algorithm: query}.

\begin{algorithm}
  \SetKwProg{Fn}{Function}{ is}{end}

  \caption{Final query algorithm to answer RMQ}
  \label{algorithm: query}
  \SetKwFunction{finda}{FindA'}
  \SetKwFunction{findb}{FindB'}
  \SetKwInOut{Input}{input}\SetKwInOut{Output}{output}
  \SetKwData{l}{l}
  \SetKwData{r}{r}
  \SetKwData{mid}{mid}
  \SetKwData{v}{CurrentNode}
  \SetKwData{lson}{LeftChild}
  \SetKwData{rson}{RightChild}
  \SetKwData{merge}{Merge}

  \Input{query $[a,b]$; the root \v.}
  \Output{index $i\in[n]$, $A[i]$ is the RMQ in range $[a,b]$.}
  \Begin
  {
	\If{$a=b$}{\Return $a$\;}
	\l$\gets1$,\r$\gets n$\;
	\mid$\gets\lceil(\l+\r)/2\rceil$\;
	Unpack child nodes \lson,\rson,\merge and $(\varphi_\text{\v},\varphi_\text{\lson},\varphi_\text{\rson})$ from the spill-over representation\;
	\While{$b\le\mid\lor a>\mid$}
	{
	  \eIf{$b\le\mid$}
	  {
		\r$\gets$\mid\;
		\v$\gets$\lson\;
	  }
	  {
		\l$\gets\mid+1$\;
		\v$\gets$\rson\;
	  }
	  Unpack child nodes \lson,\rson,\merge and $(\varphi_\text{\lson},\varphi_\text{\rson})$ from the spill-over representation\;
	}
	Find $(a',b')$, the lengths of the decreasing subsequence starting from $A[\mid]$ in subarray $A[\mid\dots a]$ and starting from $A[\mid+1]$ in sub array $A[\mid+1\dots b]$, respectively, with Algorithm \ref{relocate algorithm}\;
	Reduce the comparing $A[S_{r,\lson,a'}]$ and $A[S_{l,\rson,b'}]$ to \emph{select} problem with Eq(\ref{comparing by selecting})\;
	Answer the \emph{select} problem by \merge with the algorithm guaranteed by Claim \ref{select algorithm} in the universe calculated with Eq(\ref{def-M})\;
	\eIf{$A[S_{r,\lson,a'}]<A[S_{l,\rson,b'}]$}
	{
	  \Return \finda{\lson,$a'$}\;
	}
	{
	  \Return \findb{\rson,$b'$}\;
	}
  }
  \Fn{\finda{current node: $v$; target rank $a'$}}
  {
	\lIf{$L_v=R_v$}{\Return $L_v$}
	$\lson\gets\lchild{v}$\;
	$\rson\gets\rchild{v}$\;
	\lIf{$a'\le|S_{r,\rson}|$}{\Return \finda{\rson,$a'$}}
	\Return \finda{\lson,$a'-|S_{r,\rson}|+(|S_{r,\lson}|+|S_{r,\rson}|-|S_{r,v}|)$}\;
  }
  \Fn{\findb{current node: $v$; target rank $b'$}}
  {
	\lIf{$L_v=R_v$}{\Return $L_v$}
	$\lson\gets\lchild{v}$\;
	$\rson\gets\rchild{v}$\;
	\lIf{$b'\le|S_{l,\lson}|$}{\Return \finda{\lson,$b'$}}
	\Return \finda{\rson,$b'-|S_{l,\lson}|+(|S_{l,\lson}|+|S_{l,\rson}|-|S_{l,v}|)$}\;
  }
\end{algorithm}

\section{A Generalized Time-Redundancy Trade-Off}\label{app_trade_off}
The algorithm is a generalized version of the algorithm proposed by Fischer and Heun \cite{FH11}.
An exactly identical upper bounds was shown in \cite{NS14}.
The essential difference beween our algorithm and Navarro-Sadakane algorithm is the solution for \emph{findopen}:
our solution is much more clear and simple.
\begin{corollary}[generalized from Corollary 14 of \cite{FH11}]
Given the $\mathrm{DFUDS}$ of $\mathcal{M}_A$, $\mathrm{RMQ}_A(i,j)$ can be answered in $O(t)$ time with by the following sequence of operation $(1\le i\le j\le n)$.
\begin{enumerate}
  \item $x\gets\mathrm{select}_)(\mathrm{DFUDS},i+1)$
  \item $y\gets\mathrm{select}_)(\mathrm{DFUDS},j)$
  \item $w\gets\pm1\mathrm{RMQ}_E(x,y)$
  \item if $\mathrm{rank}_)(\mathrm{DFUDS},\mathrm{findopen}(\mathrm{DFUDS},w))=i$ then return $i$
  \item else return $\mathrm{rank}_)(\mathrm{DFUDS},w)$
\end{enumerate}
\end{corollary}
Our algorithm follows \mihai's schema \cite{Pat08}.
We break the parenthesis array of length $2n$ into $2n/r$ blocks of length $r$.
We choose $B\triangleq\frac{\log n}{\log r}$ and $t\triangleq\frac{\log r}{\log B}$.
For each block, we construct a segment tree with branching factor of $B$.
For any node with respect to range $[a,\dots,b]$ in the segment tree, we maintain
\begin{enumerate}
  \item the number of $)$'s in sub-array $\mathrm{DFUDS}[a',\dots,b']$, which is at most $r$;
  \item the minimum value in sub-array $E[a',\dots,b']$, which is in $[E[a]-r,E[a]+r]$.
\end{enumerate}
for all the $B$ sub-ranges $[a',\dots,b']$.
Note that we can encode them with at most $O(B\log r)=O(\log n)$ bits.

To answer select, rank and $\pm1$RMQ, we maintain three extra data structures $A,B,C$ for prefix sum array $N\in[n]^{n/r}$ and block minimum value array $M\in[n]^{n/r}$ : for any $i$, $N[i]$ is the total number of $)$'s in blocks $1,\dots,i$ of DFUDS; $M[i]$ is the minimum value in $i$-th block of array $E$.
\begin{enumerate}
  \item[$\mathrm{select}_)(i)$:]
	We do a predecessor search $i$ on $N$ with data structure $A$, and find the block $x$ which contains $i$-th $($. 
	To do the predecessor search, we adopt a variety of the algorithm from \cite{patracscu2006time}: in the leaf of the Van Emde Boas tree, recall that the leaf node denotes $\max\{v\in N:v< i\}$, we write down the index of the block $x$ such that $x=\max\{j\in[n/r]:N[j]< i\}$.
	Hence the block contains $i$-th $($ must be $x+1$, we then finish the query by querying on the segment tree with respect to block $x+1$.
  \item[$\mathrm{rank}_)(i)$:]
	The data structure $B$ is a copy of array $N$.
	To answer the prefix sum query $i$, we find the block $x$ which contains $i$, return $N[x-1]$ plus the answer of a rank query on the segment tree with respect to block $x$.
  \item[$\pm1$RMQ$(x,y)$:]
	We break the range $[x,y]$ into two in-block ranges and one out-block range.
	The two in-block RMQ can be easily answered by querying on at most two segment trees with respect to the two blocks which contains $x$ and $y$ respectively.
	Our data structure $C$ uses the algorithm from \cite{Sada07} as a black-box, which is a linear space data structure which can solve the out-block RMQ on array $M$ in constant time. 
	Finally, we return the minimum value among the three answers.
\end{enumerate}
For now, we can answer three of the four kinds of queries in $O(t)$ times with redundancy $n/(\frac{\log n}{t})^{O(t)}$.
To finish our proof, we take care of the last query \emph{findopen}$(w)$ now.

Recall that $E[i]\triangleq\mathrm{rank}_((i)-\mathrm{rank}_)(i)$.
For a closing parenthesis with index $i$, the index of its open parenthesis must be $\max\{j:E[j-1]=E[i]\}$.
Note that $\forall i, |E[i+1]-E[i]|=1$.
Which implies the index of the open parenthesis is the predecessor of $i$ in set $\{j\in[n]:E[j]\le E[i]\}$ plus one.
Thus the open parenthesis is in the same block as long as $\min E[\dots,i-1]$ ( i.e. the minimum value of the prefix of the sub-array) is not larger than $E[i]$ and we can find its index in $2\frac{\log r}{\log B}=2t$ time, since the minimum values of any sub-range are stored in the nodes of the segment tree.

To locate the block which contains the open parenthesis, we adopt the idea from Lemma 1 of \cite{munro2001succinct}.
A closing parenthesis is called \emph{far} if its matching parenthesis is located in a different block.
A far parenthesis is call a \emph{pioneer} if its matching parenthesis is located in a different block than its immediately next far parenthesis.
Obviously, if a far closing parenthesis is not a pioneer, its matching parenthesis and the matching parenthesis of the immediately next pioneer of the closing parenthesis must be located in the same block. 
Given a closing parenthesis, we check whether it is a far parenthesis by looking for its matching parenthesis in at most two blocks with $O(t)$ time.
We do a predecessor search to check whether a closing parenthesis is pioneer and find the immediately next pioneer of the closing parenthesis.
Jacobson\cite{jacobson1989space} noted that there are at most $4n/r-3$ pioneers if there are $2n/r$ blocks.
We have at most $O(n/r)$ values from a universe of size $2n$, so the second branch of \cite{patracscu2006time} can support query time $O(t)$ using space $(n/r)\cdot r^{\Omega(1/t)}\le n/B^{t-1}$ words.
To locate the block the matching parenthesis located in, we store the number of block in the leaf of the van Emde Boas tree.

To summarize, we support all the four kinds of queries with a redundancy of $n/B^{O(t)}=n/(\frac{\log n}{t})^{O(t)}$ bits and a time complexity of $O(t)$.


\end{document}